\newcommand{\Fq}{\mathbb{F}_{q}}
\newcommand{\Fp}{\mathbb{F}_{p}}
\newcommand{\Fpn}{\mathbb{F}_{p^n}}
\newcommand{\Fpm}{\mathbb{F}_{p^m}}
\newcommand{\sums}[1]{\sum_{\substack{#1}}}
\DeclareMathOperator*{\Tr}{Tr}
\newtheorem{theorem}{Theorem}
\newtheorem{proposition}{Proposition}
\newtheorem{lemma}{Lemma}
\newtheorem{definition}{Definition}
\newtheorem{remark}{Remark}
\begin{document}

\title{An updated review on crosscorrelation of m-sequences\footnote{This is an invited chapter to \textit{Series on Coding Theory and Cryptology, World Scientific}}}

\author{Tor Helleseth and Chunlei Li \\
\small Department of Informatics, \\
\small University of Bergen, Norway\\
\small  Email: tor.helleseth@uib.no, \, chunlei.li@uib.no 
}
\date{}
\maketitle

\begin{abstract}
Maximum-length sequences (m-sequences for short) over finite fields are generated by linear feedback shift registers with primitive characteristic polynomials.
These sequences have nice mathematical structures and good randomness properties that are favorable in practical applications. During the past five decades, the crosscorrelation
between m-sequences of the same period has been intensively studied, and a particular research focus has been on investigating the crosscorrelation spectra with few possibles values. 
In this chapter we summarize all known results on this topic in the literature and promote several open problems for future research.
\end{abstract}



\section{Introduction}

``\textit{An octillion. A billion billion billion. That's a fairly conservative estimate of the number of times a cellphone or other device somewhere in the world has generated a bit using a maximum-length linear-feedback shift register sequence. It's probably the single most-used mathematical algorithm idea in history}." Wolfram wrote this in the beginning of his article \cite{Wolfram2016} dedicated in memory of Solomon Golomb, who is the major originator of shift register sequences.
Wolfram's description reflects Golomb's earlier statement in his own book ``Shift Register Sequences" \cite{Gold1967} that ``the theory of shift register sequences has found major applications
in a wide variety of technological situations, including secure, reliable
and efficient communications, digital ranging and tracking systems,
deterministic simulation of random processes, and computer sequencing
and timing schemes." In communication systems, maximum-length linear-feedback sequences (m-sequences for short) have been mainly used to distinguish multiplexed signals at the receiver based on their nice correlation property.
One notable example is the \textit{Gold sequence} (also known as the \textit{Gold code}), which is a family of  binary sequences derived from m-sequences.
Gold sequences have bounded small cross-correlation, which is useful when multiple devices are broadcasting in the same frequency range, such as in code-division multiple access	(CDMA) systems and Global Positioning System (GPS) \cite{Helleseth-Kumar}.
The origins of wide applications of Gold sequences  go back to the pioneering work of Gold in \cite{Gold1968}.
The ideal autocorrelation property of m-sequences is easy to deduce and had been well known in the community earlier.
In 1968 Gold initiated\footnote{Gold pointed out that in the same time period Kasami had obtained similar results in the context of coding theory \cite{Kasami1966}}  the study of crosscorrelation of an m-sequence and its decimated sequence of the same period  and completely determined
each value of the crosscorrelation function and its occurrence. 
During more than 50 years, the crosscorrelation of m-sequences has been extensively studied \cite{Helleseth-Kumar,Helleseth2014a}, and researches on this topic 
have given rise to many interesting results and also revealed its close connections to important objects in other areas, including almost bent functions and almost perfect nonlinear functions in cryptography \cite{carlet2020}, cyclic codes in coding theory \cite{Dobbertin-Felke-Helleseth-Rosendahl}  and finite geometry \cite{Games-Sequences,Games-Quadrics}. There have been some overviews on the advances of crosscorrelation of m-sequences and relevant topics, e.g., \cite{Helleseth2002}, \cite{Helleseth2014a}, \cite{Katz2018}. 
In this chapter we will focus on few-valued crosscorrelation between m-sequences of the same period and provide an updated overview on the topic with new techniques and results in recent years.

The remainder of the chapter is organized as follows: Section 2 starts with the generation and properties of m-sequences, then briefly recalls the properties of crosscorrelation of m-sequences and their connections to the Walsh spectrum of cryptographic functions and the weight distribution of cyclic codes; Sections 3-6 review known decimations that lead to 3-valued, 4-valued, 5-valued and 6-valued crosscorrelations, respectively, where the main ideas and techniques are described and some open problems are given.
Section 7 concludes the chapter.

\section{Crosscorrelation of m-sequences}\label{ra_sec1}

Assume $p$ is a prime and let $\Fp$ denote the finite field with $p$ elements. 
Given an initial state $(s_0, s_1, \ldots, s_{n-1})$, the linear recursion of degree $n$ over $\Fp$ defined by
\begin{equation*}\label{key}
s_{t+n} + c_{n-1}s_{t+n-1} + \cdots + c_1s_{t+1} + c_0s_t =0, \quad c_0\neq 0,
\end{equation*} generates a periodic sequence $\{s_t\}$. The characteristic polynomial associated with the linear recursion is given by
$$
c(x)=x^n + c_{n-1}x^{n-1} + \cdots + c_1x + c_0.
$$
For a linear recursion of degree $n$ over $\Fp$, its output sequences  have maximal period $p^n-1$ since 
they are determined by the initial states and there are in total $p^n-1$ distinct $n$-tuples over $\Fp$ except for the all-zero tuple. When the characteristic polynomial $c(x)\in \Fp[x]$ is a primitive polynomial of degree $n$ (namely, the smallest integer $e$ such that $c(x)$ divides $x^e-1$ is $p^n-1$),
the recursion is known to generate maximum-length linear recursive sequences, or m-sequences for short \cite{Selmer1966,Golomb1967}. 

Suppose $\alpha$ is a root of the primitive polynomial $f(x)$ that generates an m-sequence $\{s_t\}$ over $\Fp$. 
Let $\Tr$ be the absolute trace function from $\Fpn$ to $\Fp$ defined by $\Tr(x) = x + x^p + \cdots + x^{p^{n-1}}$. 
With the one-to-one correspondence between the absolute trace functions and the linear functions from $\Fpn$ to $\Fp$, the m-sequence $\{s_t\}$ can (after a suitable cyclic shift) be expressed simply by 
$
s_t = \Tr(\alpha^t).
$ All nontrivial cyclic shifts of the m-sequence $\{s_t\}$ are given by 
$$
\{s_{t+\tau}\} \text{ with } s_{t+\tau} = \Tr(\alpha^{\tau}\alpha^t), \quad 0<\tau \leq p^n-2.
$$  These shifts are also m-sequences, and they are called the {\rm shift equivalence} of $\{s_t\}$.

Golomb  in \cite{Golomb1967} characterized some important properties of m-sequences.

\begin{lemma} \thlabel{Lem1} Let $f(x) \in \Fp[x]$ be a primitive  polynomial of degree $n$ and
$\{s_t\}$ be an m-sequence over $\Fp$ generated by the linear recurrence associated with $f(x)$.
Then the sequence $\{s_t\}$ satisfies the following properties:
\begin{enumerate}
	\item {\rm Span Property.} Each nonzero $n$-tuple over $\Fp$ occurs exactly once in  one period of $\{s_t\}$;
	 	\item {\rm Decimation Property.} For any positive integer $d$ coprime to $p^n-1$, the decimated sequence $\{s_{dt}\}$ is also an m-sequence of period $p^n-1$;
	\item {\rm Shift-and-Subtract Property.\footnote{This property is also referred to as \textit{Shift-and-And} property for binary m-sequences. Note that $\alpha^i - 1 $ (or $\alpha^i+1$ for $p=2$ ) for any $0<i\leq p^n-2$ is a nonzero element in $\Fpn$ and can be rewritten as $\alpha^j$ for certain integer $j$, where $\alpha$ is a primitive element of $\Fpn$.  }} For any shift $\tau\neq 0$ mod $p^n-1$,  the sequence $\{\widehat{s}_t\}$ with $\widehat{s}_t = s_{t+\tau} - s_t $ is also an m-sequence of period $p^n-1$;
	\item {\rm Balance Property.} In one period of $\{s_t\}$ each nonzero element in $\Fp$ occurs $p^{n-1}$ times and the zero element occurs $p^{n-1}-1$ times;
	\item {\rm Ideal Autocorrelation Property}. The autocorrelation function $C_s(\tau)$ of $\{s_t\}$ satisfies
	\[
	C_s(\tau) = \sum_{i=0}^{N-1}\omega^{s_{t+\tau} - s_{t}} =\begin{cases}
	p^n-1 & \text{ if } \tau \equiv 0 \mod{p^n-1},	
	\\
	-1 & \text{ if } \tau \not\equiv 0 \mod{p^n-1},
	\end{cases}
	\]where $\omega = \exp(2 \pi i/p)$ is a complex primitive $p$-th root of unity. 
	
\end{enumerate}
In particular, in the case of $p=2$, the binary m-sequence $\{s_t\}$ also satisfies:
\begin{enumerate}
	\addtocounter{enumi}{5}
	\item  {\rm Run Property.} In one period of $\{s_t\}$, 1/2 of the runs have
	length 1, 1/4 have length 2, 1/8 have length 3, 1/16 have
	length 4, and so on, as long as these fractions give integral
	numbers of runs, where a run of length $k$ is a block $s_{i}\ldots s_{i+k-1}$ such that $s_{i-1}\neq s_i=\cdots=s_{i+k-1}\neq s_{i+k}$. 
\end{enumerate}

\end{lemma}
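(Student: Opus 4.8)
The plan is to obtain every item from the single representation $s_t=\Tr(\alpha^t)$, where $\alpha$ is a primitive element of $\Fpn$, using two standard facts: since $f$ is irreducible of degree $n$, the set $\{1,\alpha,\dots,\alpha^{n-1}\}$ is an $\Fp$-basis of $\Fpn$; and the trace form $(x,y)\mapsto\Tr(xy)$ is a nondegenerate $\Fp$-bilinear form, so $\beta\mapsto\Tr(\beta\,\cdot\,)$ is an $\Fp$-linear isomorphism of $\Fpn$ onto its dual. Write $N=p^n-1$. For the \emph{Span Property}, note that the length-$n$ window starting at $t$ is $(\Tr(\alpha^t),\Tr(\alpha\cdot\alpha^t),\dots,\Tr(\alpha^{n-1}\alpha^t))=\Phi(\alpha^t)$, where the $\Fp$-linear map $\Phi\colon\Fpn\to\Fp^{n}$ is $\Phi(x)=(\Tr(x),\Tr(\alpha x),\dots,\Tr(\alpha^{n-1}x))$. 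This $\Phi$ is injective, since $\Phi(x)=0$ forces $\Tr(\beta x)=0$ for all $\beta$ in the $\Fp$-span of $1,\dots,\alpha^{n-1}$, i.e.\ for all $\beta$, whence $x=0$; by dimension count $\Phi$ is bijective. As $t$ runs over $0,\dots,N-1$ the element $\alpha^t$ runs over $\Fpn^{\ast}$, so each nonzero $n$-tuple occurs exactly once and the all-zero tuple never occurs.

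The next four items are then short translations. \emph{Decimation}: $s_{dt}=\Tr\bigl((\alpha^{d})^{t}\bigr)$, and $\gcd(d,N)=1$ makes $\alpha^{d}$ primitive, so $\{s_{dt}\}$ is the m-sequence attached to the minimal polynomial of $\alpha^{d}$, again of period $N$. \emph{Shift-and-Subtract}: $s_{t+\tau}-s_t=\Tr\bigl((\alpha^{\tau}-1)\alpha^{t}\bigr)$; since $\tau\not\equiv 0\bmod N$ we have $\alpha^{\tau}\neq 1$, so $\alpha^{\tau}-1=\alpha^{j}$ for some $j$, and $\widehat s_t=\Tr(\alpha^{t+j})$ is a cyclic shift of $\{s_t\}$. \emph{Balance}: the number of $t$ in a period with $s_t=c$ equals $\#\{x\in\Fpn^{\ast}:\Tr(x)=c\}$; since $\Tr$ is onto and $\Fp$-linear, each fibre has $p^{n-1}$ elements, and deleting $x=0$ leaves $p^{n-1}$ occurrences for $c\neq 0$ and $p^{n-1}-1$ for $c=0$. \emph{Ideal Autocorrelation}: for $\tau\equiv 0$ every summand is $1$ and $C_s(\tau)=N$; for $\tau\not\equiv 0$, $s_{t+\tau}-s_t$ is an m-sequence by Shift-and-Subtract, so by Balance it takes each value in $\Fp^{\ast}$ exactly $p^{n-1}$ times and $0$ exactly $p^{n-1}-1$ times, giving $C_s(\tau)=(p^{n-1}-1)+p^{n-1}\sum_{c\in\Fp^{\ast}}\omega^{c}=(p^{n-1}-1)-p^{n-1}=-1$, using $\sum_{c\in\Fp}\omega^{c}=0$.

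For the \emph{Run Property} ($p=2$), first extend the span statement to shorter windows: the first $\ell\le n$ coordinate functionals of $\Phi$ are $\Tr(\alpha^{i}\,\cdot\,)$ for $i=0,\dots,\ell-1$, which are linearly independent because $1,\dots,\alpha^{\ell-1}$ are; hence the projection $\Fpn\to\Fp^{\ell}$ is surjective with fibres of size $2^{\,n-\ell}$, and a fixed $\ell$-tuple $w$ occurs in one period exactly $2^{\,n-\ell}$ times if $w\neq 0$ and $2^{\,n-\ell}-1$ times if $w=0$. A run of $1$'s of length exactly $k$ with $1\le k\le n-2$ is an occurrence of the nonzero $(k+2)$-tuple $(0,1,\dots,1,0)$, so there are $2^{\,n-k-2}$ such runs, and symmetrically $2^{\,n-k-2}$ runs of $0$'s of length $k$. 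The longest runs are treated by hand: the all-$1$ $n$-window occurs exactly once, and a short case check shows it is flanked by $0$'s on both sides, yielding exactly one run of $1$'s of length $n$; dually, the unique occurrence of $(1,0,\dots,0)$ yields exactly one run of $0$'s of length $n-1$, and no longer runs exist (an all-zero $n$-window never occurs). Summing, the number of runs is $2+2\sum_{k=1}^{n-2}2^{\,n-k-2}=2^{\,n-1}$, and for $1\le k\le n-2$ the fraction of runs of length $k$ equals $2\cdot 2^{\,n-k-2}/2^{\,n-1}=2^{-k}$, i.e.\ $1/2,1/4,1/8,\dots$, which is the asserted pattern as far as these fractions yield whole numbers of runs.

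Everything except the Run Property is essentially bookkeeping once the trace description is available, so I expect the only genuinely delicate point to be the boundary behaviour of the runs at lengths $n-1$ and $n$ — precisely the ``as long as these fractions give integral numbers of runs'' clause — which must be settled by the separate window/case analysis above rather than by the clean formula valid for $k\le n-2$, while keeping careful track of the one missing all-zero $n$-tuple in each window count.
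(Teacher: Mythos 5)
Your proof is correct. The paper itself only sketches this lemma: it observes that Property (1) is immediate because the $p^n-1$ successive states of the degree-$n$ recursion run through all nonzero $n$-tuples, proves (2)--(5) exactly along the lines you use, via the trace representation $s_t=\Tr(\alpha^t)$ and its linearity, and defers Property (6) to Golomb's analysis of $n$-tuple patterns (Sec.\ 4.2 of his book) rather than proving it. Your route differs in two minor but worthwhile ways: you derive the Span Property from nondegeneracy of the trace bilinear form (injectivity of $x\mapsto(\Tr(x),\Tr(\alpha x),\dots,\Tr(\alpha^{n-1}x))$) instead of the state-cycle argument, and you give a self-contained proof of the Run Property by extending the span count to windows of every length $\ell\le n$ (each nonzero $\ell$-tuple occurring $2^{n-\ell}$ times, the zero $\ell$-tuple $2^{n-\ell}-1$ times), which is essentially Golomb's pattern argument written out in full. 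The one spot where your ``short case check'' should be made explicit is the absence of runs of ones of length $n-1$: both occurrences of the $(n-1)$-tuple $1\cdots 1$ lie inside the unique all-ones $n$-window, so no such run exists; together with the unique run of zeros of length $n-1$ (the single occurrence of $(1,0,\dots,0)$ necessarily followed by a $1$, since the all-zero $n$-window never appears) this closes the boundary cases and yields your total of $2^{n-1}$ runs.
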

From the linearity of the absolute trace function over $\Fpn$, it follows that $\Tr(ax)+\Tr(bx) = \Tr((a+b)x)$ and for any $c\neq 0$ in $\Fpn$,
\[
\sum_{x\in \Fpn} \omega^{\Tr(cx)}=\sum_{x\in \Fpn} \omega^{\Tr(x)} = 1+\sum_{i=0}^{p^n-2} \omega^{\Tr(\alpha^i)} = 0.
\] 
The first property in \thref{Lem1} is straightforward since $\{s_t\}$ is a linear recursive sequence with period $p^n-1$. 
Properties (2)-(5) listed in \thref{Lem1} can be proved by utilizing the trace expression of an m-sequence and its linearity property. The proof of Property (6) is a bit complicated and requires a closer investigation of the pattern of all nonzero $n$-tuple binary strings in an m-sequence \cite[Sec. 4.2]{Golomb1967}.  Golomb recognized the importance of
Properties (1), (4) and (6) in the context of randomness and called binary sequences satisfying these three properties pseudo-noise sequence \cite{Golomb1967}, where \textit{noise} is the standard designation in electronics for a random signal. Pseudo-noise sequences turned out to have many practical applications in digital communications, including  navigation, radar,
and spread-spectrum communication systems \cite{Golomb2005}.

	Let $\{u_t\}$ and $\{v_t\}$ be two $p$-ary sequences of period $N$. The (periodic) \textit{crosscorrelation} between them at shift $\tau$, where $0\leq \tau <N$, is given by
	\begin{equation}\label{Eq:PeriodicCor}
	C_{u, v}(\tau) = \sum_{i=0}^{N-1}\omega^{u_{t+\tau} - v_{t}}, 
	\end{equation} where $\omega = \exp(2 \pi i/p)$ is a complex primitive $p$-th root of unity. 
	When the two sequences are the same, we speak of the \textit{autocorrelation} of the sequence. 

Let $\alpha$ be a primitive element of $\Fpn$ and $\{s_t\}$ be an m-sequence given by $s_t=\Tr(\alpha^t)$ for $t=0, 1,\ldots, p^n-2$. Other $p$-ary m-sequences with period $p^n-1$ can be derived from $\{s_t\}$ by cyclic shifts and by taking its $d$-decimations with $\gcd(d, p^n-1)=1$ as a new sequence
$\{s_{dt}\}$, which corresponds to another $n$-th degree primitive polynomial in $\Fp[x]$.
The crosscorrelation between an m-sequence and its cyclic shift m-sequence is trivial, which is the autocorrelation of the m-sequence and has the ideal two-level property as in \thref{Lem1} (5).
Of more interest is the crosscorrelation between an m-sequence and its $d$-decimation sequence. This chapter concentrates on the cases where the decimations $d$ are coprime to the period $p^n-1$.
\begin{definition}[Crosscorrelation between m-sequences]
	Let $\{s_t\}$ be an m-sequence over $\Fp$ with period $p^n-1$ and $d$ a positive integer coprime to $p^n-1$. The crosscorrelation between the $\{s_t\}$  and its relative $d$-decimation $\{s_{dt}\}$ 
at a shift $\tau$ is given by
	\begin{equation}\label{Eq-WeiSum}
	C_d(\tau) = \sum\limits_{t=0}^{p^n-2} \omega^{s_{t+\tau}-s_{dt}}.
	\end{equation} 
	The crosscorrelation spectrum is defined as the set of values $$\{C_d(\tau)\,|\,\tau=0, 1,\dots, p^n-2\}.$$
	If the spectrum of $C_d(\tau)$ contains $t$ distinct values, we call the integer $d$ a $t$-valued decimation.
\end{definition}
From the properties of the trace function, some properties of the crosscorrelation $C_d(\tau)$ were recognized in \cite{Trachtenberg} and \cite{Helleseth1976}.
\begin{lemma}\thlabel{Lem2}\cite{Trachtenberg,Helleseth1976} 
	Let $C_d(\tau)$ be defined as in \eqref{Eq-WeiSum}. Then,
	\begin{enumerate}[label=i)]
		\item $C_d(\tau)$ is a real number;
		\item let $d\cdot d^{-1} \equiv 1 \mod{p^n-1}$, then the values and occurrences of each value of $C_d(\tau)$ and $C_{d^{-1}}(\tau)$ are the same;
		\item $C_{dp^j}(\tau) =C_{d}(p^j\tau) =C_{d}(\tau)$;
		\item $\sum\limits_{\tau=0}^{p^n-2}C_d(\tau) = 1$;
		\item $\sum\limits_{\tau=0}^{p^n-2}C_d(\tau-t)C_d(\tau) = \begin{cases}
		p^{2n}-p^n-1 \text{ when } t\equiv 0 \mod{p^n-1},\\
		-p^n-1 \text{ when } t\not\equiv 0 \mod{p^n-1};
		\end{cases}$
		\item $\sum\limits_{\tau=0}^{p^n-2}C^l_d(\tau) = -(p^n-1)^{l-1}+2(-1)^{l-1}+b_lp^{2n}$, where $b_l$ is the number of solutions of the equations 
		\begin{equation}\label{Eq-powermoment}
		\begin{array}{l}
		x_1 + x_2 + \cdots + x_{l-1} + 1 = 0, \\
		x_1^d + x_2^d + \cdots + x_{l-1}^d + 1 = 0,
		\end{array}
		\end{equation}where $x_i$ is a nonzero element in $\Fpn$ for $i=1, 2, \dots, l-1$.
	\end{enumerate}	
\end{lemma}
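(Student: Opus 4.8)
The plan is to reduce each item to a character sum over $\Fpn^{*}:=\Fpn\setminus\{0\}$ and then use orthogonality of additive characters. Using $s_t=\Tr(\alpha^{t})$ and the bijection $t\mapsto x=\alpha^{t}$ between $\{0,\dots,p^n-2\}$ and $\Fpn^{*}$, rewrite
\begin{equation*}
C_d(\tau)=\sum_{x\in\Fpn^{*}}\omega^{\Tr(\alpha^{\tau}x-x^{d})}.
\end{equation*}
I shall use three elementary facts repeatedly: (a) $\sum_{x\in\Fpn}\omega^{\Tr(cx)}$ is $p^{n}$ if $c=0$ and $0$ otherwise; (b) for $c\neq0$, as $\tau$ runs through $0,\dots,p^n-2$ the product $\alpha^{\tau}c$ runs through $\Fpn^{*}$, so $\sum_{\tau}\omega^{\Tr(\alpha^{\tau}c)}=-1$ (it is $p^n-1$ if $c=0$); (c) since $\gcd(d,p^n-1)=1$, each of $x\mapsto x^{d}$, $x\mapsto x^{p^{j}}$ and $x\mapsto\beta x$ with $\beta\neq0$ permutes $\Fpn^{*}$. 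Items (iii) and (iv) then follow at once: for (iii), $\Tr\bigl((\alpha^{dt})^{p^{j}}\bigr)=\Tr(\alpha^{dt})$ gives $C_{dp^{j}}(\tau)=C_d(\tau)$, and the reindexing $t=p^{j}t'$ together with $\Tr(w^{p^{j}})=\Tr(w)$ gives $C_d(p^{j}\tau)=C_d(\tau)$; for (iv), swapping the two sums and using (b) then (a),(c) gives $\sum_{\tau}C_d(\tau)=-\sum_{x\in\Fpn^{*}}\omega^{-\Tr(x^{d})}=-\sum_{u\in\Fpn^{*}}\omega^{-\Tr(u)}=1$.

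For (i): when $p=2$ one has $\omega=-1$ and $C_d(\tau)\in\mathbb{Z}$; when $p$ is odd, $p^n-1$ is even so $\gcd(d,p^n-1)=1$ forces $d$ odd, and then $\overline{C_d(\tau)}=\sum_{x}\omega^{\Tr(x^{d}-\alpha^{\tau}x)}$ is carried back to $C_d(\tau)$ by the substitution $x\mapsto-x$, because $(-x)^{d}=-x^{d}$. For (ii): set $e=d^{-1}\bmod(p^n-1)$, substitute $x\mapsto x^{e}$ (so $x^{de}=x$) and then $x\mapsto\beta x$ with $\beta^{e}=-\alpha^{-\tau}$; using that $d$ is odd when $p$ is odd one checks that the resulting sum is $C_{e}\bigl(-d\tau\bmod(p^n-1)\bigr)$, and since $\tau\mapsto-d\tau$ permutes $\mathbb{Z}/(p^n-1)$ the spectra of $C_d$ and $C_{d^{-1}}$ coincide as multisets.

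For (v), I would use (i) to rewrite the second factor as $C_d(\tau)=\sum_{y\in\Fpn^{*}}\omega^{\Tr(y^{d}-\alpha^{\tau}y)}$, so that
\begin{equation*}
\sum_{\tau=0}^{p^n-2}C_d(\tau-t)C_d(\tau)=\sum_{x,y\in\Fpn^{*}}\omega^{\Tr(y^{d}-x^{d})}\sum_{\tau=0}^{p^n-2}\omega^{\Tr(\alpha^{\tau}(\alpha^{-t}x-y))}.
\end{equation*}
By (b) the inner $\tau$-sum is $p^n-1$ on the diagonal $y=\alpha^{-t}x$ and $-1$ off it, so the expression equals $p^{n}D-T$, where $D=\sum_{x\in\Fpn^{*}}\omega^{\Tr((\alpha^{-dt}-1)x^{d})}$ is the diagonal part and $T=\sum_{x,y\in\Fpn^{*}}\omega^{\Tr(y^{d}-x^{d})}$ the full double sum. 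Here $T=(-1)(-1)=1$ by (a),(c), while $D=p^n-1$ if $t\equiv0$ (then $\alpha^{-dt}=1$) and $D=-1$ if $t\not\equiv0$ (then $\alpha^{-dt}\neq1$ since $\gcd(d,p^n-1)=1$). Hence the sum is $p^{n}(p^n-1)-1=p^{2n}-p^n-1$ for $t\equiv0$ and $-p^n-1$ otherwise.

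For (vi), expand the $l$-th power, swap summations, and apply (b) to the $\tau$-sum of $\omega^{\Tr(\alpha^{\tau}(x_1+\cdots+x_l))}$ to obtain
\begin{equation*}
\sum_{\tau=0}^{p^n-2}C_d(\tau)^{l}=(p^n-1)P_0-\bigl((-1)^{l}-P_0\bigr)=p^{n}P_0-(-1)^{l},
\end{equation*}
where $P_0$ is the sum of $\omega^{-\Tr(x_1^{d}+\cdots+x_l^{d})}$ over all $(x_1,\dots,x_l)\in(\Fpn^{*})^{l}$ with $x_1+\cdots+x_l=0$, and I used $\sum_{x_1,\dots,x_l\in\Fpn^{*}}\omega^{-\Tr(x_1^{d}+\cdots+x_l^{d})}=(-1)^{l}$ by (a),(c). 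The crux is to evaluate $P_0$: the locus $x_1+\cdots+x_l=0$ in $(\Fpn^{*})^{l}$ is invariant under scaling $x_i\mapsto wx_i$, hence a disjoint union of orbits of size $p^n-1$ with representatives $(y_1,\dots,y_{l-1},1)$ satisfying $y_1+\cdots+y_{l-1}=-1$, and since $\Tr(x_1^{d}+\cdots+x_l^{d})=\Tr\bigl(w^{d}(y_1^{d}+\cdots+y_{l-1}^{d}+1)\bigr)$ and $w\mapsto w^{d}$ permutes $\Fpn^{*}$, the $w$-sum contributes $p^n-1$ exactly when $y_1^{d}+\cdots+y_{l-1}^{d}+1=0$ and $-1$ otherwise. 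Among the $(y_1,\dots,y_{l-1})\in(\Fpn^{*})^{l-1}$ with $y_1+\cdots+y_{l-1}=-1$, those also satisfying $y_1^{d}+\cdots+y_{l-1}^{d}=-1$ number precisely $b_l$, so $P_0=b_l(p^n-1)-(B-b_l)=b_lp^{n}-B$, where $B$ is the number of ways to write $-1$ as an ordered sum of $l-1$ nonzero elements of $\Fpn$. A short inclusion--exclusion yields $B=\bigl((p^n-1)^{l-1}-(-1)^{l-1}\bigr)/p^{n}$; feeding this back gives $\sum_{\tau}C_d(\tau)^{l}=b_lp^{2n}-(p^n-1)^{l-1}+2(-1)^{l-1}$, as stated. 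I expect the combinatorial bookkeeping in this last paragraph — the orbit normalization, the recognition of $b_l$, and the count $B$ — to be the only genuinely delicate point; items (i)--(v) are routine additive-character manipulations.
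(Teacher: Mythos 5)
Your proof is correct: the reductions to additive-character sums, the diagonal/off-diagonal split in (v), and in (vi) the orbit normalization $x_l=1$, the identification of $b_l$, and the count $B=\bigl((p^n-1)^{l-1}-(-1)^{l-1}\bigr)/p^n$ all check out and reproduce the stated formulas. The paper itself gives no proof of this lemma (it is quoted from Trachtenberg and Helleseth), and your argument is the standard character-sum derivation used in those sources, so there is nothing to reconcile.
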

The above lemma is useful to determine the number of occurrences of each value $C_d(\tau)$ when there are only a few possible values in the crosscorrelation spectrum.

The crosscorrelation function between $m$-sequences has a close connection to the Walsh transform of the function $\Tr(x^d)$ over $\Fpn$.
\begin{definition}[Walsh Transform]\label{Def-WT}
	Let $f$ be a function from $\Fpn$ to $\Fp$. The Walsh transform of $f$ is the function $W_f: \Fpn \mapsto \mathbb{C}$ given by 
	\begin{equation}\label{Eq-WT}
	 W_f( a) = \sum_{x\in \Fpn}{\omega^{f(x)-\Tr(ax)}}.
	\end{equation}The Walsh spectrum of $f$ is the set of values $\{W_f(a)\,|\, a\in \Fpn\}.$ 
\end{definition}
The notion of Walsh transform can be also defined for a function from $\Fpn$ to itself as follows.
\begin{definition}\label{Def-WT2}
	Let $F$ be a function from $\Fpn$ to itself. The Walsh transform of $F$ is the function $W_F: \Fpn^2\mapsto \mathbb{C}$ given by 
	\begin{equation*}\label{Eq-WT}
	W_F(b, a) = \sum_{x\in \Fpn}{\omega^{\Tr(bF(x)-ax)}}.
	\end{equation*}The Walsh spectrum of $F$ is the set of values $\{W_F(b, a)\,|\, b \in \Fpn^*, a\in \Fpn\}.$ 
\end{definition}
From the above definitions, when $F(x)=x^d$ and $f(x) = \Tr(x^d)$ with $\gcd(d, p^n-1)=1$, one has 
$$
W_F(b, a) = W_F\left(1, \frac{a}{\sqrt[d]{b}}\right) = W_f\left(\frac{a}{\sqrt[d]{b}}\right)
$$for any $b\in \Fpn^*$ and $a\in \Fpn$; and when taking $a = \alpha^{\tau}$ for $0\leq \tau < p^n-1 $, one has
\begin{equation}\label{Eq-CC-WT}
C_d(\tau) = -1 + \sum_{z\in \Fpn} \omega^{\Tr(az-z^d)} = 
-1 + \sum_{x\in \Fpn} \omega^{\Tr(x^d-ax)} = -1 + W_f(a), 
\end{equation} where the second identity holds since $(-1)^d $ is equal to $1$ for both even and odd prime $p$
when $\gcd(d, p^n-1)=1$.
With the above identity, the power moments in \thref{Lem2} can be translated to those for Walsh transforms as follows (see, e.g.,  \cite[Corollary 7.4]{Katz-2012}).
\begin{lemma}\cite{Katz-2012}\thlabel{Lem2-B}
	Let $d$ be a positive integer coprime to $p^n-1$ and $W_d(a)$ denote the Walsh transform of $f(x)=\Tr(x^d)$ as given in \eqref{Eq-WT}. 
	Then for a positive integer $l$, the $l$th power moment of the Walsh transform satisfies
	$$
	P_d^{(l)} = \sum_{a\in \Fpn}W_d(a)^l = \frac{p^{2n}N^{(l)}-p^{ln}}{p^n-1},
	$$ where $N^{(l)}$ is the number of solutions of $(x_1, \ldots, x_l)\in \Fpn^l$ to the system of equations 
	\[
	\begin{array}{l}
	x_1 + x_2 + \cdots + x_{l} = 0, \\
	x_1^d + x_2^d + \cdots + x_{l}^d = 0.
	\end{array}
	\]
	In particular, for small integers $l=1, 2, 3, 4$, we have 
	\begin{enumerate}
		\item $P_d^{(1)} = \sum\limits_{a\in \Fpn}W_d(a)^1 = p^n$;
		\item $P_d^{(2)} = \sum\limits_{a\in \Fpn}W_d(a)^2 = p^{2n}$;
		\item $P_d^{(3)} = \sum\limits_{a\in \Fpn}W_d(a)^3 = p^{2n}M_1$;
		\item $P_d^{(4)} = \sum\limits_{a\in \Fpn}W_d(a)^4 = p^{2n}\sum\limits_{a\in\Fpn}M_a^2$,
	\end{enumerate}where $M_a= \#\{x\in \Fpn\,|\, (x+1)^d-x^d=a\}$.
\end{lemma}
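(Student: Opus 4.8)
The plan is to expand the $l$-th power of the Walsh transform into an $l$-fold sum over $\Fpn^l$ and collapse one summation by orthogonality of additive characters. Starting from $W_d(a)=\sum_{x\in\Fpn}\omega^{\Tr(x^d-ax)}$, raising to the power $l$ and summing over $a$ gives
\[
P_d^{(l)}=\sum_{a\in\Fpn}\ \sum_{x_1,\dots,x_l\in\Fpn}\omega^{\Tr(x_1^d+\cdots+x_l^d)}\,\omega^{-\Tr(a(x_1+\cdots+x_l))}.
\]
Interchanging the order of summation and using that $\sum_{a\in\Fpn}\omega^{-\Tr(ay)}$ equals $p^n$ if $y=0$ and $0$ otherwise (the orthogonality relation recalled before this lemma), one gets $P_d^{(l)}=p^n\,S$, where $S=\sum\omega^{\Tr(x_1^d+\cdots+x_l^d)}$ with the sum taken over all $(x_1,\dots,x_l)\in\Fpn^l$ satisfying $x_1+\cdots+x_l=0$.

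Next I would evaluate $S$ by sorting those tuples according to the value $c=x_1^d+\cdots+x_l^d$. Let $T(c)$ denote the number of $(x_1,\dots,x_l)\in\Fpn^l$ with $\sum_j x_j=0$ and $\sum_j x_j^d=c$, so that $S=\sum_{c\in\Fpn}T(c)\,\omega^{\Tr(c)}$ and $T(0)=N^{(l)}$. The key observation is a scaling symmetry: for $\lambda\in\Fpn^*$ the substitution $x_j\mapsto\lambda x_j$ preserves $\sum_j x_j=0$ and multiplies $\sum_j x_j^d$ by $\lambda^d$; since $\gcd(d,p^n-1)=1$ the map $\lambda\mapsto\lambda^d$ permutes $\Fpn^*$, so $T(c)$ takes one common value $T_1$ for all $c\neq 0$. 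Counting all tuples with $\sum_j x_j=0$ gives $N^{(l)}+(p^n-1)T_1=p^{(l-1)n}$, and since $\sum_{c\in\Fpn}\omega^{\Tr(c)}=0$ we obtain $S=T(0)-T_1=N^{(l)}-\frac{p^{(l-1)n}-N^{(l)}}{p^n-1}=\frac{p^nN^{(l)}-p^{(l-1)n}}{p^n-1}$. Multiplying by $p^n$ yields the claimed identity $P_d^{(l)}=\frac{p^{2n}N^{(l)}-p^{ln}}{p^n-1}$.

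For the four explicit cases I would compute $N^{(l)}$ directly, using throughout that $\gcd(d,p^n-1)=1$ forces $(-x)^d=-x^d$ (trivially when $p=2$, and because $d$ is odd when $p$ is odd). For $l=1$ clearly $N^{(1)}=1$; for $l=2$, $x_1+x_2=0$ already forces $x_1^d+x_2^d=x_1^d+(-x_1)^d=0$, so $N^{(2)}=p^n$; these give $P_d^{(1)}=p^n$ and $P_d^{(2)}=p^{2n}$. For $l=3$, eliminate $x_3=-(x_1+x_2)$ so the second equation becomes $x_1^d+x_2^d=(x_1+x_2)^d$; the locus $x_1+x_2=0$ contributes $p^n$ solutions, and on the complement, writing $x_1=cu$, $x_2=c(1-u)$ with $c\neq 0$, the equation reduces to $u^d+(1-u)^d=1$, which has $M_1$ solutions because $u\mapsto -u$ turns $(x+1)^d-x^d=1$ into $u^d+(1-u)^d=1$; hence $N^{(3)}=p^n+(p^n-1)M_1$ and $P_d^{(3)}=p^{2n}M_1$. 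For $l=4$, eliminate $x_4$ and write $N^{(4)}=\sum_{c,e}A(c,e)^2$, where $A(c,e)$ counts pairs $(x_1,x_2)$ with $x_1+x_2=c$ and $x_1^d+x_2^d=e$ (the square appears because $(x_3,x_4)\mapsto(-x_3,-x_4)$ shows the $(x_3,x_4)$-count also equals $A(c,e)$); the term $c=0$ contributes $p^{2n}$, while for $c\neq 0$ the rescaling $x_i=cu_i$ gives $A(c,e)=M_{ec^{-d}}$, so summing over $e$ and $c\neq 0$ yields $(p^n-1)\sum_a M_a^2$. Thus $N^{(4)}=p^{2n}+(p^n-1)\sum_a M_a^2$ and $P_d^{(4)}=p^{2n}\sum_a M_a^2$.

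The general identity is essentially bookkeeping once the scaling symmetry of $T(c)$ is spotted; the part requiring the most care is matching the $l=3,4$ counts with $M_a$, i.e.\ correctly tracking signs through the elimination of the last variable and through the substitutions $x_i=cu_i$ and $u\mapsto -u$. No step looks genuinely difficult.
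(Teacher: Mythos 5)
Your proof is correct: the orthogonality step giving $P_d^{(l)}=p^n\sum_{x_1+\cdots+x_l=0}\omega^{\Tr(x_1^d+\cdots+x_l^d)}$, the scaling argument (bijectivity of $\lambda\mapsto\lambda^d$ on $\Fpn^*$ since $\gcd(d,p^n-1)=1$) showing the counts $T(c)$ are constant for $c\neq 0$, and the evaluations of $N^{(l)}$ for $l\le 4$ (using that $(-x)^d=-x^d$ because $d$ is odd when $p$ is odd, and the substitution $u=-x$ matching $u^d+(1-u)^d=1$ with $(x+1)^d-x^d=1$) all check out, and they reproduce exactly the stated formulas. Note, however, that the paper does not prove this lemma at all: it cites it to Katz (Corollary 7.4 of \cite{Katz-2012}) and presents it as a translation of the correlation power moments in \thref{Lem2} via the identity $C_d(\tau)+1=W_d(a)$ from \eqref{Eq-CC-WT}, with the remark only addressing the inclusion of $a=0$ (harmless since $W_d(0)=0$ for invertible $d$). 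So your argument is a genuinely self-contained alternative: rather than importing \thref{Lem2}(vi) and shifting by $-1$, you derive the general moment identity directly from character orthogonality plus the $d$-th power scaling symmetry, and then compute $N^{(l)}$ explicitly for $l=1,2,3,4$; this is essentially the underlying proof that the cited reference (and \thref{Lem2}) rests on, and it has the advantage of making the role of $\gcd(d,p^n-1)=1$ and the appearance of $M_a$ completely transparent, at the cost of redoing bookkeeping the paper simply outsources to the literature.
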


\begin{remark}
	In the above lemma, we include the case $a=0$ in the summation of the power moments, which are identical to the ones in \cite[Corollary 7.4]{Katz-2012} since
	the Walsh transform of $\Tr(x^d)$ vanishes for any invertible exponent $d$ (Here ``invertible" refers to the property $\gcd(d,p^n-1)=1$). As we shall see in the subsequent sections, the power moments in \thref{Lem2} (resp. \thref{Lem2-B}) together with the 
	trivial ($0$-th) moment  $$\sum_{\tau=0}^{p^n-2}C_d(\tau)^{(0)}= p^n-1, \quad \text{resp. } \sum_{a\in \Fpn^*}W_f(a)^{(0)} = p^n-1$$ contribute several equations of the occurrences of each value of the crosscorrelation function (resp. Walsh transform) that help us the correlation (resp. Walsh) spectrum for an invertible exponent $d$. 
\end{remark}

The crosscorrelation function $C_d(\tau)$ also has a close connection to the weight distribution of cyclic codes. We briefly recall some basics of linear codes and cyclic codes for completeness. 
A linear $[\texttt{n}, \texttt{k}]$ code $\mathcal{C}$ over $\mathbb{F}_{p}$ is a $\texttt{k}$-dimensional subspace of $\mathbb{F}_{p}^{\texttt{n}}$. 
Let $A_i$ denote the number of codewords in $\mathcal{C}$ with Hamming weight $i$, where the Hamming weight of a codeword is the number of nonzero coordinates in the codeword.
The weight distribution of $\mathcal{C}$ is denoted by $(A_0,A_1,\cdots,A_{\texttt{n}})$. A linear code $\mathcal{C}$ over $\mathbb{F}_{p}$ is called {\em cyclic} if any cyclic shift of a codeword
is another codeword of $\mathcal{C}$.
It is well known that any cyclic code of length $\texttt{n}$ over $\mathbb{F}_{p}$ corresponds to an ideal of the polynomial residue class ring $\mathbb{F}_{p}[x]/(x^{\texttt{n}}-1)$ and can be expressed as $\mathcal{C}=\langle g(x) \rangle$, where $g(x)$ 
is monic and has the least degree. This polynomial is called the {\em generator polynomial} and $h(x)=(x^{\texttt{n}}-1)/g(x)$ is referred to as the {\em parity-check polynomial} of $\mathcal{C}$. 

\smallskip 

Consider the $p$-ary cyclic code $\mathcal{C}$ of length $p^n-1$ with parity-check polynomial $h(x)=m_{-1}(x)m_{-d}(x)$, where $m_i(x)$ is the minimal polynomial of $\alpha^{i}$ over $\Fp$. The Hamming weight of a nonzero codeword $\mathbf{c}_{a,b}$ in $\mathcal{C}$ can be expressed as
\begin{equation}\label{Eq-Code}
\begin{split}
wt(\mathbf{c}_{a, b}) & = p^n-1-\{ x\in \Fpn^*\,|\, \Tr(ax + bx^d) = 0\} \\
& = p^n-1 - \frac{1}{p}\sum_{x\in \Fpn^*}\sum_{y\in \Fp}\omega^{y\Tr(ax + bx^d)} \\
&=  p^n-1- \frac{1}{p}\left(p^n-p+\sum_{y\in \Fp^*}\sum_{x\in \Fpn}\omega^{y\Tr(ax + bx^d)} \right) \\
&=  p^{n-1}(p-1)-\frac{1}{p}\sum_{y\in \Fp^*}\sum_{x\in \Fpn}\omega^{\Tr(yax + ybx^d)} \\
&=  p^{n-1}(p-1)- \frac{1}{p}\sum_{y\in \Fp^*}\sum_{x\in \Fpn}\omega^{\Tr(x^d + \frac{ya}{\sqrt[d]{yb}}x)}  \\
& = p^{n-1}(p-1)- \frac{1}{p}\sum_{y\in \Fp^*}W_d\left(-\frac{a}{\sqrt[d]{b}}y^{\frac{d-1}{d}}\right),
\end{split}
\end{equation}	where
$W_d(a)$ denotes the Walsh transform of $\Tr(x^d)$ at the point $a$.   In particular, if $d\equiv 1 \mod{p-1}$, then 
$y^{\frac{d-1}{d}}=1$ for any $y\in \mathbb{F}_p^*$ and $W_d\left(-\frac{a}{\sqrt[d]{b}}y^{\frac{d-1}{d}}\right) = W_d\left(-\frac{a}{\sqrt[d]{b}}\right)$, which is independent of the choice of $y$. Hence each codeword in $\mathcal{C}$ has weight of the form
${(p-1)(p^{n}-W_d(a))}\big/{p}$ for certain $a\in \Fpn$ \cite{Katz-2012}. 
This indicates that  the weight distribution of $p$-ary cyclic codes with nonzeros $\alpha$ and $\alpha^d$ for any invertible exponent $d$ satisfying $d\equiv 1 \mod{p-1}$ can be completely determined by the Walsh spectrum of the Boolean function $\Tr(x^d)$, or equivalently, can be determined by the crosscorrelation of the m-sequence $\{s_t\}$ and its $d$-decimated sequence $\{s_{dt}\}$.

In applications it is of particular interest to find positive integers $d$ that lead to Walsh spectra or crosscorrelation spectra with
a few nonzero small absolute values \cite{Niho, Helleseth1976, Helleseth-1978, Helleseth-Kumar, Canteaut-Charpin-Dobbertin, Hollmann-Xiang}, 
since functions with high nonlinearity are resistant to linear attacks and sequence pairs with low crosscorrelation are easily distinguishable. 

In the following sections we will review the known results on crosscorrelation spectrum with few values in the literature,  of which most also have crosscorrelation with small absolute values.
With the identities in \eqref{Eq-CC-WT} and \eqref{Eq-Code}, some known results were in fact derived in the context of coding theory \cite{Kasami1966,Canteaut-Charpin-Dobbertin,Hollmann-Xiang}.

\section{Three-valued crosscorrelation}

A decimation integer $d$ is said to be {\it degenerate over $\Fpn$} when it is a power of $p$ modulo $p^n-1$; in this case $\Tr(x^d)=\Tr(x)$  is linear and  the crosscorrelation spectrum between $\{s_t\}$ and $\{s_{dt}\}$ degenerates to the autocorrelation spectrum of an m-sequence, which takes on two values $-1$ and $p^n-1$.
When $d$ is not degenerate, Helleseth showed \cite{Helleseth1976} that the crosscorrelation spectrum for two $p$-ary m-sequences of length $p^n-1$ with decimation $d$
takes on at least three values. In this sense,  finding decimation integers $d$ that lead to three-valued crosscorrelation spectra is of major interest. 
Moreover, with the identities (4)-(6) in \thref{Lem2}, one can easily determine the occurrences of each of the three values. 
So far there have been eleven infinite classes of decimations in the literature that lead to three-valued crosscorrelation spectra.
In the following we recall those decimations and their correlation distributions.

\begin{theorem}\thlabel{Th-3valued-2}
	Let $p=2$ and $C_d(\tau)$ be the crosscorrelation function defined as in \eqref{Eq-WeiSum}. The known decimations having 
	three-valued crosscorrelation and their correlation distributions are given as follows:
	\begin{enumerate}
		\item let $d=2^k+1$  or $d=2^{2k}-2^k+1$, where $n/\gcd(n,k)$ is odd and let $e=\gcd(n,k)$, then $C_d(\tau)$ takes on three values \cite{Gold1968, Kasami1971}:
		\begin{equation}\label{eq-3value-dist}
		\begin{array}{rcl}
		-1+2^{(n+e)/2} & \quad\text{ occurs } \quad & 2^{n-e-1} +2^{(n-e-2)/2} \text{ times,} \\
		-1-2^{(n+e)/2} & \quad\text{ occurs } \quad& 2^{n-e-1} - 2^{(n-e-2)/2} \text{ times,} \\
		-1 & \quad\text{ occurs } \quad & 2^n-2^{n-e}-1 \text{ times;} 
		\end{array}
		\end{equation}
		\item  let $d=2^m+2^{(m+1)/2}+1$ or $d=2^{m+1}+3$, where $n=2m$ with $m$ odd,   then $C_d(\tau)$ takes on three values $-1, -1\pm 2^{m+1}$ and the distribution is given by \eqref{eq-3value-dist} with $e=2$ \cite{Cusick-Dobbertin};
		\item let $d=2^m+3$, where $n=2m+1$, then 
		 $C_d(\tau)$ takes on three values $-1, -1\pm 2^{m+1}$  and the distribution is given by \eqref{eq-3value-dist} with $e=1$ \cite{Canteaut-Charpin-Dobbertin,Hollmann-Xiang};
		\item let $d=2^{\frac{n-1}{2}}+2^{\frac{n-1}{2}}-1$ if $n\equiv 1\mod{4}$ or $d=2^{\frac{n-1}{2}}+2^{\frac{3n-1}{2}}-1$ if $n\equiv 3 \mod{4}$, then 
		 $C_d(\tau)$ takes on three values $-1, -1\pm 2^{m+1}$  and the distribution is given by \eqref{eq-3value-dist} with $e=1$\cite{Hollmann-Xiang}.
	\end{enumerate}
\end{theorem}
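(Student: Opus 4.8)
The plan is first to strip the statement down to a single analytic question, and then to treat the four families separately, since they were established by genuinely different methods.

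\emph{Common reduction.} By \eqref{Eq-CC-WT} we have $C_d(\tau)=-1+W_d(\alpha^{\tau})$, where for $p=2$ the Walsh transform $W_d(a)=\sum_{x\in\Fpn}(-1)^{\Tr(x^{d}+ax)}$ is an integer, so for each listed $d$ it suffices to prove that the Walsh spectrum of $\Tr(x^{d})$ consists of exactly the three integers $0$ and $\pm2^{(n+e)/2}$ (with $e=\gcd(n,k)$ in item 1, $e=2$ in item 2, $e=1$ in items 3--4). Once that is known, the distribution \eqref{eq-3value-dist} is automatic and uniform: letting $N_{+},N_{-},N_{0}$ count the shifts $\tau$ with $C_d(\tau)$ equal to $-1+2^{(n+e)/2}$, $-1-2^{(n+e)/2}$, $-1$ respectively, the trivial $0$-th moment gives $N_{+}+N_{-}+N_{0}=2^{n}-1$, item (iv) of \thref{Lem2} gives $2^{(n+e)/2}(N_{+}-N_{-})=2^{n}$, and item (v) of \thref{Lem2} with $t\equiv 0$ gives $2^{n+e}(N_{+}+N_{-})=2^{2n}$; solving this $3\times 3$ linear system produces exactly the counts in \eqref{eq-3value-dist}. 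Thus all the substance is in establishing three-valuedness.

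\emph{Gold and Kasami exponents (item 1).} For $d=2^{k}+1$ I would use that $Q(x)=\Tr(x^{2^{k}+1})$ is a quadratic form over $\mathbb{F}_{2}$ with associated alternating form $B(x,y)=\Tr\!\bigl(x^{2^{k}}(y+y^{2^{2k}})\bigr)$, whose radical is $\{y:y^{2^{2k}}=y\}=\mathbb{F}_{2^{\gcd(2k,n)}}$; the hypothesis that $n/e$ is odd forces $\gcd(2k,n)=e$, so $B$ has (even) rank $n-e$, and a short computation using $e\mid k$ and $n/e$ odd shows that $Q$ restricted to the radical $\mathbb{F}_{2^{e}}$ equals the absolute trace of $\mathbb{F}_{2^{e}}$, a nonzero linear form. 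Splitting $\Fpn=\mathbb{F}_{2^{e}}\oplus V$ (the cross terms vanish because $\mathbb{F}_{2^{e}}$ lies in the radical), $Q(x)+\Tr(ax)$ decomposes as an affine form on $\mathbb{F}_{2^{e}}$ plus a nondegenerate quadratic form on the $(n-e)$-dimensional space $V$, so $W_d(a)$ is a product of a factor equal to $2^{e}$ or $0$ and a factor $\pm2^{(n-e)/2}$, i.e.\ $W_d(a)\in\{0,\pm2^{(n+e)/2}\}$. For the Kasami--Welch exponent $d=2^{2k}-2^{k}+1$ the same hypothesis gives $\gcd(2^{k}+1,2^{n}-1)=1$, so $x\mapsto x^{2^{k}+1}$ is a bijection of $\Fpn$; substituting it and using $(2^{k}+1)d=2^{3k}+1$ turns $W_d(a)$ into $\sum_{z}(-1)^{\Tr(z^{2^{3k}+1}+az^{2^{k}+1})}$, once more a quadratic-form character sum, though now depending on $a$ through its polar form. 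The remaining point --- showing that whenever this sum is nonzero the underlying quadratic form has corank exactly $e$ --- is the delicate step of item 1; it reduces to bounding the number of roots in $\Fpn$ of the associated $a$-dependent linearized polynomial, and this is where Kasami's argument does its real work.

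\emph{Even degree $n=2m$, $m$ odd (item 2), and the odd-degree Niho-type exponents (items 3--4).} Here I do not expect any quadratic-form shortcut, and the arguments are essentially case-specific. For item 2 I would pass to the subfield tower $\mathbb{F}_{2^{m}}\subset\mathbb{F}_{2^{2m}}$: the Frobenius $x\mapsto x^{2^{m}}$ is the nontrivial $\mathbb{F}_{2^{m}}$-automorphism, so, e.g., $x^{2^{m}+3}=x^{2^{m}}x^{3}$, and writing $x\in\Fpn^{*}$ as a product of an element of $\mathbb{F}_{2^{m}}^{*}$ and an element of the norm-one subgroup collapses $W_d(a)$ to a sum over the norm-one subgroup of Kloosterman- or Gauss-type sums over $\mathbb{F}_{2^{m}}$, whose evaluation (using that $m$ is odd) yields the three values $0,\pm2^{m+1}$; this is the Cusick--Dobbertin method. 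For items 3--4, with $n=2m+1$, the target values $0,\pm2^{m+1}=0,\pm2^{(n+1)/2}$ are exactly the almost-bent values, and one proves $\Tr(x^{d})$ is almost bent --- equivalently, that the binary cyclic code with parity-check polynomial $m_{-1}(x)m_{-d}(x)$ has only the three nonzero weights predicted by \eqref{Eq-Code} --- for $d=2^{m}+3$ (Welch's conjecture, settled by Canteaut--Charpin--Dobbertin, with a second proof by Hollmann--Xiang) and for the Niho exponents of item 4 (Niho's conjecture, settled by Hollmann--Xiang, the split $n\equiv 1,3\pmod{4}$ entering only to keep the analysis uniform). I expect the main obstacle throughout items 2--4 to be precisely this absence of a structural shortcut: one must count solutions of explicit systems of equations over $\mathbb{F}_{2^{m}}$ (or the corresponding weights of a cyclic code) and control them directly, and the parity and congruence hypotheses on $m$ and $n$ are present exactly to make those counts come out three-valued. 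In every case, once three-valuedness is in hand, the occurrence counts follow for free from the moment identities as in the common reduction above.
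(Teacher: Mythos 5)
Your proposal is essentially the paper's own treatment: this is a survey statement, and the paper likewise derives the occurrence counts from the power-moment identities (items (4)--(6) of \thref{Lem2}, exactly the $3\times 3$ system you solve) while attributing three-valuedness itself to the cited works --- quadratic-form exponential sums for the Gold and Kasami--Welch exponents, Cusick--Dobbertin for case (2), and Canteaut--Charpin--Dobbertin / Hollmann--Xiang for cases (3)--(4). Your quadratic-form argument for $d=2^k+1$ is correct and is the standard route the paper alludes to, and your honest deferral of the Kasami corank step and of the Welch/Niho conjectures matches what the paper does. The one point worth correcting: for case (2) the paper records that Cusick and Dobbertin proved Niho's conjectures by showing that the trinomials $x^{2^{(m+1)/2}+2}+x+a$ and $x^{6}+x+a$ have either zero or exactly two roots in $\mathbb{F}_{2^m}$, not by the norm-one-subgroup/Kloosterman-sum collapse you describe (that style of subfield reduction is closer to the Niho-exponent machinery of \thref{Lem3}); your label ``this is the Cusick--Dobbertin method'' is therefore a mischaracterization, though it does not affect the correctness of the overall reduction, since in either reading the hard step is a case-specific root count that you, like the paper, take from the literature.
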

The first two classes in Case (1) are the well-known Gold and Kasami-Welch exponents, respectively, for which the crosscorrelation spectra were calculated based on some properties of the
exponential sums of quadratic functions \cite[Ch. 5]{Lidl1996}; the two classes of decimations in Case (2) were conjectured by Niho in his PhD dissertation \cite[p. 74]{Niho}. Cusick and Dobbertin proved the conjectures by
proving two relevant trinomials $x^{2^{(m+1)/2}+2}+x+a$ and $x^6+x+a$ have either none or exactly two solutions in $\mathbb{F}_{2^m}$; 
Case (3) is the long-standing Welch's conjecture (see Golomb \cite{Golomb1968}) and was proved by Canteaut, Charpin
and Dobbertin \cite{Canteaut-Charpin-Dobbertin} and Hollmann and Xiang \cite{Hollmann-Xiang} independently. 
Case (4) was also conjectured in Niho's thesis \cite{Niho}. In \cite{Hollmann-Xiang} the authors proposed a unified way to 
determine the values of $C_d(\tau)$ for Cases (1), (3) and (4).

\begin{theorem}\thlabel{Th-3valued-3} Let $p=3$ and $C_d(\tau)$ be the crosscorrelation function defined as in \eqref{Eq-WeiSum}. The known decimations having 
	three-valued crosscorrelation and their correlation distributions are given as follows.
	\begin{enumerate}
		\item let $d=2\cdot 3^m+1$, where $n=2m+1$, then $C_d(\tau)$ takes on three values \cite{Dobbertin2001}:
		\begin{equation}\label{Eq-3-3value}
		\begin{array}{rcl}
		-1+3^{m+1} & \text{ occurs } & \frac{1}{2}(3^{n-1}+3^m)\text{ times,} \\
		-1- 3^{m+1} & \text{ occurs } & \frac{1}{2}(3^{n-1}-3^m)\text{ times,} \\
		-1 & \text{ occurs } & 3^n-3^{n-1}-1 \text{ times;} 
		\end{array}
		\end{equation}		
		\item let $d=3^k+2$, where $n=2m+1$ and $n | (4k-1)$ \cite{Dobbertin2001,Katz2015a}, then $C_d(\tau)$ takes on three values $-1, -1\pm 3^{m+1} $ with distribution given by \eqref{Eq-3-3value}.
	\end{enumerate}

\end{theorem}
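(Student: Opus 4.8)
The plan is to treat the two decimations uniformly, using that in both cases $x^{d}=x^{2}x^{3^{j}}$ for a suitable $j$: in Case~(2) one takes $j=k$ directly, while in Case~(1) one first replaces $d=2\cdot 3^{m}+1$ by the equivalent exponent $3^{m+1}+2$ (legitimate by \thref{Lem2}(iii), since $3^{m+1}d\equiv 2+3^{m+1}\pmod{3^{n}-1}$) and then takes $j=m+1$. As a preliminary step we record that $d$ is a nondegenerate decimation. In Case~(1), with $g=\gcd(3^{m+1}+2,3^{n}-1)$, the relation $(3^{m+1})^{2}=3^{n+1}\equiv 3$ forces $3\equiv(-2)^{2}\equiv 1\pmod g$, so $g\mid 2$; since $3^{n}-1$ is even and $3^{m+1}+2$ is odd, $g=1$. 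In Case~(2), with $g=\gcd(3^{k}+2,3^{n}-1)$, the hypothesis $n\mid 4k-1$ gives $3^{4k}\equiv 3\pmod{3^{n}-1}$, whence $3\equiv 3^{4k}\equiv(-2)^{4}=16\pmod g$, so $g\mid 13$; as $\operatorname{ord}_{13}(3)=3$, having $13\mid 3^{n}-1$ would force $3\mid n$, but then $3^{k}\equiv -2\equiv 11\pmod{13}$ is impossible since $3^{k}\bmod 13\in\{1,3,9\}$, so $g=1$. In particular $d$ is not degenerate, so by Helleseth's theorem \cite{Helleseth1976} the spectrum of $C_{d}(\tau)$ already contains at least three values.

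The core step is to show that this spectrum is contained in $\{-1,\,-1+3^{m+1},\,-1-3^{m+1}\}$, which by \eqref{Eq-CC-WT} amounts to proving $W_{d}(a)^{2}\in\{0,\,3^{n+1}\}$ for every $a\in\Fpn$, where $W_{d}$ is the Walsh transform of $\Tr(x^{d})$. Since $W_{d}$ is real, substituting $x=y+z$ gives
\[
W_{d}(a)^{2}=\sum_{z\in\Fpn}\omega^{\Tr(z^{d}-az)}\,T_{z},\qquad T_{z}=\sum_{y\in\Fpn}\omega^{\Tr\bigl((y+z)^{d}-y^{d}-z^{d}\bigr)}.
\]
The fact peculiar to characteristic~$3$ and $d=3^{j}+2$ is that
\[
(y+z)^{d}-y^{d}-z^{d}=2zy^{3^{j}+1}+z^{3^{j}}y^{2}+z^{2}y^{3^{j}}+2z^{3^{j}+1}y,
\]
which, as a function of $y$, is the sum of a quadratic form $Q_{z}(y)=\Tr(2zy^{3^{j}+1}+z^{3^{j}}y^{2})$ and an $\mathbb{F}_{3}$-linear form. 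Hence each $T_{z}$ is a quadratic exponential sum and is evaluated by the usual theory of quadratic forms over finite fields (cf.\ \cite[Ch.~5]{Lidl1996}): $T_{z}=0$ unless the linear part vanishes on $\mathrm{rad}\,Q_{z}$, in which case $|T_{z}|=3^{(n+\dim\mathrm{rad}\,Q_{z})/2}$. A computation of the bilinear form of $Q_{z}$ shows that $y\in\mathrm{rad}\,Q_{z}$ exactly when $z^{3^{j}}y+zy^{3^{j}}+z^{3^{n-j}}y^{3^{n-j}}=0$; one shows that for $z\neq 0$ the solution space of this linearized equation has dimension at most~$1$, so $|T_{z}|\le 3^{m+1}$, and it is here that $\gcd(d,3^{n}-1)=1$ together with the congruence $4k\equiv 1\pmod n$ (respectively $2(m+1)\equiv 1\pmod n$ for Case~(1)) is exploited to collapse the linearized polynomial. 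It then remains to evaluate the outer sum $\sum_{z}\omega^{\Tr(z^{d}-az)}T_{z}$ and show it always equals $0$ or $3^{n+1}$; this is done by rewriting it as the number of $\Fpn$-points of an auxiliary system of two equations (in the spirit of \thref{Lem2-B}) and computing that number using the arithmetic of $d$.

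Once $C_{d}(\tau)\in\{-1,\,-1+3^{m+1},\,-1-3^{m+1}\}$ is known, the distribution \eqref{Eq-3-3value} follows from the first few power moments. Writing $n_{+},n_{-},n_{0}$ for the numbers of shifts $\tau\in\{0,\dots,3^{n}-2\}$ with $C_{d}(\tau)$ equal to $-1+3^{m+1}$, $-1-3^{m+1}$, $-1$ respectively, one has $n_{+}+n_{-}+n_{0}=3^{n}-1$; \thref{Lem2}(iv) gives $\sum_{\tau}(C_{d}(\tau)+1)=3^{n}$, i.e.\ $3^{m+1}(n_{+}-n_{-})=3^{n}$; and \thref{Lem2}(v) with $t=0$ gives $\sum_{\tau}(C_{d}(\tau)+1)^{2}=3^{2n}$, i.e.\ $3^{2m+2}(n_{+}+n_{-})=3^{2n}$. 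Solving this linear system yields $n_{+}=\tfrac12(3^{n-1}+3^{m})$, $n_{-}=\tfrac12(3^{n-1}-3^{m})$ and $n_{0}=3^{n}-3^{n-1}-1$, exactly as stated.

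The main obstacle is the evaluation of the outer sum $\sum_{z}\omega^{\Tr(z^{d}-az)}T_{z}$ in the core step, which requires control of the joint behaviour of the form $Q_{z}$ and the value $\Tr(z^{d}-az)$ as $z$ runs over $\Fpn$; this is Dobbertin's argument \cite{Dobbertin2001} in Case~(1), reducing to solution counts for trinomials of small degree, and the argument of Dobbertin and of Katz \cite{Dobbertin2001,Katz2015a} in Case~(2), where the hypothesis $n\mid 4k-1$ is precisely the condition forcing the relevant Weil-type sum to the required value. A secondary technical point is that, because $n$ is odd and $3\equiv 3\pmod 4$, the quadratic Gauss sums occurring in $T_{z}$ are purely imaginary, so one must check that their imaginary parts cancel in pairs in the outer sum, consistently with $W_{d}(a)^{2}$ being a nonnegative real number (\thref{Lem2}(i)).
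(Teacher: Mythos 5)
Your preliminary reductions (replacing $2\cdot 3^{m}+1$ by the cyclotomically equivalent exponent $3^{m+1}+2$ via \thref{Lem2}, checking $\gcd(d,3^{n}-1)=1$, and extracting the distribution \eqref{Eq-3-3value} from the zeroth, first and second moments once the value set $\{-1,-1\pm 3^{m+1}\}$ is known) are fine, apart from a garbled line in the Case~(1) gcd argument ($(-2)^{2}=4$, so $3\equiv 4\pmod g$ gives $g=1$ outright, not ``$g\mid 2$''). But the heart of the theorem --- that $C_{d}(\tau)$ takes \emph{only} these three values, equivalently $W_{d}(a)^{2}\in\{0,3^{n+1}\}$ for all $a$ --- is not proved in your proposal. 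After the substitution $x=y+z$ you correctly identify each inner sum $T_{z}$ as a quadratic exponential sum, but the two decisive claims are left as assertions: (i) that the radical of $Q_{z}$ has dimension at most $1$ for every $z\neq 0$ (this needs a genuine argument about the linearized equation $zw^{3^{j}}+z^{3^{j}}w+z^{3^{n-j}}w^{3^{n-j}}=0$ under the congruences $2(m+1)\equiv 1$ or $4k\equiv 1\pmod n$, and you give none); and (ii) that the outer sum $\sum_{z}\omega^{\Tr(z^{d}-az)}T_{z}$ collapses to $0$ or $3^{n+1}$. Point (ii) is exactly the content of the theorem: a per-$z$ bound $|T_{z}|\le 3^{m+1}$ gives only $|W_{d}(a)|^{2}\le 3^{n+m+1}$, and no mechanism for the required cancellation is offered. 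Your appeal to ``an auxiliary system of two equations in the spirit of \thref{Lem2-B}'' does not apply here, since that lemma controls moments summed over all $a$, not the individual value $W_{d}(a)^{2}$ at a fixed $a$. In the end you yourself label this step ``the main obstacle'' and delegate it to \cite{Dobbertin2001,Katz2015a}, so the proposal is a framing of the problem rather than a proof.

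For comparison, the argument the chapter describes is different and does not fix $a$ at all: for Case~(1), Dobbertin--Helleseth--Kumar--Martinsen prove the divisibility $3^{m+1}\mid (C_{d}(\tau)+1)$ for every $\tau$ and compute the exact fourth power moment $\sum_{\tau}(C_{d}(\tau)+1)^{4}=3^{3n+1}$; writing $C_{d}(\tau)+1=3^{m+1}c_{\tau}$, the second and fourth moments give $\sum_{\tau}c_{\tau}^{2}=\sum_{\tau}c_{\tau}^{4}=3^{n-1}$, hence $c_{\tau}\in\{0,\pm 1\}$, which is precisely the three-valuedness. Case~(2) is not amenable to a uniform treatment of the kind you suggest: its proof by Katz and Langevin \cite{Katz2015a} uses trilinear forms, point counts on curves via multiplicative character sums, divisibility of Gauss sums and a graph-theoretic argument, none of which is captured by the quadratic-form decomposition of $W_{d}(a)^{2}$. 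If you want to pursue your route, the work you must actually do is to establish (i) and, far more seriously, (ii); as written, the proposal has a genuine gap at the theorem's central claim.
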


\thref{Th-3valued-3} Case (1) is known as the ternary Welch case. Dobbertin, Helleseth, Kumar and Martinsen \cite{Dobbertin2001}  proved this case by showing
that 
$$3^{m+1} | (C_d(\tau)+1) \text{ and }
\sum_{\tau=0}^{3^n-2}(C_d(\tau)+1)^4 = 3^{3n+1},$$ and they also recognized by experimental evidence Case 2,
which is given in the form that $d=2\cdot 3^r + 1$ with $n | (4r+1)$.
By \thref{Lem2} (3) the decimation is changed to $d=3^k+2$ with $k=n-r$.
The conjecture was proved by Katz and  Langevin \cite{Katz2015a}, where authors employed diverse methods involving trilinear forms, counting points on curves via multiplicative character sums, divisibility properties of Gauss sums, and graph theory.

For general odd prime, known decimations having three-valued crosscorrelation are
due to Trachtenberg \cite{Trachtenberg} for odd $n$ and the generalizations are due to Helleseth \cite{Helleseth1976}.

\begin{theorem}\thlabel{Th-3valued-p}
		Let $p$ be an odd prime and $C_d(\tau)$ be the crosscorrelation function defined as in \eqref{Eq-WeiSum}. Let $d=(p^{2k}+1)/2$ or $d=p^{2k}-p^k+1$, where $p>2$ and $n/\gcd(n,k)$ is odd , then $C_d(\tau)$ takes on three values \cite{Trachtenberg,Helleseth1976}:
				\begin{equation}\label{Eq-3-3value-p}
			\begin{array}{rcl}
			-1+p^{{(n+e)}/2} & \quad\text{ occurs } \quad& \frac{1}{2}(p^{n-e}+p^{(n-e)/2})\text{ times,} \\
			-1-p^{{(n+e)}/2} & \quad\text{ occurs } \quad& \frac{1}{2}(p^{n-e}-p^{(n-e)/2})\text{ times,} \\
			-1 & \quad\text{ occurs } \quad& p^n-p^{n-e}-1 \text{ times,} 
			\end{array}
			\end{equation}		
			where $e=\gcd(n, k)$.
\end{theorem}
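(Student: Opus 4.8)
The plan is to push everything through the Walsh transform and end up with Gauss sums of quadratic forms, exactly as in the quadratic‑function calculations mentioned for the Gold/Kasami cases. By \eqref{Eq-CC-WT} we have $C_d(\tau)=-1+W_f(\alpha^\tau)$ with $f(x)=\Tr(x^d)$, so it suffices to show that $W_f(a)\in\{0,\pm p^{(n+e)/2}\}$ for every $a\in\Fpn^*$ (recall $W_f(0)=0$, since $x\mapsto x^d$ is a permutation). Throughout set $e=\gcd(n,k)$. The hypothesis that $n/e$ is odd forces $n/\gcd(jk,n)$ to be odd for every $j\ge 1$ (it divides $n/e$), whence $\gcd(p^{jk}+1,p^n-1)=2$ for all $j\ge 1$; in particular $(p^{2k}+1)/2$ and $p^{2k}-p^k+1$ are indeed coprime to $p^n-1$, and every cyclic subgroup of $\Fpn^*$ whose order divides $p^{jk}-1$ with $j$ even has odd order (so contains no element of order $2$).

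\emph{Step 1 (a doubling trick).} For the Gold‑type exponent put $w=2$ and $N=wd=p^{2k}+1$; for the Kasami--Welch‑type exponent put $w=p^{k}+1$ and $N=wd=p^{3k}+1$ (using $(p^{k}+1)(p^{2k}-p^{k}+1)=p^{3k}+1$). In both cases $\gcd(w,p^n-1)=2$, so $x\mapsto x^{w}$ is a two‑to‑one map of $\Fpn^{*}$ onto the set $Q^{*}$ of nonzero squares, and, for a fixed nonsquare $\beta$, $x\mapsto \beta x^{w}$ is a two‑to‑one map of $\Fpn^{*}$ onto the set $N^{*}$ of nonsquares. Splitting $W_f(a)=\sum_{z\in\Fpn}\omega^{\Tr(z^{d}-az)}$ according to whether $z$ is $0$, a nonzero square, or a nonsquare, and substituting $z=x^{w}$ resp.\ $z=\beta x^{w}$ (so that $z^{d}=x^{N}$ resp.\ $z^{d}=\beta^{d}x^{N}$), one obtains
\[
2\sum_{z\in Q^{*}}\omega^{\Tr(z^{d}-az)}=T_{a}-1,
\qquad
2\sum_{z\in N^{*}}\omega^{\Tr(z^{d}-az)}=T'_{a}-1,
\]
where $T_{a}=\sum_{x\in\Fpn}\omega^{\Tr(x^{N}-ax^{w})}$ and $T'_{a}=\sum_{x\in\Fpn}\omega^{\Tr(\beta^{d}x^{N}-a\beta x^{w})}$. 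Since $\Tr(x^{N})=\Tr(x^{p^{jk}}x)$ and $\Tr(cx^{w})$ are quadratic forms on $\Fpn$ regarded as an $n$‑dimensional $\Fp$‑space, $T_{a}$ and $T'_{a}$ are Gauss sums of quadratic forms, and combining the three pieces gives the clean identity $W_f(a)=\tfrac12\bigl(T_{a}+T'_{a}\bigr)$.

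\emph{Step 2 (ranks of the quadratic forms — the crux).} The symmetric bilinear form attached to $Q_{a}(x)=\Tr(x^{N}-ax^{w})$ is $B_{a}(x,y)=\Tr\bigl(y\,L_{a}(x)\bigr)$ for an explicit $\Fp$‑linearized polynomial $L_{a}$; for $w=2$ one gets $L_{a}(x)=x^{p^{2k}}+x^{p^{-2k}}-2ax$, and similarly for $w=p^{k}+1$. Using that $n/e$ is odd, I would show that the radical $\ker L_{a}$ has $\Fp$‑dimension $0$ or $e$ for every $a$: writing $u=x^{p^{2k}-1}$, the nonzero solutions of $L_{a}(x)=0$ correspond to solutions, in the odd‑order subgroup $\{x^{p^{2k}-1}:x\in\Fpn^{*}\}$, of an equation of the shape $\tfrac12\bigl(u^{p^{2k}}+u^{-1}\bigr)=a^{p^{2k}}$, and one bounds the number of these via the $\Fp$‑linearity (equivalently, the Frobenius‑equivariance) of $L_{a}$ together with the odd‑order fact from the first paragraph (no element of order $2$, which in particular kills $x^{p^{4k}}=-x$, so $\Tr(x^{N})$ itself is nondegenerate). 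The same analysis applies to $Q'_{a}$ and, with heavier book‑keeping, to the Kasami--Welch exponent. Feeding the resulting rank (either $n-e$, or $n$ when $a=0$) into the standard evaluation of Gauss sums of quadratic forms over $\Fp$ (cf.\ \cite[Ch.~6]{Lidl1996}) gives $T_{a},T'_{a}\in\{0,\pm p^{(n+e)/2}\}$ in the rank‑$(n-e)$ case, and pure quadratic‑Gauss‑sum contributions of absolute value $p^{n/2}$ when $Q_{a}$ is nondegenerate; since $W_f(a)=\tfrac12(T_{a}+T'_{a})$ is a rational integer, the nondegenerate contributions must cancel in pairs (this is already forced by $W_f(0)=0$). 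One concludes $W_f(a)\in\{0,\pm p^{(n+e)/2}\}$, i.e.\ $C_d(\tau)\in\{-1,\,-1\pm p^{(n+e)/2}\}$, which is the content attributed to Trachtenberg and Helleseth \cite{Trachtenberg,Helleseth1976}.

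\emph{Step 3 (multiplicities).} Let the values $-1+p^{(n+e)/2}$, $-1-p^{(n+e)/2}$, $-1$ occur $N_{1},N_{2},N_{3}$ times. Then $N_{1}+N_{2}+N_{3}=p^{n}-1$, while \thref{Lem2}(iv) gives $\sum_{\tau}C_{d}(\tau)=1$ and \thref{Lem2}(v) with $t=0$ gives $\sum_{\tau}C_{d}(\tau)^{2}=p^{2n}-p^{n}-1$. These three linear relations force $N_{1}-N_{2}=p^{(n-e)/2}$ and $N_{1}+N_{2}=p^{n-e}$, hence $N_{1}=\tfrac12(p^{n-e}+p^{(n-e)/2})$, $N_{2}=\tfrac12(p^{n-e}-p^{(n-e)/2})$, $N_{3}=p^{n}-p^{n-e}-1$, which is exactly the distribution \eqref{Eq-3-3value-p}. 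The main obstacle is Step 2: the uniform estimate $\dim_{\Fp}\ker L_{a}\le e$ is precisely where the hypothesis $n/\gcd(n,k)$ odd is used in an essential way, and for $d=p^{2k}-p^{k}+1$ one must additionally verify carefully that $x\mapsto x^{p^{k}+1}$ is still exactly two‑to‑one onto the squares and that $\Tr(x^{p^{3k}+1}-ax^{p^{k}+1})$ has the asserted rank.
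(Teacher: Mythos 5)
Your overall architecture is the classical Trachtenberg--Helleseth route that the chapter alludes to (the chapter itself gives no proof of \thref{Th-3valued-p}; it only cites \cite{Trachtenberg,Helleseth1976}): Step 1 (splitting over squares and nonsquares to get $W_f(a)=\tfrac12(T_a+T'_a)$ with $T_a,T'_a$ Gauss sums of the quadratic forms $\Tr(x^N-ax^w)$, using $\gcd(p^k+1,p^n-1)=2$ for the Kasami--Welch exponent) is sound, and Step 3 (first two power moments pin down the multiplicities once the three values are known) is correct and is exactly the remark made after \thref{Lem2}. The genuine gap is the key claim of Step 2, which is false as stated. The radical of $Q_a(x)=\Tr(x^{p^{2k}+1}-ax^2)$ is the kernel of $L_a(x)=x^{p^{2k}}+x^{p^{n-2k}}-2ax$; composing with the Frobenius $x\mapsto x^{p^{2k}}$ turns $L_a$ into a degree-two polynomial in that Frobenius, which generates ${\rm Gal}(\Fpn/\mathbb{F}_{p^e})$ because $\gcd(2k,n)=e$, so the kernel is an $\mathbb{F}_{p^e}$-space of dimension $0$, $e$ \emph{or} $2e$ --- and $2e$ really occurs. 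Concretely, take $p=3$, $k=1$, $n=3$ (so $e=1$, $n/e=3$ odd, $d=5$) and $a=1$: then $-2a\equiv 1\pmod 3$ and $L_1(x)=x^9+x^3+x=\Tr(x)$, whose kernel is the trace-zero hyperplane of $\Fp$-dimension $2=2e$. So the ranks are not ``$n-e$, or $n$ when $a=0$''; at nonzero $a$ they can be $n$, $n-e$ or $n-2e$, and $|T_a|$ can be $p^{(n+2e)/2}$.

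Your fallback does not repair this. Rationality of $W_f(a)$ is true here but needs an argument ($d\equiv 1\bmod{p-1}$ plus Galois invariance, not merely reality), and even granting it, integrality does not exclude the unwanted values: when $n$ is even, $e$ is even as well, so $\tfrac12\bigl(\pm p^{(n+e)/2}\pm p^{n/2}\bigr)$ and $\pm p^{(n+2e)/2}$ are integers outside the claimed spectrum; and ``forced by $W_f(0)=0$'' proves nothing, since $W_f(0)=0$ only expresses that $x\mapsto x^d$ is a permutation and says nothing about cancellation at other $a$. The real content of the classical proof is precisely the joint control of the pair $(T_a,T'_a)$: one must show that the ranks and types (discriminants) of the two forms $\Tr(x^N-ax^w)$ and $\Tr(\beta^dx^N-a\beta x^w)$ are correlated so that the half-sum always collapses to $0$ or $\pm p^{(n+e)/2}$ --- in the example above both $T_1$ and $T'_1$ are purely imaginary of modulus $3^2\sqrt3$ and must cancel exactly. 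Nothing in your outline establishes this correlation (via \cite[Ch.~6]{Lidl1996}-style determinant computations or otherwise), and the Kasami--Welch exponent, where the linearized polynomial attached to $\Tr(x^{p^{3k}+1}-ax^{p^k+1})$ is more complicated still, is dismissed with ``heavier book-keeping''. As it stands, Step 2 is a missing proof built on an incorrect lemma, so the proposal does not yet prove the theorem.
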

It is extremely difficult to find new decimations that give rise to three-valued crosscorrelation spectra. 
It's conjectured, up to equivalence (by considering the cyclotomic coset and the inverse of $d$ modulo $p^n-1$), that  \thref{Th-3valued-2}, \thref{Th-3valued-3}, and \thref{Th-3valued-p} cover all such decimations.
As a matter of fact, when $n=2^i$ for some positive integer $i\geq 2$, there
are no known decimations having three-valued crosscorrelation. Helleseth conjectured these cases in 1970s \cite{Helleseth1976} and recently Katz \cite{Katz-2012}
resolved the conjecture for the binary case. The conjecture remains open for odd primes $p>3$.

\smallskip 

\noindent\textbf{Open Problem 1.} Show that $C_d(\tau)$ takes on at least four values when $n=2^i$ with $i\geq 2$ for odd primes $p$.

\smallskip 

\noindent\textbf{Open Problem 2.} Show that $C_d(\tau)$ takes on at least four values if $d$ is not included in the known classes of three-valued decimations and their equivalent ones up to the cyclotomic cosets and inverse modulo $p^n-1$.

\section{Four-valued crosscorrelation}

One of the main contributions leading to decimations with four-valued crosscorrelation is due to Niho. In his thesis \cite{Niho} Niho studied a class of decimations of the form $d=s(2^m-1)+1$ for m-sequences with period $2^{2m}-1$.
This type of decimations were later generalized to the odd prime $p$ case, namely, the decimation is defined as $d=s(p^m-1)+1$ over the finite field $\mathbb{F}_{p^{2m}}$. 
Note that the power functions $x^d$ with such decimations are linear over the subfield $\mathbb{F}_{p^{m}}$, which is a nice feature that leads to many interesting results.
The decimation of the form $d=s(p^m-1)+1$ was later referred to as \textit{Niho exponent} in the construction of low-correlated sequences, highly nonlinear functions and permutations.
In the context of crosscorrelation, the main idea of using Niho exponent $d=s(p^m-1)+1$ is to reduce the problem to computing the number of solutions of certain equations in a subset of $\Fpn$.
More concretely, the transformation and its proof are recalled below \cite{Rosendahl}.

\begin{lemma}\thlabel{Lem3}
	Let $n=2m$ and $d=s(p^m-1)+1$ with a nonnegative integer $s$. 
	Let $C_d({\tau})$ be the crosscorrelation of the m-sequence $\{s_t\}$ and its $d$-decimation $\{s_{dt}\}$, where $s_t=\Tr(\alpha^t)$ and 
	$\alpha$ is a primitive element of $\Fpn$. 
	Then for each $a=\alpha^{\tau} \in \Fpn^*$, one has 
	$$W_d(a)=C_d(\tau)+1=(N(a)-1)p^m,$$  where 
	$N(a)$ is the number of distinct solutions to the equation
	\begin{equation}\label{Eq-Niho}
	x^{2s-1}-a x^s- \bar{a}x^{s-1} + 1 = 0
	\end{equation}
	in the unit circle  $U = \{ x\in \Fpn \,|\, x\cdot \bar{x} = 1\}$, where $\bar{x}=x^{p^m}$ for any $x\in \Fpn$.
\end{lemma}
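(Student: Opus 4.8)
\emph{Proof proposal.} The plan is to start from the identity $W_d(a)=C_d(\tau)+1$ already recorded in \eqref{Eq-CC-WT} and to evaluate $W_d(a)=\sum_{x\in\Fpn}\omega^{\Tr(x^d-ax)}$ directly, exploiting that $x^d$ behaves ``radially'' because $d\equiv 1\pmod{p^m-1}$. First I would isolate the term $x=0$ (which contributes $1$) and partition $\Fpn^*$ into the fibres of the group homomorphism $x\mapsto x^{p^m-1}$. Since $\bigl(x^{p^m-1}\bigr)^{p^m+1}=x^{p^n-1}=1$, this map sends $\Fpn^*$ into the unit circle $U$; its kernel is exactly $\Fpm^*$ (the roots of $X^{p^m-1}-1$), so it is surjective onto $U$ and each fibre is a coset $x_0\Fpm^*$ of size $p^m-1$. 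The payoff of this decomposition is that on the fibre through a chosen representative $x_0$ the element $w:=x_0^{p^m-1}\in U$ is fixed, whence $x^d=x^{s(p^m-1)+1}=w^s x$ for every $x$ in that fibre.

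Next I would evaluate the inner sum over one fibre. Writing $x=\mu x_0$ with $\mu\in\Fpm^*$ gives $x^d-ax=\mu\,x_0(w^s-a)$, so the fibre sum is $\sum_{\mu\in\Fpm^*}\omega^{\Tr(\mu\,x_0(w^s-a))}$. Using trace transitivity $\Tr=\Tr_{\Fpm/\Fp}\circ\Tr_{\Fpn/\Fpm}$ together with $\Fpm$-linearity of the relative trace, this becomes $\sum_{\mu\in\Fpm^*}\omega^{\Tr_{\Fpm/\Fp}(\mu\,T(w))}$, where $T(w)=x_0(w^s-a)+\overline{x_0(w^s-a)}\in\Fpm$. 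By orthogonality of the additive characters of $\Fpm$, this fibre sum equals $p^m-1$ when $T(w)=0$ and $-1$ otherwise. Summing over the $p^m+1$ elements $w\in U$ and adding back the $x=0$ term, a short count collapses everything to $W_d(a)=(K-1)p^m$, where $K=\#\{w\in U : T(w)=0\}$.

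It then remains to rewrite the condition $T(w)=0$ in the form of \eqref{Eq-Niho}. Here I would use $\overline{x_0}=x_0^{p^m}=x_0\cdot x_0^{p^m-1}=w\,x_0$ and $\overline{w}=w^{-1}$ (since $w\in U$) to obtain $T(w)=x_0\bigl(w^s-a+w^{1-s}-w\,\overline{a}\bigr)$; dividing out $x_0\neq 0$ and substituting $u=w^{-1}$ (a bijection of $U$) turns $T(w)=0$ into $u^{2s-1}-a\,u^{s}-\overline{a}\,u^{s-1}+1=0$, which is exactly the equation in the statement. Hence $K=N(a)$ and $W_d(a)=(N(a)-1)p^m$, giving $C_d(\tau)+1=(N(a)-1)p^m$. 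The only genuinely delicate points are making the fibre decomposition of $\Fpn^*$ precise (surjectivity onto $U$, coset structure, common fibre size $p^m-1$) and keeping the conjugation bookkeeping straight so that the final equation lands on \eqref{Eq-Niho} rather than its reciprocal; the character-sum evaluation itself is routine orthogonality, and note that no case distinction between $p=2$ and odd $p$ is needed.
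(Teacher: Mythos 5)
Your proof is correct and follows essentially the same route as the paper: both decompose $\Fpn^*$ into the cosets of $\Fpm^*$ (the paper via representatives $Y=\{\alpha^0,\dots,\alpha^{p^m}\}$, you via the fibres of $x\mapsto x^{p^m-1}$), exploit that $x^d$ acts as multiplication by $w^s$ on each coset, apply orthogonality over $\Fpm$ through the relative trace, and reduce to counting zeros of \eqref{Eq-Niho} on $U$. The only differences are cosmetic reparametrizations (your substitution $u=w^{-1}$ versus the paper's $x=y^{p^m-1}$), so no further comment is needed.
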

\begin{proof}
	Let $q=p^m$ and denote $Y=\{\alpha^0,\alpha^1,\ldots,\alpha^q\}$.
	Then each element of $\Fpn^*$ is uniquely represented as $h y$ for some $h \in  \Fq^*$ and $y \in Y$. Note that $h^d=h$ for every $h \in \Fpm$. Thus we can write the crosscorrelation as 
	\begin{align*}
		W_d(a)& =1+C_d(\tau) \\
		& = \sum_{x\in \Fpn} \omega^{\Tr(x^d - a x)} 
		\\& = 1+\sum_{h\in \Fpm^*}\sum_{y\in Y}\omega^{\Tr(h(y^d  - ay))} 
		\\& = -p^m + \sum_{h\in \Fpm}\sum_{y\in Y}\omega^{\Tr^m_1(h\Tr_m^n(y^d  - ay))} 
		\\& = -p^m + \sum_{y\in Y}\sum_{h\in \Fpm}\omega^{\Tr^m_1(h(y^d + \bar{y}^d  - ay - \bar{a}\bar{y}))} 
		\\&=  p^m(Z(a)-1),
	\end{align*} where $Z(a)$ is the number of distinct solutions of $y^d + \bar{y}^d  - ay - \bar{a}\bar{y}=0$ in $Y$.
	Note that $\bar{y}^d = y^{s\cdot (q^2-q) + q} = y^{-(s-1)(q-1)+1}$ for any $y\in Y$. Since $Z(a)$ counts the number of $y\in Y$ satisfying $y^d + y^{dq}  - ay - \bar{a}y^q=0$,
	which (by dividing by $y^{s(q-1)+1}$) is the same as the number of $y \in Y$ satisfying $$y^{-(2 s-1)(q-1)}-a y^{-s(q-1)} - \bar{a} y^{-(s-1)(q-1)} + 1 =0.$$
	Taking $x=y^{q-1}$, one easily sees that $Z(a)$ is the same as the number of distinct $x\in \{y^{q-1}\,|\, y \in Y\}$ satisfying the equation 
	$$x^{2s-1}-a x^s- \bar{a}x^{s-1} + 1 = 0.$$
That is to say, $Z(a)$ is equal to $N(a)$. The desired conclusion follows.
\end{proof}

The next theorem provides a list, in historical order, of all the known decimations that were shown to give four-valued crosscorrelation for binary m-sequences. An important observation is that 
all the results in (1)-(4) are covered by the last case (5), which is clearly a Niho-type decimation.

\begin{theorem}\thlabel{Th-4valued-2}
	Let $v_2(i)$ be the highest power of $2$ dividing the integer $i$ and let $n=2m$. For the binary case $p=2$, the cross-correlation $C_d(\tau)$ takes on four values for the following decimations:
	\begin{enumerate}
		\item $d=2(2^m-1)+1$ where $m$ is even \cite{Niho};
		\item let $d=(2^{m/2}+1)(2^m-1) + 2$ where $m$ is even\cite{Niho};
		\item $d=\frac{2^{(m+1)t}-1}{2^t-1}$ where $m$ is even and $0<t<m$ with $\gcd(t, n)=1$ \cite{Dobbertin1998};
		\item $d=\frac{2^m-1}{2^t-1}(2^m-1)+2$ where $2t$ divides $m$ \cite{Helleseth2005};
		\item $d=s(2^m-1)+1$ where $s\equiv 2^r(2^r\pm 1)^{-1} \mod{2^m+1}$, where $v_2(r)<v_2(m)$ \cite{Dobbertin-Felke-Helleseth-Rosendahl}.
	\end{enumerate}
	Furthermore, the correlation distribution for these decimations can be unified as follows:
	\begin{equation}\label{Eq-4valued-2-dist}
		\begin{array}{rcl}
		  -1 - 2^m & \quad\text{ occurs }\quad & \frac{2^{n+r_1-1}-2^{m+r_1-1}}{2^{r_1}+1} \text{ times, }\\
		  -1  & \quad\text{ occurs }\quad & 2^{n-r_1}-2^{m-r_1} \text{ times, }\\
		  -1 + 2^m & \quad\text{ occurs }\quad & \frac{2^{n+r_1-1}-2^n+2^{m+r_1-1}}{2^{r_1}-1} \text{ times, }\\
		  -1 + 2^{r_1+m} & \quad\text{ occurs }\quad & \frac{2^{n}-2^m}{2^{3r_1}-2^{r_1}} \text{ times, }\\
		\end{array}
	\end{equation}
	where $r_1=\gcd(r, m)$ and it takes on values $1, \frac{m}{2}, 1, t$  for Cases (1)-(4),  respectively.
\end{theorem}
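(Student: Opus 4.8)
The plan is to derive everything from the Niho reduction in \thref{Lem3}. The first observation is that all five families are instances of case (5). Using parts (ii)--(iii) of \thref{Lem2} to replace $d$ by $2^{j}d$ or by $d^{-1}\bmod (2^{n}-1)$ when convenient, a direct manipulation of exponents shows that each decimation in (1)--(4) can be written as $d=s(2^{m}-1)+1$ with $s\equiv 2^{r}(2^{r}\pm1)^{-1}\bmod (2^{m}+1)$ for a suitable $r$ satisfying $v_{2}(r)<v_{2}(m)$, and that $r_{1}:=\gcd(r,m)$ equals $1,\ m/2,\ 1,\ t$ in cases (1)--(4) respectively; so it suffices to treat case (5). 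Writing $a=\alpha^{\tau}$ and applying \thref{Lem3}, the crosscorrelation value at shift $\tau$ is $C_{d}(\tau)=W_{d}(a)-1=(N(a)-1)2^{m}-1$, where $N(a)$ is the number of solutions in the unit circle $U$ of
\begin{equation*}
x^{2s-1}-ax^{s}-\bar a\,x^{s-1}+1=0 .
\end{equation*}
Thus the theorem is equivalent to the claim that, as $a$ runs over $\Fpn^{*}$, $N(a)\in\{0,1,2,2^{r_{1}}+1\}$ (so that $(N(a)-1)2^{m}-1$ runs over $\{-1-2^{m},-1,-1+2^{m},-1+2^{r_{1}+m}\}$), together with the computation of the four multiplicities.

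The technical heart is establishing the four values of $N(a)$. From $s(2^{r}\pm1)\equiv 2^{r}\bmod (2^{m}+1)$ one extracts $s-1\equiv\mp(2^{r}\pm1)^{-1}$ and $2s-1\equiv(2^{r}\mp1)(2^{r}\pm1)^{-1}$ modulo $2^{m}+1$. Since every $x\in U$ satisfies $\bar x=x^{-1}$ and $x^{2^{m}+1}=1$, I would substitute $x=w^{2^{r}\pm1}$ — a bijection of $U$ because the hypothesis $v_{2}(r)<v_{2}(m)$ forces $\gcd(2^{r}\pm1,2^{m}+1)=1$ — so as to clear the denominator $2^{r}\pm1$ from all three exponents; the defining equation then turns into an equation for $w\in U$ whose left-hand side factors, up to a monomial, as a $2^{r_{1}}$-linearized polynomial in one variable times a norm-type factor governed by $w^{2^{r_{1}}+1}$. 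Because $v_{2}(r)<v_{2}(m)$ also forces $2r_{1}\mid m$, one has $\gcd(2^{r_{1}}+1,2^{2m}-1)=2^{r_{1}}+1$, so such an equation has either no solution or a solution set equal to (a coset of) an $\mathbb{F}_{2^{r_{1}}}$-subspace, possibly enlarged by the $2^{r_{1}}+1$ roots of the norm factor; a separate treatment of the $+$ and $-$ cases and of the degenerate subcases then gives $N(a)\in\{0,1,2,2^{r_{1}}+1\}$, with all four values attained. This exponent bookkeeping — essentially the content of \cite{Dobbertin-Felke-Helleseth-Rosendahl} — is where I expect virtually all of the difficulty to lie: tracking how the congruence on $s$ shapes the transformed equation, handling the $\pm$ cases, and using the valuation hypothesis in exactly the right place are all delicate, and the appearance of the non-power value $2^{r_{1}}+1$ forces a careful separation of the linearized and the norm contributions.

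Finally, the multiplicities follow from the power moments. Let $n_{1},n_{2},n_{3},n_{4}$ count the $a\in\Fpn^{*}$ with $N(a)=0,1,2,2^{r_{1}}+1$. The trivial $0$th moment gives $n_{1}+n_{2}+n_{3}+n_{4}=2^{n}-1$, and since $W_{d}(0)=0$ and $W_{d}(a)=(N(a)-1)2^{m}$ for $a\neq0$, parts (1)--(3) of \thref{Lem2-B} give
\begin{equation*}
-n_{1}+n_{3}+2^{r_{1}}n_{4}=2^{m},\qquad n_{1}+n_{3}+2^{2r_{1}}n_{4}=2^{2m},\qquad -n_{1}+n_{3}+2^{3r_{1}}n_{4}=2^{m}M_{1},
\end{equation*}
where $M_{1}=\#\{x\in\Fpn:(x+1)^{d}=x^{d}+1\}$. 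A short direct computation, again via the Niho reduction $x^{d}=\bar x^{s}x^{1-s}$ (one checks that all $x\in\Fpm$ satisfy the equation, since there $x^{d}=x$, and that no $x\notin\Fpm$ does), gives $M_{1}=2^{m}$, so the last right-hand side equals $2^{2m}$; solving the $4\times4$ linear system yields $n_{4}=(2^{n}-2^{m})/(2^{3r_{1}}-2^{r_{1}})$ and then successively $n_{1}=(2^{n+r_{1}-1}-2^{m+r_{1}-1})/(2^{r_{1}}+1)$, $n_{3}=(2^{n+r_{1}-1}-2^{n}+2^{m+r_{1}-1})/(2^{r_{1}}-1)$ and $n_{2}=2^{n-r_{1}}-2^{m-r_{1}}$ — each an integer precisely because $2r_{1}\mid m$. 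These are exactly the multiplicities in \eqref{Eq-4valued-2-dist}, and specializing $r_{1}\in\{1,m/2,1,t\}$ recovers the distributions for the decimations (1)--(4), completing the proof. (Alternatively, one can bypass $M_{1}$ by counting $n_{4}$ directly, i.e.\ the set of $a$ for which the transformed equation attains its maximal root count, and then use only the $0$th, $1$st and $2$nd moments.)
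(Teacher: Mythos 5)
Your proposal follows essentially the same route as the paper's sketch: reduce Cases (1)--(4) to the Niho case (5), apply \thref{Lem3}, transform \eqref{Eq-Niho} by a power substitution on the unit circle $U$, take the root count $\{0,1,2,2^{r_1}+1\}$ from \cite[Lem.~22]{Dobbertin-Felke-Helleseth-Rosendahl} (your ``linearized-times-norm'' description is only a heuristic for that lemma, but the paper defers to it in exactly the same way), and then pin down the multiplicities from $M_1=2^m$ together with the first three power moments of \thref{Lem2-B}. One concrete discrepancy: solving the linear system you wrote down does \emph{not} give $n_2=2^{n-r_1}-2^{m-r_1}$. Your values of $n_1,n_3,n_4$ are correct and sum with $n_4$ to $2^n-2^{n-r_1}+2^{m-r_1}$, so the $0$th moment $n_1+n_2+n_3+n_4=2^n-1$ forces $n_2=2^{n-r_1}-2^{m-r_1}-1$; equivalently, the entries of \eqref{Eq-4valued-2-dist} as displayed sum to $2^n$ rather than to the number $2^n-1$ of shifts, so the count for the value $-1$ carries an off-by-one (the original source states it with the extra $-1$). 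Apart from silently adopting that value instead of the one your own equations produce, the argument matches the paper's.
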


\noindent\textbf{Sketch of Proof.} From \thref{Lem3} it suffices to investigate the equation \eqref{Eq-Niho} for integers $s$ in \thref{Th-4valued-2}. It is easily seen that Case (5) covers Cases (1)-(4) with $s=2^{r}(2^r-1)^{-1}$ with $r=1, \frac{m}{2}, t, t$, respectively. It suffices to consider 
the case $s\equiv 2^r(2^r\pm 1)^{-1} \mod{2^m+1}$, for which the equation $x^{2s-1}-a x^s- \bar{a}x^{s-1} + 1 = 0$ can be transformed (after substitution of $z=x^{2^r\pm 1}$) as 
$$
z^{2^r\pm 1} + a z^{2^r} + \bar{a}z^{\pm 1} + 1 = 0.
$$  This gives
$$
z^{2^r+1} + a z^{2^r} + \bar{a}z + 1 = 0 \quad\text{ resp. }  az^{2^r+1} + z^{2^r} + z + \bar{a}= 0.
$$
The above equation were shown to have $0, 1, 2$ or $2^{r_1}+1$ solutions \cite[Lem. 22]{Dobbertin-Felke-Helleseth-Rosendahl}, which leads to four possible values in the crosscorrelation spectrum. Furthermore, the equation $(x+1)^d+x^d+1=0$
is proved to have solutions exactly in $\mathbb{F}_{2^m}$. This, combined with the first three power moments \thref{Lem2-B} (1)-(3) leads to the stated correlation distribution. 

\smallskip

There are numerical results that give decimations with four values that are not
explained by this list. However, one believes that Case (5) in \thref{Th-4valued-2} contains all four-valued cases when $d$ is of Niho type. The authors of \cite{Dobbertin-Felke-Helleseth-Rosendahl} stated the following conjecture.

\smallskip 

\noindent\textbf{Open Problem 3.} Any binary decimation of Niho type $d=s(2^m-1)+1$, $n=2m$ with four-valued crosscorrelation is covered
in the fifth decimation in \thref{Th-4valued-2}.

In the nonbinary case, the Niho-type decimation with $s=2$ also led to four-valued crosscorrelation, as recognized by Helleseth \cite{Helleseth1976}. 
Such four-valued decimations seem to be quite rare. More than three decades later, Zhang, Li, Feng and Ge found another ternary case \cite{Zhang2014}. The condition $\gcd(3,r)=1$ on the parameter $r$ for the decimation in \cite{Zhang2014} was generalized to arbitrary odd integers $r$ by Xia, Helleseth and Wu \cite{Xia2014}.
Below we recall the only known families having four-valued crosscorrelation for odd prime $p$. 

\begin{theorem}\thlabel{Th-4valued-p}
	For nonbinary cases, the known decimations having four-valued crosscorrelation $C_d(\tau)$ and the correlation distribution is given as follows:
	\begin{enumerate}
		\item let $d=2p^m-1$, where $n=2m$ and $p^m\not\equiv 2 \mod{3}$, then  $C_d(\tau)$  takes on four values \cite{Helleseth1976}: 
		\[\begin{array}{rcl}
			-1 - p^m & \quad\text{ occurs }\quad & \frac{1}{3}(p^n-p^m) \text{ times, }\\
			-1  & \quad\text{ occurs }\quad & \frac{1}{2}(p^n-p^m-2) \text{ times, }\\
			-1 + p^m & \quad\text{ occurs }\quad & p^m\text{ times, }\\
			-1 + 2p^{m} & \quad\text{ occurs }\quad & \frac{1}{6}(p^n-p^m) \text{ times;}
		\end{array}
		\]
		\item let $d=3^k+2$ or $d=3^{2k}+2$, where $n=3k$ and $k$ is odd, then $C_d(\tau)$ takes on four values \cite{Zhang2014,Xia2014}:
		\[\begin{array}{rcl}
		-1  & \quad\text{ occurs }\quad & 2\cdot 3^{3r-1}+3^{2r-1}-3^r-1\text{ times, }\\
		-1 + 3^{2r}  & \quad\text{ occurs }\quad & 3^r\text{ times, }\\
		-1 + 3^{\frac{3r+1}{2}} & \quad\text{ occurs }\quad & \frac{1}{2}(3^{3r-1}-3^{2r-1})\text{ times, }\\
		-1 - 3^{\frac{3r+1}{2}} & \quad\text{ occurs }\quad & \frac{1}{2}(3^{3r-1}-3^{2r-1})\text{ times.}
		\end{array}
		\]
	\end{enumerate}
\end{theorem}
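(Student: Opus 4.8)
The overall plan is the same in both cases. Using \eqref{Eq-CC-WT} we pass from the crosscorrelation to the Walsh value $W_d(a)=C_d(\tau)+1$ with $a=\alpha^{\tau}$, evaluate $W_d(a)$ by rewriting it as the number of solutions of an auxiliary equation, deduce that this number — hence $C_d(\tau)$ — takes only four values, and then read off the number of occurrences of each value from the low power moments. Concretely, once the four values are known, the four occurrences are the unique solution of a $4\times4$ linear system assembled from the trivial identity $\sum_\tau C_d(\tau)^0=p^n-1$ and the first three power moments $P_d^{(1)}=p^n$, $P_d^{(2)}=p^{2n}$, $P_d^{(3)}=p^{2n}M_1$ of \thref{Lem2-B}, the only genuinely new input being $M_1=\#\{x\in\Fpn\mid (x+1)^d-x^d=1\}$, which one computes by factoring $(x+1)^d-x^d-1$ explicitly. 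Checking afterwards that all four occurrences come out positive simultaneously confirms that the spectrum really is four-valued.

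In Case (1) the exponent is the Niho exponent $d=2p^m-1=2(p^m-1)+1$ with $s=2$, so \thref{Lem3} applies directly and gives $W_d(a)=(N(a)-1)p^m$, where $N(a)$ is the number of $x\in U=\{x\mid x\bar x=1\}$ with $x^3-ax^2-\bar ax+1=0$. A cubic has at most three roots, so $N(a)\in\{0,1,2,3\}$, which already produces the four candidate values $C_d(\tau)\in\{-1-p^m,\,-1,\,-1+p^m,\,-1+2p^m\}$ (one may note, though it is not needed, that $x\mapsto\bar x^{-1}$ permutes the roots of the cubic with fixed-point set $U$, which identifies $N(a)=2$ with a repeated root and $N(a)=0$ with an irreducible cubic). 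The hypothesis $p^m\not\equiv2\pmod3$ is exactly the condition $\gcd(d,p^n-1)=1$ — since $d\equiv1\pmod{p^m-1}$ and $d\equiv-3\pmod{p^m+1}$ — which both legitimises $d$ as a decimation and makes $\tfrac13(p^n-p^m)$ and $\tfrac16(p^n-p^m)$ integers. For $M_1$ one checks the identity $(x+1)\,\bigl((x+1)^d-x^d-1\bigr)=-x\,(x^{p^m-1}-1)^2$, whose distinct roots are precisely $\Fpm$, so $M_1=p^m$; substituting into the $4\times4$ system yields the stated distribution, all multiplicities positive.

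Case (2) is harder, since $d=3^k+2$ is not a Niho exponent, so $W_d(a)$ has to be evaluated by hand, and here I would exploit the cubic tower $\Fp\subset\mathbb{F}_{3^k}\subset\mathbb{F}_{3^{3k}}$. Write each $x\in\mathbb{F}_{3^{3k}}^*$ as $x=hv$ with $h\in\mathbb{F}_{3^k}^*$ and $v$ ranging over a transversal of $\mathbb{F}_{3^k}^*$; using $h^d=h^3$ for $h\in\mathbb{F}_{3^k}$ and the characteristic-$3$ identity $\Tr^k_1(hQ)=\Tr^k_1(h^3Q^3)$, the inner sum over $h$ collapses (after a cube-root substitution) to $-1$ or $3^k-1$, and one obtains
\[
W_d(a)=3^k\,(T(a)-3^k-1),\qquad
T(a)=\#\{v:\ \Tr^{3k}_k(v^d)=\Tr^{3k}_k(av)^3\},
\]
a count that lives on the norm-one subgroup $\{v\mid v^{1+3^k+3^{2k}}=1\}$, i.e.\ is the number of $\mathbb{F}_{3^k}$-points on an explicit cubic curve depending on $a$. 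The heart of the proof is to classify these curves: the reducible members give $T(a)=2\cdot3^k+1$, while the smooth ones are genus-one curves over $\mathbb{F}_{3^k}$ whose point counts are $3^k+1$ or $3^k+1\pm3^{(k+1)/2}$ — that is, they are supersingular with Frobenius trace $0$ or $\pm3^{(k+1)/2}$ — and ruling out every other trace inside the Hasse bound is precisely where $k$ odd is needed (it also makes $(3k+1)/2$ an integer). This gives $W_d(a)\in\{0,\,3^{2k},\,\pm3^{(3k+1)/2}\}$, hence $C_d(\tau)\in\{-1,\,-1+3^{2k},\,-1\pm3^{(3k+1)/2}\}$; the distribution then follows from \thref{Lem2-B}(1)--(3) together with $M_1=3^k$ (from $(x+1)^d-x^d-1=x(2x+1)(x^{3^k-1}-1)$, whose distinct roots are exactly $\mathbb{F}_{3^k}$), the first and third moments forcing the two values $\pm3^{(3k+1)/2}$ to occur equally often. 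The exponent $d=3^{2k}+2$ lies in the cyclotomic coset of $2\cdot3^k+1$, so by \thref{Lem2} it has the same distribution and is handled by the identical reduction with $v^d$ replaced by $v\,\sigma(v)^2$, $\sigma(v)=v^{3^k}$.

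The main obstacle is the curve classification in Case (2). For Case (1) the Niho reduction makes ``four-valued'' essentially automatic and the factorisation of $(x+1)^d-x^d-1$ is routine. In Case (2), after the reduction $W_d(a)=3^k(T(a)-3^k-1)$ one still has to understand an explicit family of cubic curves over $\mathbb{F}_{3^k}$ and prove that their smooth members are supersingular with the three admissible traces — and, in parallel, that each of the four crosscorrelation values actually occurs; carrying this out (by the methods of Zhang--Li--Feng--Ge and of Xia--Helleseth--Wu) is where the real work lies.
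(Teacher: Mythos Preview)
The paper is a survey and does not supply its own proof of \thref{Th-4valued-p}; it simply states the theorem with citations to \cite{Helleseth1976} and \cite{Zhang2014,Xia2014}. So there is no paper proof to compare against line by line, but your plan is entirely consistent with the methodology the paper lays out elsewhere.

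For Case~(1) your argument is correct and is precisely the Niho reduction via \thref{Lem3} that the paper uses in its sketch for \thref{Th-4valued-2}: with $s=2$ the equation \eqref{Eq-Niho} is the cubic $x^{3}-ax^{2}-\bar a x+1=0$, so $N(a)\in\{0,1,2,3\}$ and the four values drop out. Your factorisation $(x+1)\bigl((x+1)^{d}-x^{d}-1\bigr)=-x(x^{p^{m}-1}-1)^{2}$ is right, gives $M_{1}=p^{m}$, and feeding this into the $0$th through $3$rd power moments from \thref{Lem2-B} produces exactly the stated distribution. Your remark that $p^{m}\not\equiv 2\pmod{3}$ is equivalent to $\gcd(d,p^{n}-1)=1$ is also correct.

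For Case~(2) your subfield decomposition is sound: writing $x=hv$ with $h\in\mathbb{F}_{3^{k}}^{*}$ and $v$ on the norm-one transversal, the Frobenius trick $\Tr^{k}_{1}(hA)=\Tr^{k}_{1}(h^{3}A^{3})$ indeed collapses the inner sum and yields $W_{d}(a)=3^{k}\bigl(T(a)-3^{k}-1\bigr)$. Your factorisation $(x+1)^{d}-x^{d}-1=x(2x+1)(x^{3^{k}-1}-1)$ in characteristic~$3$ is correct and gives $M_{1}=3^{k}$. You are also right that $3^{2k}+2$ lies in the cyclotomic coset of $2\cdot 3^{k}+1$ (multiply by $3^{k}$), so \thref{Lem2}~(iii) reduces that exponent to the same analysis. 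The one substantive gap you yourself flag is the classification of the values of $T(a)$ --- showing that the associated curves are supersingular with trace in $\{0,\pm 3^{(k+1)/2}\}$ and that the degenerate case gives $T(a)=2\cdot 3^{k}+1$. That is genuinely the hard step, and the paper offers no shortcut; it defers entirely to \cite{Zhang2014,Xia2014}, whose arguments (quadratic-form ranks and character-sum estimates rather than a literal curve-by-curve classification) you would need to reproduce to complete the proof.
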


For Niho-type decimation $d=s(2^m-1)+1$, $n=2m$, it was proved by Charpin \cite{Charpin2004} for $p=2$ and Helleseth, Lahtonen and Rosendahl \cite{Helleseth2007} for odd prime $p$
that $W_d(a)$ for $a\in \Fpn^*$ takes at least four values. Ranto and Rosendahl further characterized the possible four values for $W_d(a)$ as follows: for the binary case $W_d(a)$ takes values from
$\{-2^m, 0, 2^m, 2^{m+j}\}$ for some integer $j>0$ with $2j|m$; for the nonbinary case $W_d(a)$ takes values from $\{-p^m, 0, p^{m}, 2\cdot p^m\}$.
The result confirms the {\bf --1 conjecture} of Helleseth \cite{Helleseth1976} for Niho-type decimations.

\smallskip
\noindent\textbf{Open Problem 4 ($-1$ Conjecture \cite{Helleseth1976}).} Show that $-1$ always occurs as a value of $C_d(\tau)$
for any $d$ with $\gcd(d, p^n-1)=1$ and $d\equiv 1 \mod{p-1}$.

The above problem can be also stated as: for any $d$ with $\gcd(d, p^n-1)=1$ and $d\equiv 1 \mod{p-1}$, the Walsh transform $W_d(a)$ always vanishes at certain $a\in\Fpn^*$. 
It can be reformulated as a result of the number of common
solutions of a special equation system \cite{Helleseth2014a}.

\begin{lemma}\thlabel{Lem4}
	Let $q=p^n$ and $\alpha$ be a primitive element in $\Fq$. Let $N$ denote the number of solutions $x_i\in \Fpn$ of the equations system:
	\begin{equation*}
	\begin{array}{rrrrrrrrrrrrr}
	 x_0 &+& \alpha x_1 &+& \alpha^2 x_2 &+ \cdots +& \alpha^{q-2} x_{q-2} &=& 0,\\
	 	 x_0^d &+& x_1^d &+& x_2^d &+ \cdots +& x_{q-2}^d &=& 0.
	\end{array}
	\end{equation*}
	Then the $-1$ conjecture holds if and only if $N=q^{q-3}$ for any integer $d$ satisfying $\gcd(d, q-1)=1$ and $d\equiv 1 \mod{p-1}$.
\end{lemma}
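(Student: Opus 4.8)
The plan is to evaluate $N$ by additive-character orthogonality and to reduce everything to the single quantity $\prod_{a\in\Fpn^*}W_d(a)$, where $W_d$ is the Walsh transform of $\Tr(x^d)$. Write $\omega=\exp(2\pi i/p)$ and detect each of the two defining equations by a dual variable $u,v\in\Fpn$. Interchanging the order of summation, which is legitimate because the resulting summand factorises across the $q-1$ variables $x_0,\dots,x_{q-2}$, I would first obtain
\begin{equation*}
N \;=\; \frac{1}{q^2}\sum_{u\in\Fpn}\sum_{v\in\Fpn}\ \prod_{j=0}^{q-2}\ \sum_{x\in\Fpn}\omega^{\Tr\left(u\alpha^j x + v x^d\right)}.
\end{equation*}
The next step is to split this outer double sum according to whether $u$ and $v$ vanish.

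For $u=v=0$ every inner sum equals $q$, contributing $q^{q-1}/q^2=q^{q-3}$. If $u\neq 0$ and $v=0$, then each inner sum $\sum_x\omega^{\Tr(u\alpha^j x)}$ equals $0$, so the whole product vanishes. If $u=0$ and $v\neq 0$, then since $\gcd(d,q-1)=1$ the map $x\mapsto x^d$ permutes $\Fpn$, so each inner sum equals $\sum_y\omega^{\Tr(vy)}=0$ and the product vanishes again. For the remaining block $u,v\neq 0$, I would substitute $x=v^{-1/d}y$ (the $d$-th root being well defined because $\gcd(d,q-1)=1$), turning the inner sum into $\sum_y\omega^{\Tr\left(y^d-\left(-u\alpha^j v^{-1/d}\right)y\right)}=W_d\!\left(-u\alpha^j v^{-1/d}\right)$; as $j$ runs over $0,\dots,q-2$ the element $\alpha^j$, hence also $-u\alpha^j v^{-1/d}$, runs exactly once through $\Fpn^*$, so $\prod_{j=0}^{q-2}W_d\!\left(-u\alpha^j v^{-1/d}\right)=\prod_{a\in\Fpn^*}W_d(a)$, which is the same value for each of the $(q-1)^2$ pairs $(u,v)$ in this block. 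Collecting the four blocks yields
\begin{equation*}
N \;=\; q^{q-3} + \frac{(q-1)^2}{q^2}\prod_{a\in\Fpn^*}W_d(a).
\end{equation*}

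To finish, observe that $N$ and $q^{q-3}$ are integers and $\gcd\!\left((q-1)^2,q^2\right)=1$, so $\prod_{a\in\Fpn^*}W_d(a)$ is divisible by $q^2$, and $N=q^{q-3}$ holds if and only if this product is $0$, i.e.\ if and only if $W_d(a)=0$ for some $a\in\Fpn^*$. Since $C_d(\tau)=W_d(\alpha^\tau)-1$ by \eqref{Eq-CC-WT} and $\alpha^\tau$ exhausts $\Fpn^*$ as $\tau$ ranges over $0,\dots,q-2$, this last condition says exactly that $-1$ occurs as a value of $C_d(\tau)$; quantifying over all admissible $d$ gives the stated equivalence with the $-1$ conjecture. (The congruence $d\equiv 1\mod{p-1}$ is not needed for the argument itself; it enters only through the formulation of Open Problem~4.) The manipulations are routine character-sum bookkeeping; the parts that need care are the vanishing of the two degenerate blocks — in particular the $u=0,\ v\neq 0$ block, which dies precisely because $x\mapsto x^d$ is a permutation, equivalently $W_d(0)=0$ — the collapse of the product over $j$ to the full Walsh product independently of $(u,v)$, and the closing divisibility step that makes the equivalence a genuine ``if and only if''.
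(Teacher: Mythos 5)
Your argument is correct: the orthogonality setup, the factorization over the $q-1$ variables, the vanishing of the two degenerate blocks, and the evaluation of the block $u,v\neq 0$ all check out, and the resulting identity $N = q^{q-3} + \frac{(q-1)^2}{q^2}\prod_{a\in\Fpn^*}W_d(a)$ gives the stated equivalence at once, since the product vanishes exactly when $W_d(a)=0$ for some $a\in\Fpn^*$, i.e.\ when $C_d(\tau)=-1$ for some $\tau$. The paper states \thref{Lem4} without proof (deferring to the cited literature), and your character-sum derivation is precisely the standard argument behind that reformulation; the only superfluous element is the closing divisibility remark, since the ``if and only if'' already follows from the fact that $(q-1)^2/q^2\neq 0$.
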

The conjecture was based on numerical evidence and has been confirmed for all the exponents over finite fields $\mathbb{F}_{2^n}$ with $n$ up to $25$.
In addition, for the special case $d=p^n-2$ for which the condition $d\equiv 1 \mod{p-1}$ holds $p=2$ or $3$,
the crosscorrelation $C_d(\tau)$ becomes the Kloosterman sum. With the theory of elliptic curve, the corresponding Walsh transform $W_d(a)$
has been shown to take all values divisible by $4$ for $p=2$ and divisible by $3$ for $p=3$, respectively, in the range
$[1-2\sqrt{q}, 1+2\sqrt{q}]$. This confirms the conjecture for the case $d=p^n-2$ when $p=2$ or $3$. 

It appears difficult to find new infinite classes of decimations $d$ that have four-valued crosscorrelation. On the other hand, as mentioned previously, there do exist some specific four-valued decimations 
not covered in \thref{Th-4valued-2} and \thref{Th-4valued-p}. It would be interesting to find new infinite classes of such decimations.

\smallskip
\noindent\textbf{Open Problem 5.} Find new infinite classes of decimations having four-valued crosscorrelation for any prime $p$.

\section{Five-valued crosscorrelation}

We now consider five-valued decimations. 
Determining decimations that have exactly five-valued crosscorrelation 
is more difficult, especially when the correlation distribution cannot be settled. 
By far only a few families of such decimations with known correlation distributions
have been found. Some decimations are shown to have at most five values in their correlation spectra, but their correlation distributions are not yet determined.
This section recalls the known five-valued decimations and some relevant techniques used to determine the values of the crosscorrelation functions in the literature.
As before we start with the binary case.

\begin{theorem}\thlabel{Th-bin5value-1} \cite[Th. 4.8]{Helleseth1976}
	Let $n=2m$ and $d=2^m+3$. Then $C_d(\tau)$ takes on five values:
	\[
	\begin{array}{rcl}
	-1-2^m  & \quad\text{occurs} \quad & 2^{2m-1}-2^{m-3}(2^m+(-1)^{m+1}+1)-2^{m-1} \text{ times},\\
	-1 & \quad\text{occurs} \quad& \frac{1}{3} (2^{m}(2^m+(-1)^{m+1}+1)+2^{m-1}-3) \text{ times}, \\
	-1+2^m & \quad\text{occurs} \quad& 2^{2m-1}-2^{m-2}(2^m+(-1)^{m+1}+1) \text{ times}, \\	
	-1+2\cdot 2^m & \quad\text{occurs} \quad& 2^{m-1} \text{ times}, \\
	-1+3\cdot 2^m & \quad\text{occurs}\quad & \frac{1}{3}(2^{m-3}(2^m+(-1)^{m+1}+1)-2^{m-1}) \text{ times}.
	\end{array}
	\]
\end{theorem}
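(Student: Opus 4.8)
The plan is to translate everything into a Walsh transform via $C_d(\tau)=W_d(\alpha^{\tau})-1$ from \eqref{Eq-CC-WT}, and to exploit that $d=2^m+3$ behaves simply over the quadratic extension $\mathbb{F}_{2^{2m}}/\mathbb{F}_{2^m}$. Writing $\bar x=x^{2^m}$, one has $x^d=\bar x\,x^3$, so with $\Tr=\Tr_m\circ\Tr_m^n$ we get $\Tr(x^d)=\Tr_m\!\big(x\bar x(x+\bar x)^2\big)$ and $\Tr(ax)=\Tr_m(ax+\bar a\bar x)$. First I would fix a basis $\{1,\theta\}$ of $\mathbb{F}_{2^{2m}}/\mathbb{F}_{2^m}$ with $\theta^2+\theta+\delta=0$, $\Tr_m(\delta)=1$, and write $x=x_1+x_2\theta$, $a=a_1+a_2\theta$ with $x_i,a_i\in\mathbb{F}_{2^m}$. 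A short computation gives $x+\bar x=x_2$ and $x\bar x=x_1^2+x_1x_2+\delta x_2^2$, turning $W_d(a)$ into a double character sum over $(x_1,x_2)\in\mathbb{F}_{2^m}^2$.

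The decisive step is to carry out the $x_1$-summation. Using $\Tr_m(y^2)=\Tr_m(y)$, the $x_1$-dependent part of the exponent collapses to $\Tr_m\!\big(x_1(x_2^3+x_2+a_2)\big)$, so the inner sum over $x_1$ is $0$ unless $x_2^3+x_2=a_2$, in which case it equals $2^m$; on that cubic locus the surviving exponent simplifies (again by $\Tr_m(y^2)=\Tr_m(y)$ and Frobenius-invariance of $\Tr_m$) to $\Tr_m(bx_2)$ with $b=a_1+\delta'$, $\delta'=\delta^{2^{m-2}}$, $\Tr_m(\delta')=1$. This yields
$$
W_d(a)\;=\;2^m\sums{t\in\mathbb{F}_{2^m}\\ t^3+t=a_2}(-1)^{\Tr_m(bt)},
$$
and I write $S_{a_2}$ for this root set. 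Since the derivative of $t^3+t+a_2$ is $(t+1)^2$, the cubic is squarefree for $a_2\neq0$, hence has $0$, $1$ or $3$ roots in $\mathbb{F}_{2^m}$, while for $a_2=0$ the root set is $\{0,1\}$. When there are three roots $r_1,r_2,r_3$ they sum to $0$, so $\Tr_m(br_1)+\Tr_m(br_2)+\Tr_m(br_3)=0$ and the inner sum lies in $\{-1,3\}$; with one root it lies in $\{-1,1\}$; for $a_2=0$ it equals $1+(-1)^{\Tr_m(b)}\in\{0,2\}$. Therefore $W_d(a)\in\{-2^m,0,2^m,2\cdot2^m,3\cdot2^m\}$ and $C_d(\tau)$ takes precisely the five listed values (the value $-1$ arising both from the no-root locus and from part of the $a_2=0$ locus, and $W_d(0)=0$ consistently with $d$ being invertible).

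For the frequencies, let $n_0,n_1,n_3$ be the number of $a_2\in\mathbb{F}_{2^m}^*$ for which $|S_{a_2}|$ equals $0,1,3$. For each such $a_2$, as $b$ runs over $\mathbb{F}_{2^m}$ the map $b\mapsto(\Tr_m(br))_{r\in S_{a_2}}$ is surjective onto the appropriate subspace of $\mathbb{F}_2^{|S_{a_2}|}$ (any two distinct nonzero roots being $\mathbb{F}_2$-independent), so the number of $b$ realizing each admissible inner-sum value is an explicit power of $2$; the locus $a_2=0$ is treated separately, and the single excluded point $a=0$ (sitting at $a_2=0$, $b=\delta'$, $\Tr_m(b)=1$) is subtracted. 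Tallying expresses each $A_j:=\#\{\tau:C_d(\tau)=-1+j2^m\}$ as an explicit linear combination of $n_0,n_1,n_3$, which one can cross-check against the trivial count $\#\{\tau\}=2^{2m}-1$ and \thref{Lem2-B} (1)--(2). The remaining ingredient is the classical count of $n_3$: from $\sum_{a_2}|S_{a_2}|=2^m$ and $\sum_{a_2}|S_{a_2}|^2=\#\{(u,v):u^3+u=v^3+v\}$, where the latter factors through $(u+v)\big((u+v)^2+uv+1\big)=0$ and is evaluated by solving a quadratic over $\mathbb{F}_{2^m}$, there appears a condition $\Tr_m(1/u)=\Tr_m(1)$, and since $\Tr_m(1)\equiv m\pmod2$ this is exactly where the factor $(-1)^{m+1}$ in the stated distribution is born. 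Substituting $n_3$ (hence $n_0,n_1$) and simplifying recovers the claimed counts. I expect this parity-sensitive evaluation of the cubic-collision number — together with keeping the $a_2=0$ exceptional locus and the excluded $a=0$ correctly bookkept — to be the main obstacle; once it is settled, the rest is linear algebra over $\mathbb{F}_2$ and the low power moments of \thref{Lem2-B}.
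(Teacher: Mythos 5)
Your proposal is correct, and it reaches the stated distribution by a genuinely different route from the paper. The paper treats $d=2^m+3$ as a Niho exponent: via \thref{Lem3}, $W_d(a)=C_d(\tau)+1=(N(a)-1)2^m$ with $N(a)$ the number of unit-circle roots of $ax^4+x^3+x+\bar a$, which pins down the five possible values; the frequencies are then obtained by importing Niho's count $t_3=2^{m-1}$ for the value $-1+2^{m+1}$, computing the third-moment quantity $b_3=2^m+(-1)^{m+1}+1$ from the factorization $(x^{2^m}+x)(x^2+x+1)=0$, and solving the linear system coming from the power moments of \thref{Lem2-B}. You instead bilinearize directly over the subfield: since $x^d=\bar x\,x^3$, expanding in a basis of $\mathbb{F}_{2^{2m}}/\mathbb{F}_{2^m}$ makes the $x_1$-sum a full additive character sum that enforces the cubic condition $x_2^3+x_2=a_2$, giving $W_d(a)=2^m\sum_{t^3+t=a_2}(-1)^{\Tr_m(bt)}$; the five values then follow from the $0/1/3$-root structure of the squarefree cubic (plus the exceptional locus $a_2=0$), and the frequencies from equidistribution of $b\mapsto(\Tr_m(br))_{r}$ over the root set together with the collision count of $t\mapsto t^3+t$, whose conic factor $u^2+uv+v^2=1$ yields the solvability condition $\Tr_m(1/u)=\Tr_m(1)$ and hence the $(-1)^{m+1}$ term — the exact analogue of the paper's $b_3$, both conditions amounting to whether $\mathbb{F}_4\subseteq\mathbb{F}_{2^m}$. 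I checked that your tallies ($2^{m-2}n_3$ for $-1+3\cdot2^m$, $3\cdot2^{m-2}n_3+2^{m-1}n_1$ for $-1-2^m$, $2^{m-1}n_1$ for $-1+2^m$, $2^m n_0+2^{m-1}-1$ for $-1$, and $2^{m-1}$ for $-1+2^{m+1}$), with $n_1+3n_3=2^m-2$ and $n_1+9n_3$ given by your collision count, reproduce the stated distribution for both parities of $m$. Your route buys self-containedness — Niho's $t_3=2^{m-1}$ falls out of the $a_2=0$ analysis instead of being cited, and no moment system is actually needed (the low moments serve only as a cross-check) — while the paper's route uses the generic Niho/moment template that transfers to other five-valued cases with less field-specific computation. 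One harmless slip: with the conventions $\theta^2+\theta=\delta$, $\bar\theta=\theta+1$, the surviving linear coefficient is $b=a_1+a_2+\delta'$ rather than $a_1+\delta'$; since for fixed $a_2$ this is still a bijection in $a_1$ and $a=0$ still gives $b=\delta'$ with $\Tr_m(\delta')=1$, none of your counts are affected.
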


\noindent\textbf{Sketch of Proof.} The decimation $d=2^m+3$ is of Niho type as $d = s(2^m-1)+1$ with $s\equiv \frac{1}{4} \mod{2^m+1}$.
Then the equation \eqref{Eq-Niho} becomes 
$
x^{-\frac{1}{2}}  + a x^{\frac{1}{4}} + \bar{a} x^{-\frac{3}{4}}   + 1 = 0
$, which is equivalent to 
$
a x^4 + x^3 + x + \bar{a} = 0.
$ This implies that $W_d(a)=C_d(\tau)+1$ takes on possibles values $(i-1)2^m$ for $i=0, 1, \cdots, 4$. 

Denote by $t_i$ the number of occurrence of $(i-1)2^m$ in $W_d(a)$ for $a\in \mathbb{F}_{2^n}^*$. 
Niho \cite{Niho} found that the value $t_3=2^{m-1}$. As in \thref{Lem1} (6), denote by $b_3$ the number of solutions $(x_1,x_2)\in \mathbb{F}_{2^n}^*\times \mathbb{F}_{2^n}^*$ of
\[
\begin{cases}
x_1 + x_2 = 1, \\
x_1^{2^m+3} + x_2^{2^m+3} = 1.
\end{cases}
\] From the above equations one can deduce $(x_1^{2^m}+x_1)(x_1^2+x_1+1)=0$. Note that $x^{2^m}+x=0$ iff $x\in \mathbb{F}_{2^m}$ and $x^{2}+x+1=0$ iff $x\in \mathbb{F}_{2^2} \setminus \mathbb{F}_2$.
Since $ \mathbb{F}_{2^2}\subset \mathbb{F}_{2^m}$ iff $m$ is even, it follows that $b_3=2^m+(-1)^{m+1}+1$. 
With $t_3=2^{m-1}$ and $b_3=2^m+(-1)^{m+1}+1$, one can obtain the variables $t_i$'s for $i=0, 1, \cdots, 4$ by
solving the following  system of linear equations derived from the first three power moments in \thref{Lem2-B} (1)-(3):
\[
\begin{pmatrix}
1 & 1 & 1 & 1 & 1 \\
-2^m & 0 & 2^m & 2^{m+1} & 3\cdot 2^m \\
2^{2m} & 0 & 2^{2m} & 2^{2m+2} & 9\cdot 2^{2m} \\
-2^{3m} & 0 & 2^{3m} & 2^{3m+3} & 27\cdot 2^{3m} \\
\end{pmatrix}
\begin{pmatrix}
t_0 \\ t_1 \\ t_2 \\ t_3 \\ t_4
\end{pmatrix}
= \begin{pmatrix}
2^{2m-1} \\ 2^{2m} \\ 2^{4m} \\ 2^{4m}b_3
\end{pmatrix}.
\]
The above proof sketch reflects a common method to calculate the distribution of five-valued crosscorrelation, where
in addition to three trivial identities in \thref{Lem2-B} (1)-(2), one usually considers $b_3$ and also seeks another equation on the variables by further characterizing one of the unknowns or investigating the calculation of $b_4$.

\begin{theorem}\thlabel{Th-bin5-valued-2}\cite{Dobbertin1998}
	Let $n=4r$ for odd $r$ and $d=2^{2r}+2^{r}+1$. Then $C_d(\tau)$ takes on five values:
		\[
	\begin{array}{rcl}
	-1& \quad\text{occurs} \quad& 2^{4r-1}-2^{3r-2} \text{ times}, \\
	-1\pm 2^{2r} & \quad\text{occurs} \quad& (2^{4r-1}+2^{3r-1})/3 \text{ times}, \\	
	-1\pm 2^{2r+1}  & \quad\text{occurs} \quad & (2^{4r-2}-2^{3r-3})/3 \pm 2^{2r-2} \text{ times}.
	\end{array}
	\]
\end{theorem}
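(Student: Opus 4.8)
By \eqref{Eq-CC-WT} it is equivalent to study the Walsh transform $W_d(a)=\sum_{x\in\mathbb{F}_{2^{4r}}}(-1)^{\Tr(x^d+ax)}$ of $\Tr(x^d)$, showing it takes exactly the values $0,\pm 2^{2r},\pm 2^{2r+1}$ and counting each occurrence. Write $q=2^r$, so $\mathbb{F}_{2^{4r}}=\mathbb{F}_{q^4}$, and note that $d=q^2+q+1$ gives $x^d=x^{q}\,x^{q^2+1}=x^{q}\,N_{q^4/q^2}(x)$ with $N_{q^4/q^2}(x)\in\mathbb{F}_{q^2}$. Fix $\theta\in\mathbb{F}_{q^4}$ with $\Tr_{q^4/q^2}(\theta)=1$, so that $\{1,\theta\}$ is an $\mathbb{F}_{q^2}$-basis of $\mathbb{F}_{q^4}$ and $\theta^2=\theta+\delta$ with $\Tr_{q^2/2}(\delta)=1$, and write $x=x_0+x_1\theta$, $a=a_0+a_1\theta$ with $x_i,a_i\in\mathbb{F}_{q^2}$. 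Using $\Tr=\Tr_{q^2/2}\circ\Tr_{q^4/q^2}$ together with $N_{q^4/q^2}(x)=x_0^2+x_0x_1+\delta x_1^2$ and $\Tr_{q^4/q^2}(x)=x_1$, a direct computation gives
\[
\Tr(x^d+ax)=\Tr_{q^2/2}\!\bigl((x_0^2+x_0x_1+\delta x_1^2)x_1^{q}+a_1x_0+(a_0+a_1)x_1\bigr).
\]
Since $\Tr_{q^2/2}(x_0^2x_1^{q})=\Tr_{q^2/2}(x_0x_1^{q/2})$, the $x_0$-dependent part of the exponent is $\mathbb{F}_2$-linear in $x_0$; summing over $x_0\in\mathbb{F}_{q^2}$ first therefore collapses the sum and yields
\[
W_d(a)=q^2\!\!\sums{x_1\in\mathbb{F}_{q^2}\\ x_1^{q/2}+x_1^{q+1}+a_1=0}\!\!(-1)^{\Tr_{q^2/2}(\delta x_1^{q+2}+(a_0+a_1)x_1)}.
\]
In particular $2^{2r}\mid W_d(a)$, and everything reduces to showing that the inner signed sum lies in $\{0,\pm1,\pm2\}$ and to determining its distribution.

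Next I would describe the constraint set $S_a=\{x_1\in\mathbb{F}_{q^2}:x_1^{q/2}+x_1^{q+1}+a_1=0\}$. Putting $\mu=x_1^{q+1}=N_{q^2/q}(x_1)\in\mathbb{F}_q$ and $T=\Tr_{q^2/q}(a_1)$, $N=N_{q^2/q}(a_1)$, the defining equation forces $x_1=(\mu+a_1^{q})^2$ and, upon resubstitution, $\mu^4+T^2\mu^2+\mu+N^2=0$; conversely every root $\mu\in\mathbb{F}_q$ of this quartic produces a unique $x_1\in S_a$. Its left side is an $\mathbb{F}_2$-linear function of $\mu$ plus the constant $N^2$, with kernel $\{0\}\cup\{\mu\in\mathbb{F}_q:\mu^3+T^2\mu+1=0\}$; since this cubic has no repeated root (its derivative is $\mu^2+T^2$, and $\mu=T$ is not a root) it has $0$, $1$ or $3$ roots in $\mathbb{F}_q$, the frequency of each case being controlled by $r$ being odd (via $\gcd(3,2^r-1)=1$ and the splitting law for cubics over $\mathbb{F}_{2^r}$). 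Hence $|S_a|\in\{0,1,2,4\}$, and for $|S_a|\in\{2,4\}$ the admissible $\mu$ form a coset of a $1$- or $2$-dimensional $\mathbb{F}_2$-subspace of $\mathbb{F}_q$.

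The crux is the behaviour of the inner sum in the cases $|S_a|\in\{2,4\}$: after substituting $x_1=(\mu+a_1^{q})^2$, the exponent $\Tr_{q^2/2}(\delta x_1^{q+2}+(a_0+a_1)x_1)$ turns into a function of $\mu$ that is $\mathbb{F}_2$-quadratic on the affine solution space, and one must show that in the $2$-dimensional case it is never $\mathbb{F}_2$-affine there; this excludes the value $\pm 2^{2r+2}$ and forces the inner sum into $\{0,\pm1,\pm2\}$, so that $W_d(a)\in\{0,\pm 2^{2r},\pm 2^{2r+1}\}$. I expect this nondegeneracy step — at heart a rank computation for a quadratic form attached to $a$, and the point where the hypothesis "$r$ odd" is genuinely used — to be the main obstacle, together with verifying that all of the counts $0,1,2,4$ (with the right signs) actually occur.

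Finally, since there are only five possible Walsh values, I would recover the occurrence numbers $n_j=\#\{a\in\mathbb{F}_{q^4}:W_d(a)=j\cdot 2^{2r}\}$, $j\in\{-2,-1,0,1,2\}$, from the power moments in \thref{Lem2-B}. The moments for $l=0,1,2,3$ furnish four linear relations once $M_1=\#\{x\in\mathbb{F}_{2^{4r}}:(x+1)^d+x^d=1\}$ is evaluated, which can be done by a subfield factorization of the relevant equation (in the spirit of the sketch of \thref{Th-bin5value-1}, noting that $(x+1)^d+x^d+1=0$ collapses to a symmetric condition on the conjugates $x,x^{q},x^{q^2}$ over $\mathbb{F}_q$). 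The fourth power moment $P_d^{(4)}=2^{8r}\sum_a M_a^2$ — equivalently, a direct count of one extreme multiplicity — supplies the fifth relation; the resulting spectrum is necessarily asymmetric since the first moment already forces $\sum_j j\,n_j=2^{2r}>0$. Solving the $5\times 5$ linear system then gives precisely the distribution in the statement.
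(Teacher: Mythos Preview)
The paper does not itself prove this result; it only points to Dobbertin's argument, summarized as ``investigating the multi-variant expression of relevant polynomials over $\mathbb{F}_{2^r}$'', i.e.\ a full four-variable descent from $\mathbb{F}_{2^{4r}}$ to the base field $\mathbb{F}_{2^r}$. Your route is in the same family but organized differently: you first descend to $\mathbb{F}_{q^2}$ with two coordinates $(x_0,x_1)$, exploit the $\mathbb{F}_2$-affinity of the exponent in $x_0$ to collapse $W_d(a)$ to $q^2$ times a short signed sum over $S_a$, and only then pass to $\mathbb{F}_q$ via $\mu=x_1^{q+1}$. Your reductions check out (in particular the bijection between $S_a$ and the $\mathbb{F}_q$-roots of the linearized quartic $\mu^4+T^2\mu^2+\mu+N^2$ is correct), and this two-step descent makes the divisibility $2^{2r}\mid W_d(a)$ immediate, which is cleaner than a direct four-variable expansion over $\mathbb{F}_{2^r}$.

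One small correction: to exclude $W_d(a)=\pm 2^{2r+2}$ you need the exponent to be \emph{non-constant} on the four-element coset, not ``never $\mathbb{F}_2$-affine''; a non-constant affine exponent gives inner sum $0$, which is harmless, while a genuinely quadratic one gives $\pm 2$. The substantive work you still owe is precisely what you flag as the crux --- that non-constancy when $|S_a|=4$ --- together with the distribution. Dobbertin obtains both by a fine analysis of the descended polynomial identities rather than via moments; your power-moment plan is legitimate in principle, but computing $P_d^{(4)}=2^{8r}\sum_a M_a^2$ for this $d$ is not visibly easier than the original problem, so a direct count of one multiplicity (as you also suggest) is the more realistic fifth relation.
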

Dobbertin \cite{Dobbertin1998} studied $C_d(\tau)$ by investigating the multi-variant expression of relevant polynomials 
over $\mathbb{F}_{2^r}$. We refer readers to \cite[Sec. 3]{Dobbertin1998} for more details for the calculation of the crosscorrelation function and its distribution.

Up to now we have (intentionally) assume that $C_d(\tau)$ is the crosscorrelation between two m-sequences $\{\Tr(\alpha^t)\}$ and
$\{\Tr(\alpha^{dt})\}$ with period $p^n-1$, where $\alpha$ is a primitive element of $\Fpn$ and $\gcd(d, p^n-1)=1$.
It is clear that when decimations of the form $d\equiv d_1/d_2 \mod{p^n-1}$ are essentially covered when one generically 
exhaust decimations that are coprime to $p^n-1$ and lead to crosscorrelation with few values. On the other hand, as obtaining an explicit expression of $d_2^{-1} \mod{p^n-1}$ is generally intractable, it is of interest to consider decimations of the form $d_1/d_2 \mod{p^n-1}$ directly. For instance, the Kasami-Welch decimation $p^{2k}-p^k+1$, when $n/\gcd(n, k)$ is odd,
 corresponds to $(p^{3k}+1)/(p^k+1)\mod{p^n-1}$. The next theorem recalls decimations of this type in the case of $p=2$.

\begin{theorem}\thlabel{Th-bin-5valued-3}
	Let $p=2$, $n$ be an odd integer and $d=(2^{l}+1)/(2^k+1) \mod{2^n-1}$, where $l$ and $k$ are distinct positive integers. 
	Then for 
	\begin{center}
	(i) $(l, k)=(2t, t)$;	\quad (ii) $(l, k)=(5t, t)$; \quad  (iii) $(l, k) = (5t, 3t)$,
	\end{center}
the crosscorrelation $C_d(\tau)$ takes on at most five values \cite{Yu2006,Johansen2009}:
$$
\left\{-1, -1\pm 2^{(n+e)/2}, -1 \pm 2^{(n+3e)/2}\right\}, \text{ where } e=\gcd(n, t).
$$
\end{theorem}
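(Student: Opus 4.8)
The plan is to pass, via \eqref{Eq-CC-WT}, to the Walsh transform $W_d(a)=C_d(\tau)+1$ with $a=\alpha^{\tau}$, to realize it as the sign-sum of an explicit quadratic form over $\mathbb{F}_2$, and then to read off its values from the rank of the associated alternating form. First one checks that $d$ is a genuine decimation: for odd $n$ one has $\gcd(2^{j}+1,2^{n}-1)=1$ for every $j$, since a common divisor divides $2^{2j}-1$ and $2^{n}-1$, hence $2^{\gcd(2j,n)}-1=2^{\gcd(j,n)}-1$, and therefore also divides $\gcd(2^{\gcd(j,n)}-1,2)=1$. Thus $2^{l}+1$ and $2^{k}+1$ are invertible modulo $2^{n}-1$, $d$ is well defined and $\gcd(d,2^{n}-1)=1$.

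Since $\gcd(2^{k}+1,2^{n}-1)=1$, the substitution $x=y^{2^{k}+1}$ is a permutation of $\mathbb{F}_{2^{n}}$, and using $d(2^{k}+1)\equiv 2^{l}+1\pmod{2^{n}-1}$ it gives
\[
W_d(a)=\sum_{y\in\mathbb{F}_{2^{n}}}(-1)^{\Tr\left(y^{2^{l}+1}+a\,y^{2^{k}+1}\right)} .
\]
Here $Q_a(y)=\Tr\!\big(y^{2^{l}+1}+a\,y^{2^{k}+1}\big)$ is a quadratic Boolean function (both exponents have binary weight $2$, assuming the decimation is non-degenerate), so the classification of quadratic forms over $\mathbb{F}_2$ gives $W_d(a)\in\{0\}\cup\{\pm 2^{(n+s_a)/2}\}$, where $s_a=\dim_{\mathbb{F}_2}\mathrm{rad}(B_a)$ for the alternating form $B_a(u,v)=Q_a(u+v)+Q_a(u)+Q_a(v)$. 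Expanding $(u+v)^{2^{j}+1}$ and collecting Frobenius powers onto one variable yields $B_a(u,v)=\Tr\!\big(u\,L_a(v)\big)$ with
\[
L_a(v)=v^{2^{l}}+v^{2^{-l}}+a\,v^{2^{k}}+a^{2^{-k}}v^{2^{-k}},
\]
so $\mathrm{rad}(B_a)=\ker L_a$. Because $n$ is odd and $B_a$ is alternating, $s_a$ is odd; and because $e=\gcd(n,t)$ divides both $l$ and $k$, the map $L_a$ is $\mathbb{F}_{2^{e}}$-linear, so $e\mid s_a$. As $e$ and $n/e$ are odd, $s_a$ is an odd multiple of $e$.

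It remains to prove $s_a\le 3e$, i.e.\ that $\ker L_a$ has $\mathbb{F}_{2^{e}}$-dimension at most $3$. Regard $L_a$ as a skew Laurent polynomial in the Frobenius $F\colon y\mapsto y^{2^{t}}$, which acts $\mathbb{F}_{2^{e}}$-linearly on $\mathbb{F}_{2^{n}}$ with $F^{n/e}=\mathrm{id}$. For $(l,k)=(2t,t)$, clearing negative powers gives $F^{2}L_a=F^{4}+a^{2^{2t}}F^{3}+a^{2^{t}}F+1$, of $F$-degree $4$, whence $\dim_{\mathbb{F}_{2^{e}}}\ker L_a\le 4$; together with $s_a$ being an odd multiple of $e$ this forces $s_a\in\{e,3e\}$, and this case is done. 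For $(l,k)=(5t,t)$ and $(l,k)=(5t,3t)$ the corresponding polynomial has $F$-degree $10$, so the crude estimate only yields $s_a\in\{e,3e,5e,7e,9e\}$; ruling out $s_a\ge 5e$ is the technical heart of the proof. Here I would exploit the specific shape of these exponent pairs to show that $L_a(v)=0$ collapses, after a substitution of the form $z=v^{2^{c}\pm 1}$ or by extracting a common right factor of $L_a$ with its Frobenius twists, to a relation of $F$-degree at most $4$ on $\ker L_a$; alternatively, if that direct route proves unwieldy, one can combine the structure above with the low power moments of \thref{Lem2-B}: writing $t_j$ for the number of $a\in\mathbb{F}_{2^{n}}$ with $|W_d(a)|=2^{(n+je)/2}$, the identity $\sum_j t_j2^{je}=2^{n}$ together with $P_d^{(3)}=2^{2n}M_1$ and a bound on $M_1=\#\{x:(x+1)^{d}+x^{d}=1\}$ obtained from the same quadratic-form reduction should force $t_j=0$ for $j\ge 5$.

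Granting $s_a\in\{e,3e\}$ for every $a\in\mathbb{F}_{2^{n}}^{*}$, we obtain $W_d(a)\in\{0,\pm 2^{(n+e)/2},\pm 2^{(n+3e)/2}\}$, hence $C_d(\tau)=W_d(a)-1\in\{-1,\,-1\pm 2^{(n+e)/2},\,-1\pm 2^{(n+3e)/2}\}$, which is the assertion. I expect the main obstacle to be precisely the rank bound $s_a\le 3e$ in the two degree-$5$ cases: the quadratic-form machinery cleanly reduces everything to controlling $\dim\ker L_a$, but for $(5t,t)$ and $(5t,3t)$ the linearized polynomial $L_a$ has too large a degree for the estimate to be immediate, and unearthing the hidden low-degree relation is exactly where those special exponent pairs are used.
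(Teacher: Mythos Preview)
Your approach via the rank of the alternating form $B_a$ is the natural direct one and is essentially what lies behind Kasami's result, but the paper does not redo this analysis: it simply invokes Kasami's 1971 theorem (recorded here as \thref{Th-Kasami}) on the weight spectrum of the duals of the cyclic codes $A_t(u)$ and $F_t(u)$. For $u=3$ that theorem says $A_t(3)^\perp$ and $F_t(3)^\perp$ have the same nonzero weights as $A_e(3)^\perp$, namely $2^{n-1}$ and $2^{n-1}\pm 2^{(n+e)/2-1}$ and $2^{n-1}\pm 2^{(n+3e)/2-1}$. The three codes $\{\Tr(by^{1+2^l}+ay^{1+2^k}):a,b\in\mathbb{F}_{2^n}\}$ for the pairs $(l,k)$ in (i)--(iii) sit inside these duals, so the five values of $W_d(a)$ drop out immediately via \eqref{Eq-CC-WT} and \eqref{Eq-Code}, and the whole proof is a couple of lines.

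Your direct argument, by contrast, has a real gap already in case~(i). The step ``of $F$-degree $4$, whence $\dim_{\mathbb{F}_{2^e}}\ker L_a\le 4$'' does not follow: with $F\colon v\mapsto v^{2^t}$, a polynomial of $F$-degree $r$ is a $2^t$-linearized polynomial of $2^t$-degree $r$ and hence has at most $2^{rt}$ roots, giving only $\dim_{\mathbb{F}_2}\ker L_a\le rt$, i.e.\ $\dim_{\mathbb{F}_{2^e}}\ker L_a\le rt/e$. This equals $r$ only when $e=t$, that is, when $t\mid n$; for example with $n=7$, $t=2$, $e=1$ your bound leaves $s_a=5$ unexcluded. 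The same over-count afflicts cases~(ii) and~(iii), which you already flag as incomplete. The substantive point you are missing is precisely the content of \thref{Th-Kasami}: the rank of the quadratic form $\Tr(\sum_i b_i x^{1+2^{t j_i}})$ depends only on $e=\gcd(n,t)$ and not on $t$ itself. Establishing this reduction from $t$ to $e$ is what makes Kasami's theorem non-trivial, and neither a factorisation of $L_a$ in the skew ring nor the power moments $P_d^{(2)},P_d^{(3)}$ of \thref{Lem2-B} will produce it without essentially reproving Kasami's lemma.
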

\thref{Th-bin-5valued-3} can be deduced from a theorem by Kasami \cite{Kasami1971} on the weight distribution of subcodes of the second order Reed-Muller codes. 

\begin{theorem}[Kasami \cite{Kasami1971}]\thlabel{Th-Kasami}
	For odd $n$, let $t$ and $u$ be positive integers with $1\leq t \leq \frac{n-1}{2}$ and $1\leq u \leq \frac{n}{2e}+1$ with $e=\gcd(n,t)$. Let $A_t(u)$ be a binary cyclic code of length $2^n-1$ whose generator polynomial is given by $g_a(x)=\prod_{i=0}^{u-1}m_{1+2^{ti}}(x)$, where $m_i(x)$ is a minimal polynomial of $\alpha^i$ and $\alpha$ is a primitive element of $\mathbb{F}_{2^n}$. Similarly, let $F_t(u)$ be a binary cyclic code of length $2^n-1$ 
	with its generator polynomial given by $g_f(x)=\prod_{i=0}^{u-1}m_{1+2^{t(2i+1)}}(x)$. Then the dual codes of $A_t(u)$ and $F_t(u)$, denoted by $A_t(u)^{\perp}$ and $F_t(u)^{\perp}$, respectively, have the same weight distribution as those of $A_e(u)^{\perp}$
	whose distinct weights are given by 
	$$
	\{
	0, 2^{n-1}, 2^{n-1}\pm 2^{(n-e)/2+ie-1} 
	\} \text{ for } 1\leq i\leq u-1.
	$$
\end{theorem}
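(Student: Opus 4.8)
The plan is to trade the two codes for the quadratic forms that their duals carry. In the trace description, a non-zero codeword of $A_t(u)^{\perp}$ is, up to a cyclic shift, the word $\mathbf{c}_a=\bigl(\Tr(f_a(x))\bigr)_{x\in\mathbb{F}_{2^n}^{*}}$ with $f_a(x)=\sum_{i=0}^{u-1}a_i\,x^{1+2^{ti}}$, $a=(a_0,\dots,a_{u-1})\in\mathbb{F}_{2^n}^{u}\setminus\{0\}$ (for $F_t(u)^{\perp}$ one replaces $2^{ti}$ by $2^{t(2i+1)}$). Since each exponent $1+2^{ti}$ has binary weight at most two, $\Tr(f_a)$ is an $\mathbb{F}_2$-quadratic form on $\mathbb{F}_{2^n}$, and counting its zeros gives $\mathrm{wt}(\mathbf{c}_a)=2^{n-1}-\tfrac12\sum_{x\in\mathbb{F}_{2^n}}(-1)^{\Tr(f_a(x))}$. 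By the classification of binary quadratic forms the exponential sum is $0$ when $\Tr(f_a)$ is non-constant on the radical $V_a$ of its polar form, and is $\pm 2^{(n+\dim_{\mathbb{F}_2}V_a)/2}$ otherwise; moreover the polar form is $B_a(x,y)=\Tr\bigl(y\,L_a(x)\bigr)$ with the self-adjoint linearized polynomial $L_a(x)=\sum_{i}\bigl(a_i x^{2^{ti}}+a_i^{2^{-ti}}x^{2^{-ti}}\bigr)$, so $V_a=\ker L_a$. Hence every weight equals $0$, $2^{n-1}$, or $2^{n-1}\pm 2^{(n+\dim_{\mathbb{F}_2}V_a)/2-1}$, and the whole problem is to determine the possible values of $\dim_{\mathbb{F}_2}V_a$.

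First I would identify the $\mathbb{F}_{2^e}$-structure. Because $e\mid t$, after clearing denominators $\widetilde{L}_a(x)=L_a(x)^{2^{t(u-1)}}$ is an $\mathbb{F}_{2^e}$-linearized polynomial (all of its monomials are powers of $x$ of the form $x^{(2^e)^{m}}$), so $V_a=\ker\widetilde{L}_a$ is an $\mathbb{F}_{2^e}$-subspace and $\dim_{\mathbb{F}_2}V_a=e\cdot\dim_{\mathbb{F}_{2^e}}V_a$. Since $n$ is odd, $e=\gcd(n,t)$ is odd and the polar form is alternating, so $n-\dim_{\mathbb{F}_2}V_a$ is even, forcing $\dim_{\mathbb{F}_{2^e}}V_a$ to be odd. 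I would then write $\widetilde{L}_a=\sum_k c_k\psi^{k}$ as a skew polynomial in a Frobenius power $\psi$ that generates $\mathrm{Gal}(\mathbb{F}_{2^n}/\mathbb{F}_{2^e})$ (this group being cyclic of order $n/e$, and $\psi$ a generator exactly because $\gcd(t/e,n/e)=1$); a non-zero skew polynomial of degree $D$ has at most $(2^e)^{D}$ roots in $\mathbb{F}_{2^n}$. For the $A_t(u)$-exponents $\widetilde{L}_a$ has degree at most $2u-2$ in $\psi$ (with its middle coefficient absent), and the hypothesis $u\le n/(2e)+1$ forces $2u-2<n/e$, so no reduction modulo $\psi^{n/e}-1$ is needed; combining the degree bound with oddness gives $\dim_{\mathbb{F}_{2^e}}V_a\in\{1,3,\dots,2u-3\}$. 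Substituting $\dim_{\mathbb{F}_2}V_a=(2i-1)e$ into $2^{n-1}\pm 2^{(n+\dim_{\mathbb{F}_2}V_a)/2-1}$ produces precisely $2^{n-1}\pm 2^{(n-e)/2+ie-1}$ for $1\le i\le u-1$, and adjoining the values $2^{n-1}$ and $0$ gives the asserted list of distinct weights.

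To see that the three duals carry the same weight distribution I would argue as follows. For $A_t(u)^{\perp}$ versus $A_e(u)^{\perp}$ it suffices to note that, since $\gcd(t/e,n/e)=1$, the map $i\mapsto (t/e)\,i\bmod(n/e)$ permutes cyclotomic-coset representatives and carries the zero set of $A_t(u)$ onto that of $A_e(u)$; hence these cyclic codes are monomially equivalent, and so are their duals. For $F_t(u)^{\perp}$ one runs the same analysis on $L_a(x)=\sum_i\bigl(a_i x^{2^{t(2i+1)}}+a_i^{2^{-t(2i+1)}}x^{2^{-t(2i+1)}}\bigr)$: here $\widetilde{L}_a$ is supported on even Frobenius exponents only, i.e.\ on $2u$ consecutive positions in the variable $\psi^{2}$, which still generates $\mathrm{Gal}(\mathbb{F}_{2^n}/\mathbb{F}_{2^e})$ because $\gcd(2t,n)=e$ for odd $n$; the self-adjointness of $L_a$ together with the ceiling on $u$ and the oddness of $\dim_{\mathbb{F}_{2^e}}V_a$ then pins the radical dimension to the same set $\{e,3e,\dots,(2u-3)e\}$. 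Finally, the multiplicities of all these weights are the unique solution of the linear system furnished by the power moments $\sum_{a}W_{f_a}(\,\cdot\,)^{l}$ for $l=0,1,\dots,2u-1$ (the coefficient matrix is non-singular of Vandermonde type since the non-zero weights are pairwise distinct), and one checks the solution is non-negative: this shows simultaneously that each listed weight occurs and that the common distribution is that of $A_e(u)^{\perp}$.

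In my view the main obstacle is the \emph{sharpness} of the rank estimate. The plain skew-polynomial degree bound gives only $\dim_{\mathbb{F}_{2^e}}V_a\le 2u-2$ (and $\le 2u-1$ in the $F_t(u)$ case), and one must use the self-adjoint, palindromic shape of $L_a$ hand in hand with the parity constraint forced by $n$ odd and with the ceiling $u\le n/(2e)+1$ to descend to $2u-3$; a careless accounting of exponents modulo $n/e$ loses exactly this last step. A secondary difficulty is making the $F_t(u)^{\perp}\leftrightarrow A_e(u)^{\perp}$ comparison genuinely weight-preserving, since there (unlike for $A_t(u)$) the two codes need not be equivalent and the distributions can be matched only after the common list of radical dimensions has been established.
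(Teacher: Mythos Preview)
The paper does not prove this theorem; it is quoted verbatim from Kasami's 1971 work and then used as a black box to deduce \thref{Th-bin-5valued-3}. So there is no ``paper's own proof'' to compare against, and your proposal is in effect a reconstruction of Kasami's argument. Your overall route via the rank of the symplectic form $B_a(x,y)=\Tr\bigl(yL_a(x)\bigr)$ is the standard one, and the constraints you extract on $\dim_{\mathbb{F}_{2^e}}\ker L_a$ (it is an $\mathbb{F}_{2^e}$-space, its $\mathbb{F}_{2^e}$-dimension is odd because $n$ is odd, and bounded by $2u-3$ via the skew-degree of $\widetilde{L}_a$) are correct and yield exactly the stated list of weights.

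There is, however, a genuine error in your comparison step. The claim that $A_t(u)$ and $A_e(u)$ are monomially equivalent is false. The permutation $i\mapsto (t/e)\,i\bmod(n/e)$ acts on the \emph{exponent} $k$ appearing in $1+2^{k}$, and this does not correspond to any multiplier on the zeros $\alpha^{1+2^{k}}$ of the cyclic code. Concretely, take $n=5$, $t=2$, $e=1$, $u=2$: the zero sets of $A_2(2)$ and $A_1(2)$ lie in the cyclotomic cosets of $\{2,5\}$ and $\{2,3\}$ modulo $31$, respectively, and one checks directly that no multiplier $\mu$ carries one set of cosets to the other. The two duals do have the same weight distribution --- that is precisely the content of the theorem --- but not because the codes are equivalent.

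This gap also undermines your power-moment closing. A Vandermonde system determines the multiplicities only once the right-hand sides $\sum_a W_a^{\,l}$ are known, and you have not argued that these moments coincide for the three codes. They do, but establishing it means showing that the number of $(x_1,\dots,x_l)\in\mathbb{F}_{2^n}^{l}$ with $\sum_j x_j^{\,1+2^{ti}}=0$ for every $i$ depends only on $e=\gcd(n,t)$ --- itself a non-trivial count for $l$ up to $2u-2$. Kasami's actual proof sidesteps both issues by directly enumerating, for each odd $r\le 2u-3$, the number of coefficient vectors $a$ with $\dim_{\mathbb{F}_{2^e}}\ker L_a=r$: writing $\widetilde{L}_a$ as a palindromic polynomial in a generator $\psi$ of $\mathrm{Gal}(\mathbb{F}_{2^n}/\mathbb{F}_{2^e})$, this enumeration visibly depends only on $(n,e,u)$ and not on which generator ($\mathrm{Frob}^{t}$ or $\mathrm{Frob}^{e}$) is used. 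The $F_t(u)$ case follows by the same enumeration after replacing $\psi$ by $\psi^{2}$, which is again a generator since $n$ is odd. Your proposal has all the right ingredients for this; what is missing is replacing the (false) equivalence claim by the rank-count that actually drives Kasami's theorem.
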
 
\textbf{Proof Sketch of \thref{Th-bin-5valued-3}.} Consider codes $C_2$, $C_5$ and $C_{5/3}$ defined by
\begin{equation*}\label{key}
\begin{split}
C_2 = \{\Tr(ax^{1+2^k}+ b x^{1+2^{2k}})\,|\, a, b\in \mathbb{F}_{2^n}\}, \\
C_5 = \{\Tr(ax^{1+2^k}+ b x^{1+2^{5k}})\,|\, a, b\in \mathbb{F}_{2^n}\}, \\
C_{5/3} = \{\Tr(ax^{1+2^{3k}}+ b x^{1+2^{5k}})\,|\, a, b\in \mathbb{F}_{2^n}\}.
\end{split}
\end{equation*}
Consider the code $A_k(3)$ whose generator polynomial $g_a(x)$ has zeros $2, 1+2^k, 1+2^{2k}$.
It is clear that $C_2$ is a subcode of the dual of $A_t(3)$. In addition, 
the codes $C_5, \, C_{5/3}$ are subcodes of the dual of $F_t(u)$ defined in \thref{Th-Kasami} for $u=3$, where $F_t(u)$ has zeros $2^{k(2i+1)}$, $i=0, 1, 2$. Therefore, 
weight distributions of $C_2$, $C_5$ and $C_{5/3}$ are determined by \thref{Th-Kasami}. From the relation of $C_d(\tau)$ and the Walsh transform of the power function $x^d$, it immediately follows that $C_d(\tau)$ 
has at most five values as given in \thref{Th-bin-5valued-3}. \hfill$\square$

The decimations in \thref{Th-bin-5valued-3} (ii) and (iii) for $k=1$ equal $11$ and $5/3$, respectively. Boston and McGuire \cite{Boston2010} showed that
the crosscorrelation spectrum for these two decimations are identical.
For the case (iii) in \thref{Th-bin-5valued-3}, the correlation distribution for $k=1$ was further studied intensively in \cite{Johansen2009}, 
where Johansen and Helleseth determined  the correlation distribution by investigating the first four power moments and transformed the calculation of the fourth power moment to three special exponential sums:
\begin{equation*}
\begin{split}
&K(a) = \sum_{a\in \mathbb{F}_{2^n}^*}(-1)^{	\Tr(x^{-1}+ax)}, \\
&C(b, a) = \sum_{a\in \mathbb{F}_{2^n}^*}(-1)^{	\Tr(bx^{3}+ax)}, \\
&G(b,a) = 	\sum_{a\in \mathbb{F}_{2^n}^*}(-1)^{	\Tr(bx^{3}+ax^{-1})},
\end{split}
\end{equation*} where  the Kloosterman sum $K(a)$ and the cubic sum $C(b, a)$ have been extensively studied with some known results.
The work of \cite{Johansen2009} was later generalized to any $k$ when $n$ is odd and $\gcd(k,n)=1$ \cite{Johansen2009a}.
  The correlation distribution for general $k$ with $\gcd(k,n)=1$ is believed to be the same as the case $k=1$ when the identities in the next problem hold.
  
\textbf{Open Problem 6.} Let $n$ be an odd integer and $k$ be a positive integer coprime to $n$. Show that
\[
\sum\limits_{x\in \mathbb{F}_{2^n}^*}(-1)^{\Tr(x^{2^k+1}+x^{-1})} = \sum\limits_{x\in \mathbb{F}_{2^n}^*}(-1)^{\Tr(x^{3}+x^{-1})}
\] and 
\[
\sum\limits_{x\in \mathbb{F}_{2^n}^*}(-1)^{\Tr(x+x^{-1})}  = \sum\limits_{v\in \mathbb{F}_{2^n}^*}(-1)^{\Tr\left(\frac{(v^{2^k}+1)v^{2^k}}{(v^{2^k}+v)^{2^k+1}}\right)}.
\]

Two more families of Niho-type exponents $d=s(2^m-1)+1$ for $s=3, \,4$ and even $m$ have been shown to give five-valued crosscorrelation in the binary case\cite{Dobbertin-Felke-Helleseth-Rosendahl,Helleseth2021}. 

\begin{proposition}\cite{Dobbertin-Felke-Helleseth-Rosendahl}
	Let $n=2m$ where  $m$ is even and satisfies $m\not\equiv 2 \mod{4}$ and $d=3(2^m-1)+1$. Then $C_d(\tau)$ takes on at most five values $\{-1+j\cdot 2^m\,|\,j=-1, 0, 1, 2, 4\}$.
\end{proposition}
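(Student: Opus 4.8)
The plan is to combine the Niho reduction of \thref{Lem3} with an elementary analysis of a degree-$5$ polynomial on the unit circle. Write $q=2^{m}$ and $\bar{x}=x^{q}$. Since $d=3(2^{m}-1)+1$ is of Niho type with $s=3$, I would first record that it really is a decimation under the hypotheses: one has $d\equiv 1\pmod{q-1}$ while $d=sq-s+1\equiv 1-2s=-5\pmod{q+1}$, and since $q$ is even the two factors $q-1$ and $q+1$ of $2^{n}-1$ are coprime, so $\gcd(d,2^{n}-1)=\gcd(5,q+1)$; for $m$ even one has $2^{m}\equiv-1\pmod 5$ exactly when $m\equiv 2\pmod 4$, so the assumption $m\not\equiv 2\pmod 4$ is precisely what makes $\gcd(d,2^{n}-1)=1$ and \thref{Lem3} applicable. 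With $2s-1=5$, $s=3$, $s-1=2$, that lemma reduces everything to proving that
$$N(a)=\#\{x\in U:P_{a}(x)=0\},\qquad P_{a}(x):=x^{5}-ax^{3}-\bar{a}x^{2}+1,$$
is never equal to $4$, because then $W_{d}(a)=(N(a)-1)2^{m}$ runs over $\{-2^{m},0,2^{m},2\cdot 2^{m},4\cdot 2^{m}\}$ and $C_{d}(\tau)=W_{d}(a)-1$ lies in the asserted set $\{-1+j\cdot 2^{m}:j=-1,0,1,2,4\}$.

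The key structural observation is that $P_{a}$ is conjugate-reciprocal. A one-line computation gives $x^{5}P_{a}(1/x)=x^{5}-\bar{a}x^{3}-ax^{2}+1=\overline{P_{a}}(x)$, the polynomial obtained from $P_{a}$ by raising its coefficients to the $q$-th power (using $\overline{\bar{a}}=a$). Hence, if $\beta$ is a root of $P_{a}$ then $\beta^{q}$ is a root of $\overline{P_{a}}=x^{5}P_{a}(1/x)$, so $\beta^{-q}$ is a root of $P_{a}$; thus $\sigma\colon\beta\mapsto\beta^{-q}$ is a permutation of the set of roots of $P_{a}$, and its fixed points are exactly the roots $\beta$ with $\beta^{q+1}=1$, i.e. the roots lying in $U$. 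When $P_{a}$ has five distinct roots this realizes $N(a)$ as the number of fixed points of a permutation of a $5$-element set, which is necessarily one of $0,1,2,3,5$ — a permutation cannot fix exactly four of five points — so $N(a)\neq 4$ in that case.

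It then remains to treat the non-separable case. Differentiating in characteristic $2$ yields $P_{a}'(x)=x^{4}+ax^{2}=x^{2}(x+\sqrt{a})^{2}$, so the only possible repeated root of $P_{a}$ is $x=\sqrt{a}$, and $P_{a}(\sqrt{a})=1+a^{q+1}$; hence $P_{a}$ fails to be separable exactly when $a\in U$. In that case the identity $P_{a}(x)=(x^{2}+a)(x^{3}+\bar{a})+(1+a^{q+1})$ collapses to $P_{a}(x)=(x+\sqrt{a})^{2}(x^{3}+\bar{a})$, where $\sqrt{a}\in U$ (squaring being a bijection of the group $U$), and where $x^{3}+\bar{a}$ is separable (its derivative $x^{2}$ is coprime to it since $\bar{a}\neq 0$) with its three roots forming a coset $\gamma_{0}\mu_{3}$ of the cube roots of unity $\mu_{3}\subseteq\mathbb{F}_{4}\subseteq\mathbb{F}_{2^{n}}$. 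Since $m$ is even, $2^{m}\equiv 1\pmod 3$, so a primitive cube root of unity $\zeta$ satisfies $\zeta^{q+1}=\zeta^{2}\neq 1$, i.e. $\zeta\notin U$; therefore at most one of the three roots of $x^{3}+\bar{a}$ can lie in $U$, whence $N(a)\le 1+1=2$ when $a\in U$. Combining the two cases gives $N(a)\in\{0,1,2,3,5\}$, which is the claim.

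I expect the main obstacle to be not a hard estimate but making the non-separable case watertight: one must check that $\sqrt{a}$ genuinely lies in $U$, that $x^{3}+\bar{a}$ is indeed separable, and that no degenerate coincidence (for instance $\sqrt{a}$ itself being a cube root of $\bar{a}$, which only lowers $N(a)$) reintroduces a fourth value. I would also note that determining the complete correlation distribution, rather than merely the list of attained values, is considerably more delicate: it would proceed as in the proof sketch of \thref{Th-bin5value-1}, using the first few power moments of \thref{Lem2-B} together with a count of the pairs $(x_{1},x_{2})$ with $x_{1}+x_{2}=1$ and $x_{1}^{d}+x_{2}^{d}=1$, and it is there that the precise residue of $m$ modulo $4$ (beyond just the parity of $m$) would come into play.
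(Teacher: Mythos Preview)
Your proof is correct and follows essentially the same approach as the paper: reduce via \thref{Lem3} to showing that the self-conjugate-reciprocal polynomial $x^{5}+ax^{3}+\bar{a}x^{2}+1$ cannot have exactly four roots in $U$, which is precisely what the paper attributes to \cite[Corollary~12]{Dobbertin-Felke-Helleseth-Rosendahl}. Your fixed-point argument for the separable case (a permutation of five points cannot fix exactly four) and your explicit factorization $(x+\sqrt{a})^{2}(x^{3}+\bar{a})$ for the non-separable case $a\in U$ make the cited step self-contained; this is the same conjugate-reciprocal/$\Pi$-orbit mechanism the paper later spells out for the harder $s=4$ case.
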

For the exponent $d=3(2^m-1)+1$, we see that $\gcd(d, 2^n-1) = 1$ is equivalent to $\gcd(5, 2^m+1)=1$, which holds if and only if $m\not\equiv 2 \mod{4}$.
Dobbertin et. al in \cite[Corollary 12]{Dobbertin-Felke-Helleseth-Rosendahl} showed that the polynomial
$
x^{5}+a x^3+\bar{a}x^{2} + 1 
$ in \eqref{Eq-Niho} cannot have 4 roots in $U$, which leads to the above statement. The crosscorrelation distribution for this case was recently settled by Xia et al. \cite{Xia-Li-Zeng-Helleseth},
where the authors transformed the major calculation of the fourth moment to a problem of counting the number of pairwise distinct $z_1, z_2, z_3, z_4 \in U^{4}\setminus \{1\}$ satisfying 
\[
z_1 + z_2 + z_3 + z_4  + 1 = 0.
\]
Surprisingly, this has a close connection with the binary Zetterberg codes \cite{Moisio2007} and can be resolved by the known results of weight distribution of Zetterberg codes.
\begin{theorem}\cite[Theorem 2 (i)]{Xia-Li-Zeng-Helleseth}
	Let $n=2m$ where $m$ is even and satisfies $m\not\equiv 2 \mod{4}$ and $d=3(2^m-1)+1$. Then $C_d(\tau)$ takes on five values:
	\[
	\begin{array}{rcl}
	-1-2^m  & \quad\text{occurs} \quad & \frac{11\cdot 2^{2m}-2^m\tau_m-10\cdot 2^m+1}{30}\text{ times},\\
	-1 & \quad\text{occurs} \quad& \frac{3\cdot 2^{2m}+2^m\tau_m-4\cdot 2^m-9}{8} \text{ times}, \\
	-1+2^m & \quad\text{occurs} \quad& \frac{2^{2m}-2^m\tau_m+6\cdot 2^m+1}{6} \text{ times}, \\	
	-1+2\cdot 2^m & \quad\text{occurs} \quad& \frac{2^{2m}+2^m\tau_m-2\cdot 2^m-1}{12} \text{ times}, \\
	-1+4\cdot 2^m & \quad\text{occurs}\quad & \frac{2^{2m}-2^m\tau_m+1}{120} \text{ times},
	\end{array}
	\]where $\tau_m$ is defined by 
	\[
	\tau_m = \left(\frac{1+\sqrt{-15}}{4}\right)^m + \overline{\left(\frac{1+\sqrt{-15}}{4}\right)^m} = 2\cos\left(m\cdot \arctan(\sqrt{15})\right)
	\] and its value can be computed by the following linear
recurrence equation with $\tau_1=1/2$ and $\tau_2=-7/4$:
\[
\tau_{k+2}= \frac{1}{2}\tau_{k+1}-\tau_k \text{ for } k\geq 1.
\]
\end{theorem}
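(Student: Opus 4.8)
The plan is to reduce to a Niho-type root count via \thref{Lem3}, use the preceding proposition to cap the number of correlation values at five, and then pin down the five occurrence numbers by solving the linear system formed by the zeroth through fourth power moments, the fourth of which is supplied by the weight distribution of the binary Zetterberg code. First I would write $d=3(2^m-1)+1$ as the Niho exponent $s(2^m-1)+1$ with $s=3$ and apply \thref{Lem3}: for $a=\alpha^{\tau}\in\mathbb{F}_{2^n}^{*}$ one has $W_d(a)=C_d(\tau)+1=(N(a)-1)2^m$, where $N(a)$ is the number of roots in the unit circle $U$ of the quintic $x^{5}+ax^{3}+\bar ax^{2}+1=0$, which is \eqref{Eq-Niho} for $s=3$. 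By the preceding proposition this quintic never has exactly four roots in $U$, so $N(a)\in\{0,1,2,3,5\}$, and hence $C_d(\tau)\in\{-1-2^m,-1,-1+2^m,-1+2\cdot2^m,-1+4\cdot2^m\}$. Writing $n_i=\#\{a\in\mathbb{F}_{2^n}^{*}:N(a)=i\}$ for $i\in\{0,1,2,3,5\}$, the occurrence of the value $-1+(i-1)2^m$ is exactly $n_i$, so it remains to compute these five integers. (Recall $m$ even with $m\not\equiv2\bmod4$, i.e. $m\equiv0\bmod4$, is precisely the condition $\gcd(5,2^m+1)=1$ that makes $\gcd(d,2^n-1)=1$.)

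Next, the trivial zeroth moment $n_0+n_1+n_2+n_3+n_5=2^n-1$ together with the first three moments in \thref{Lem2-B}, after substituting $W_d(a)=(i-1)2^m$, yield the relations $\sum_i n_i(i-1)=2^m$, $\sum_i n_i(i-1)^2=2^{2m}$ and $\sum_i n_i(i-1)^3=2^mM_1$, where $M_1=\#\{x\in\mathbb{F}_{2^n}:(x+1)^d+x^d=1\}$. I would evaluate $M_1$ directly, imitating the $d=2^m+3$ computation: since $x^{2^m}=\bar x$ and $d=3\cdot2^m-2$ give $x^d=\bar x^{3}x^{-2}$, the equation $(x+1)^d+x^d=1$ becomes, on clearing denominators, a polynomial relation between $x$ and $\bar x=x^{2^m}$; every $x\in\mathbb{F}_{2^m}$ is a solution (there $x^d=x$), and a short factorisation—using $4\mid m$, so that $\mathbb{F}_{2^2},\mathbb{F}_{2^4}\subset\mathbb{F}_{2^m}$—identifies the full solution set and hence gives $M_1$ in closed form. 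This provides the fourth equation in the five unknowns $n_i$.

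The crux is the fifth equation, coming from the fourth moment $P_d^{(4)}=2^{2n}\sum_aM_a^{2}$ of \thref{Lem2-B}~(4). Here $\sum_aM_a^{2}=\#\{(x,y)\in\mathbb{F}_{2^n}^{2}:(x+1)^d+x^d+(y+1)^d+y^d=0\}$; the diagonal $y=x$ and the reflected locus $y=x+1$ contribute $2\cdot2^n$ solutions, and for the remaining ones I would apply the substitution from the proof of \thref{Lem3} to each of $x,x+1,y,y+1$, reducing the count to the number $T$ of quadruples of pairwise distinct $z_1,z_2,z_3,z_4\in U\setminus\{1\}$ with $z_1+z_2+z_3+z_4+1=0$. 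Such a relation says precisely that $\{1,z_1,z_2,z_3,z_4\}$ is a weight-$5$ codeword, with the coordinate $z=1$ in its support, of the binary Zetterberg code of length $2^m+1$—the cyclic code whose zero set is the $\mathbb{F}_2$-conjugacy class of an element of order $2^m+1$ in $\mathbb{F}_{2^n}$. By transitivity of the cyclic shift on the $2^m+1$ coordinates, $T=4!\cdot\frac{5A_5}{2^m+1}$ with $A_5$ the total number of weight-$5$ codewords, and Moisio's determination of the weight distribution of binary Zetterberg codes \cite{Moisio2007} expresses $A_5$ in terms of $2^m$ and the quantity $\tau_m$—equivalently the integer sequence with characteristic polynomial $2z^{2}-z+2$ (discriminant $-15$) and initial values $\tau_1=1/2$, $\tau_2=-7/4$. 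Feeding $M_1$ and $\sum_aM_a^{2}$ into the $5\times5$ system, whose coefficient matrix is the invertible Vandermonde matrix on $\{-1,0,1,2,4\}$, and solving produces $n_0,n_1,n_2,n_3,n_5$, which are the asserted occurrence counts.

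I expect the main obstacle to be this last step: carrying out the Niho reduction of $\sum_aM_a^{2}$ cleanly while accounting for every degenerate configuration—coincidences among $x,x+1,y,y+1$, the contribution of the $\mathbb{F}_{2^m}$-scalar component in the representation $\mathbb{F}_{2^n}^{*}=\mathbb{F}_{2^m}^{*}\cdot Y$, and quadruples $(z_1,\dots,z_4)$ that are not pairwise distinct or that contain $1$—and then matching the surviving count $T$ to the precise normalisation of Moisio's weight formula. Because the final answer carries the non-elementary term $\tau_m$, an error in a lower-order contribution would not be caught by the power-moment consistency checks, so this bookkeeping has to be exact; by comparison the factorisation giving $M_1$ and the tracking of the congruence conditions on $m$ are routine.
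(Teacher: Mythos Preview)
Your proposal is correct and follows essentially the same approach that the paper describes for the Xia--Li--Zeng--Helleseth proof: the Niho reduction via \thref{Lem3} together with the preceding proposition limits $N(a)$ to $\{0,1,2,3,5\}$, the first three power moments and the computation of $M_1$ furnish four linear equations, and the decisive fifth equation comes from transforming the fourth moment into the count of pairwise distinct $z_1,z_2,z_3,z_4\in U\setminus\{1\}$ with $z_1+z_2+z_3+z_4+1=0$, which is then evaluated through the weight distribution of the binary Zetterberg code \cite{Moisio2007}. Your anticipated difficulty---the careful bookkeeping in the fourth-moment reduction---is indeed where the substantive work lies, but the overall strategy matches the paper exactly.
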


For the case of $d=4(2^m-1)+1$ with even $m$, Niho conjectured \cite[Conjecture 4-6(5)]{Niho} that $C_d(\tau)$ takes on at most five values. Very recently Helleseth, Katz and Li resolved this
conjecture \cite{Helleseth2021}.

\begin{theorem}\cite[Theorem 3 (i)]{Helleseth2021}
	Let $n=2m$ with even $m$ and $d=4(2^m-1)+1$. Then $C_d(\tau)$ takes on at most five values $\{-1+j\cdot 2^m\,|\,j=-1, 0, 1, 2, 4\}$.
\end{theorem}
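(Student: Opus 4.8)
The exponent $d=4(2^m-1)+1$ is a Niho exponent with $s=4$, so by \thref{Lem3} the task reduces to a root count: writing $q=2^m$, $\bar x=x^{q}$ and $U=\{x\in\mathbb{F}_{2^n}:x\bar x=1\}$, for every $a=\alpha^{\tau}\in\mathbb{F}_{2^n}^{*}$ we have $W_d(a)=C_d(\tau)+1=(N(a)-1)q$, where $N(a)=\#\{x\in U:F_a(x)=0\}$ and $F_a(x)=x^{7}+ax^{4}+\bar ax^{3}+1$ (signs being irrelevant in characteristic $2$). So the claim is exactly that $N(a)\notin\{4,6,7\}$ for all such $a$. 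The plan is to move the count from $U$ to the projective line over $\mathbb{F}_q$: fix $\gamma$ with $\bar\gamma=\gamma+1$, put $\delta=\gamma^2+\gamma\in\mathbb{F}_q$ so that $p(t):=t^2+t+\delta$ is irreducible over $\mathbb{F}_q$, and use the Cayley-type bijection $t\mapsto x=(\gamma+t)/(\gamma+1+t)$ from $\mathbb{P}^1(\mathbb{F}_q)$ onto $U$, sending $\infty$ to $1$. Setting $w=\gamma+t$ (so $\bar w=w+1$ and $w\bar w=p(t)$) and using that all $\binom{7}{k}$, $0\le k\le 7$, are odd, one clears denominators in $F_a(x)=0$ to obtain the equivalent condition $\Phi_a(t)=0$, where
\[
\Phi_a(t)=(st+c)\,p(t)^{3}+p(t)+1\in\mathbb{F}_q[t],\qquad s=a+\bar a\in\mathbb{F}_q,\quad c=1+a\gamma+\bar a\bar\gamma\in\mathbb{F}_q .
\]
Here $\deg\Phi_a=7$ precisely when $s\ne 0$, i.e.\ when $a\notin\mathbb{F}_q$, and $x=1$ is a root of $F_a$ precisely when $a\in\mathbb{F}_q$, which offsets that degree drop; hence $N(a)=\#\{t\in\mathbb{F}_q:\Phi_a(t)=0\}+[\,a\in\mathbb{F}_q\,]$. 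Two identities will drive everything: $p(t+1)=p(t)$ and $\Phi_a(t+1)=\Phi_a(t)+s\,p(t)^{3}$.

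When $a\in\mathbb{F}_q$ (so $s=0$) this already suffices: $\Phi_a(t)=(1+a)u^{3}+u+1$ is a cubic in $u=p(t)$, the map $t\mapsto p(t)$ is $2$-to-$1$ onto $\{u\in\mathbb{F}_q:\Tr^m_1(u)=1\}$ (since $\Tr^m_1(t^2+t)=0$ and $\Tr^m_1(\delta)=1$), and because the three roots of $(1+a)u^3+u+1$ in $\overline{\mathbb{F}_q}$ sum to $0$ at most two of them can have absolute trace $1$; so $\#\{t:\Phi_a(t)=0\}\in\{0,2,4\}$ and $N(a)\in\{1,3,5\}$ (with $N(1)=1$, as $\Phi_1(t)=t^2+t+\delta+1$ is irreducible over $\mathbb{F}_q$ for even $m$). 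Thus the whole difficulty is the case $a\notin\mathbb{F}_q$, where $\deg\Phi_a=7$. For this I would use $(t+t_i)(t+t_i+1)=p(t)+p(t_i)$ to show that $\Theta(u):=\Phi_a(t)\,\Phi_a(t+1)$, read as a polynomial in $u=p(t)$, equals $\Theta(u)=s^{2}\prod_{i=1}^{7}\bigl(u+p(t_i)\bigr)\in\mathbb{F}_q[u]$ with $t_1,\dots,t_7$ the roots of $\Phi_a$ and coefficients explicit in $s,c,\delta$. Since $\Phi_a(t_i)=0$ forces $\Phi_a(t_i+1)=s\,p(t_i)^3\ne 0$, no two roots of $\Phi_a$ differ by $1$, the values $p(t_i)$ are pairwise distinct, and the $\mathbb{F}_q$-roots of $\Phi_a$ correspond bijectively to the roots $u\in\mathbb{F}_q$ of $\Theta$ with $\Tr^m_1(u)=1$; hence $N(a)=\#\{u\in\mathbb{F}_q:\Theta(u)=0,\ \Tr^m_1(u)=1\}$. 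Two deductions are then immediate: when $\Phi_a$ is separable (which fails only along the roots of $t^2+t+\delta+s^{-1/4}$, to be handled directly) the seven $p(t_i)$ are distinct, and a degree-$7$ polynomial over $\mathbb{F}_q$ cannot have exactly $6$ distinct $\mathbb{F}_q$-roots, so $N(a)\ne 6$; and the absolute trace of the $u^{6}$-coefficient of $\Theta$, which is $s^{2}\sum_i p(t_i)$, imposes a parity constraint on the number of $p(t_i)$ of trace $1$.

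What remains is to rule out $N(a)=7$ (equivalently, $\Phi_a$ splitting over $\mathbb{F}_q$) and $N(a)=4$. The separability and parity observations trim the possibilities but do not close the argument, so the last step requires a genuinely exponent-specific input: either an elimination/resultant computation showing that the elementary symmetric functions forced on a hypothetical $7$-element (resp.\ $4$-element) subset of $U$ consisting of roots of $F_a$ are incompatible with the coefficient pattern $(1,0,0,a,\bar a,0,0,1)$ of $F_a$ — the $s=4$ analogue of the proof in \cite[Corollary 12]{Dobbertin-Felke-Helleseth-Rosendahl} that $x^{5}+ax^{3}+\bar ax^{2}+1$ has no $4$ roots in $U$ — or a count of $\mathbb{F}_q$-rational points on the genus-at-most-$3$ curve attached to $\Theta$ (Weil error $\le 6\sqrt q$), possibly matched with a known weight distribution of an auxiliary cyclic code, just as the $s=3$ fourth moment was matched against binary Zetterberg codes \cite{Moisio2007,Xia-Li-Zeng-Helleseth}. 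I expect this step — promoting an asymptotic ``$\le q+1+6\sqrt q$''-type estimate to the \emph{exact} exclusion of the values $4$, $6$, $7$, uniformly in the even parameter $m$ — to be the main obstacle; the reductions above are routine once the Cayley substitution is set up. Should the full correlation distribution be wanted afterwards, it follows from the power moments \thref{Lem2-B}\,(1)--(3) together with the count $b_3$ as in \thref{Lem1}\,(6).
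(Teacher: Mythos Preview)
Your reduction via the Cayley bijection $\mathbb{P}^1(\mathbb{F}_q)\to U$ is correct and pleasant, and your treatment of the case $a\in\mathbb{F}_q$ (giving $N(a)\in\{1,3,5\}$) as well as the exclusion of $N(a)=6$ for separable $\Phi_a$ via the degree argument are fine. But the heart of the theorem is precisely the part you flag as unfinished: ruling out $N(a)\in\{4,7\}$ for $a\notin\mathbb{F}_q$. Neither of your proposed routes is viable as stated. A Weil bound of the type $\le q+1+6\sqrt q$ can never give the uniform exclusion of an exact value like $4$ or $7$ for all even $m$; it only controls asymptotics. And the ``resultant/elimination'' analogy with \cite[Corollary~12]{Dobbertin-Felke-Helleseth-Rosendahl} is optimistic: that argument worked for $s=3$ because the relevant symmetric-function system was small, whereas for $s=4$ the same kind of elimination blows up and, to date, no one has made it go through. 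Your trace-of-the-$u^6$-coefficient observation also does not yield the needed parity: $\sum_i p(t_i)\in\mathbb{F}_q$ being computable says nothing about how many of the individual $p(t_i)$ lie in $\mathbb{F}_q$ with trace $1$, since the $t_i$ off $\mathbb{F}_q$ contribute to the sum as well.

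What the paper does, and what you are missing, is an invariant that detects the orbit structure of the root set under the conjugate-reciprocal involution $\Pi:x\mapsto 1/\bar x$ directly on $U$, without any change of model. Concretely, for separable $g_a(x)=x^7+ax^4+\bar ax^3+1$ with root set $R$ one shows the purely algebraic identity
\[
S\ :=\ \sum_{\{u,v\}\subset R,\ u\neq v}\frac{uv}{(u-v)^2}\ =\ 0,
\]
and, crucially, the trace formula $\mathrm{Tr}_m(S)\equiv\binom{|R|+1}{2}+t\pmod 2$, where $t$ is the number of $\Pi$-orbits in $R$. Since $\binom{8}{2}=28\equiv 0$ and $\mathrm{Tr}_m(0)=0$, the number of $\Pi$-orbits is even. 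Roots in $U$ are fixed by $\Pi$ (singleton orbits) and the remaining roots pair off; comparing parities then eliminates $4$, $6$, and $7$ simultaneously. This $S$-invariant is the exponent-specific input you were looking for, and it replaces both the resultant computation and any point-counting. Your Cayley parametrisation could in principle be married to this idea (the involution $\Pi$ on $U$ corresponds to an involution on $\mathbb{P}^1(\mathbb{F}_q)$), but without an analogue of the $S$-identity your argument does not close.
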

In the case of $d=4(2^m-1)+1$, it is clear that the crosscorrelation $C_d(\tau)$ has the form of $-1 + j\cdot 2^m$, where $j+1$ is the number of roots of
$$g_a(x) = x^7+ax^4+\bar{a}x^3+1$$ in $U$. The major task of the above statement is to show $g_a(x)$ cannot have $4$, $6$ and $7$ roots in $U$.
The proof of this statement is complicated and we provide the basic idea of the proof here. Denote by $\bar{x}=x^{2^m}$ the conjugate of $x$ in $\mathbb{F}_{2^m}$.
The polynomial $g_a(x)$ has a special property of being self-conjugate-reciprocal, which implies that $g_a(x)= 0$ if and only if $g_a({1}/{\bar{x}}) = 0$ (This property holds for other integers $s$ as well).
Consider the mapping $\Pi:\,\mathbb{F}_{2^n}\rightarrow \mathbb{F}_{2^n}$ given by $\Pi(x) = 1/\bar{x}$. Since $g_a(x)$ is self-conjugate-reciprocal, the set of its roots in its splitting field is therefore closed under the mapping $\Pi$.
Furthermore, it is shown in \cite{Helleseth2021} that when $g_a(x)$ is separable, let $R$ be the set of its seven distinct roots in the algebraic closure $\overline{\mathbb{F}_2}$, then
\[
S=\sums{\{u, v\}\subset R \\ u\neq v}\frac{uv}{(u-v)^2} = 0.
\]
Let $t$ be the number of distinct $\Pi$-orbits in $R$, it follows that \cite[Proposition 26]{Helleseth2021}
\[
{\rm Tr}_m(S) = {|R|+1 \choose 2} + t \mod{2}.
\]
The above results indicate that $R$ is a union of an even number of $\Pi$-orbits.
Since under the mapping $\Pi$ any element in $U$ is mapped to itself, in other words, the $\Pi$-orbit for any $x\in U$ has size one. This implies that when $g_a(x)$ is separable, it cannot have 4, 6 or 7 roots in $U$.
Nevertheless, the correlation distribution of $C_d(\tau)$ cannot be determined in a similar manner to the case $s=3$. 

\textbf{Open Problem 7.} Let $n=2m$ with even $m$. Compute the crosscorrelation distribution of $C_d(\tau)$ for $d=4(2^m-1)+1$.

For the nonbinary case, a family of ternary Niho-type decimation and a family of decimations for general odd prime $p$
have been found in \cite{Xia2017} and \cite{Helleseth1976}, respectively.

\begin{theorem}\cite[Th. 2]{Xia2017}
	Let $n=2m$ and $d=3(3^m-1)+1$ with $m\not\equiv 2 \mod{4}$. Then $C_d(\tau)$ takes on five values:
		\[
	\begin{array}{rcl}
	-1-3^m  &	 \quad\text{occurs} \quad & \frac{	11\cdot 3^{2m}-16\cdot 3^m-(-1)^m3^m+6}{30} \text{ times},\\
    -1  &	 \quad\text{occurs} \quad & \frac{	3\cdot 	
    	3^{2m}+2\cdot 3^m+(-1)^m3^m-14}{8} \text{ times},\\
    -1+3^m  &	 \quad\text{occurs} \quad & \frac{	3^{2m}-(-1)^m3^m+6}{6} \text{ times},\\
    -1+2\cdot 	3^m  &	 \quad\text{occurs} \quad & \frac{	3^{2m}+4\cdot 3^m+(-1)^m3^m-6}{12} \text{ times},\\
    -1+4\cdot 	3^m  &	 \quad\text{occurs} \quad & \frac{	3^{2m}-6\cdot 3^m-(-1)^m3^m+6}{120} \text{ times}.
	\end{array}
	\]
\end{theorem}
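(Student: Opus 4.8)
The plan is to run the Niho machinery of \thref{Lem3} in characteristic $3$ and then imitate the argument used for the binary exponent $d=3(2^m-1)+1$. First I would check that $d=3(3^m-1)+1$ is an invertible exponent under the stated hypothesis: since $d\equiv 1\pmod{3^m-1}$ and $d\equiv-5\pmod{3^m+1}$, one has $\gcd(d,3^n-1)=\gcd(5,3^m+1)$, which equals $1$ exactly when $3^m\not\equiv4\pmod 5$, i.e. when $m\not\equiv2\pmod 4$. Then \thref{Lem3} with $s=3$ gives $W_d(a)=C_d(\tau)+1=(N(a)-1)3^m$, where $N(a)$ is the number of $x$ in the unit circle $U=\{x\in\mathbb{F}_{3^n}:x^{3^m+1}=1\}$ satisfying
\[
g_a(x):=x^{5}-ax^{3}-\bar ax^{2}+1=0 .
\]
Since $\deg g_a=5$, this already shows $C_d(\tau)\in\{-1+j\cdot3^m:j=-1,0,1,2,3,4\}$, and the theorem reduces to (i) ruling out $N(a)=4$ and (ii) determining the five occurrence numbers.

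For (i) the decisive observation is that $g_a$ is self-conjugate-reciprocal, $g_a(x)=x^{5}\,\widetilde{g}_a(1/x)$, where $\widetilde{g}_a$ is obtained from $g_a$ by raising every coefficient to the power $3^m$; consequently the multiset of roots of $g_a$ in $\overline{\mathbb{F}_3}$ is invariant under $\sigma(x)=x^{-3^m}$, whose fixed locus is exactly $U$. If $g_a$ had four distinct roots in $U$, these already lie in $\mathbb{F}_{3^n}$, so $g_a$ splits over $\mathbb{F}_{3^n}$, the fifth root lies in $\mathbb{F}_{3^n}$ too, and on $\mathbb{F}_{3^n}$ the map $\sigma$ is an involution; the roots outside $U$ then come in $\sigma$-orbits of size $2$, so the number of roots in $U$ has the parity of $\deg g_a=5$, contradicting $4$. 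In the inseparable case a short computation with $g_a'=-x\,(x-c)^{3}$, $c:=\bar{a}^{1/3}$, shows that inseparability is equivalent to $g_a(c)=1-a\bar a=0$, i.e. to $a\in U$; matching coefficients then forces $g_a(x)=(x-c)^{3}(x^{2}-a)$ with $c\in U$, whence $N(a)\le 3$. Thus $N(a)\ne4$ in every case, which is the ``at most five values'' half of the statement.

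For (ii) I would determine the five numbers $N_k=\#\{a\in\mathbb{F}_{3^n}^{*}:N(a)=k\}$, $k\in\{0,1,2,3,5\}$, from the power moments of \thref{Lem2-B}. The zeroth, first and second moments are universal and give three linear equations; the third moment yields a fourth equation once $M_{1}=\#\{x\in\mathbb{F}_{3^n}:(x+1)^{d}-x^{d}=1\}$ is computed, which the Niho shape of $d$ makes an explicit (if slightly tedious) count, in analogy with the $b_{3}$ computation for $d=2^m+3$ recalled in \thref{Th-bin5value-1}. The missing fifth equation comes from the fourth power moment $P_d^{(4)}=3^{2n}\sum_{a}M_a^{2}$, equivalently from $N^{(4)}$, the number of solutions of $x_1+x_2+x_3+x_4=0$ and $x_1^{d}+x_2^{d}+x_3^{d}+x_4^{d}=0$. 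Pushing this through \thref{Lem3} should transform it into a count of additive quadruples in the unit circle --- essentially the number of pairwise distinct $z_1,z_2,z_3,z_4\in U\setminus\{1\}$ with $z_1+z_2+z_3+z_4+1=0$ --- which, in the spirit of the binary $s=3$ treatment, can be evaluated by relating it to the weight distribution of an auxiliary family of ternary (Zetterberg-type) cyclic codes. Solving the resulting $5\times5$ linear system then produces the stated distribution; the $(-1)^m$ appearing there is the quadratic-character term governed by whether $\mathbb{F}_{3^{2}}\subseteq\mathbb{F}_{3^m}$, paralleling the $(-1)^{m+1}$ in \thref{Th-bin5value-1}.

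The hard part is this fourth power moment: establishing $N(a)\ne4$ is clean and the first three moments are automatic, so the entire argument hinges on evaluating the number of additive quadruples in $U$ --- the one place where a nontrivial identification with the weight enumerator of an auxiliary code family (or an equally delicate direct count) has to be carried out.
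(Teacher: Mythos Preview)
Your strategy is precisely the one the paper attributes to Xia \cite{Xia2017}: run \thref{Lem3} with $s=3$, so that $C_d(\tau)+1=(N(a)-1)3^m$ with $N(a)$ the number of unit-circle roots of $x^5-ax^3-\bar ax^2+1$; exclude $N(a)=4$; and then pin down the five multiplicities via power moments, the fourth moment being reduced to a counting problem on $U$ in the spirit of \cite{Xia-Li-Zeng-Helleseth}. Your orbit-parity argument for $N(a)\neq 4$ (separable case) and the explicit factorisation $g_a(x)=(x-c)^3(x^2-a)$ when $a\in U$ (inseparable case) are both correct and, if anything, tidier than the binary $s=3$ treatment in \cite{Dobbertin-Felke-Helleseth-Rosendahl}.

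The one place where your sketch diverges slightly from the paper's account is the endgame for the fourth moment. The paper describes the reduction in \cite{Xia2017} as yielding \emph{two} combinatorial problems on the unit circle of $\mathbb{F}_{p^{2m}}$, resolved by techniques that ``seemed to only work for $p=3$''. In particular, the published proof does not appear to go through a ternary Zetterberg code per se; rather, the clean closed form (only a $(-1)^m$ and no recurrence term $\tau_m$ as in the binary case) reflects a more direct, $p=3$-specific evaluation. So your overall plan is right and your identification of the hard step is exactly on target, but you should expect the unit-circle count to be handled by an argument tailored to characteristic $3$ rather than by importing a Zetterberg weight enumerator wholesale.
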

The above ternary case was proved by
following the idea used in \cite{Xia-Li-Zeng-Helleseth}, where the authors aimed to investigate the distribution of $C_d(\tau)$ for
$d=3(p^m-1)+1$ for any odd prime $p$. They showed that the value distribution of $C_d(\tau)$ is dependent on two combinatorial problems related to the unit circle of $\mathbb{F}_{p^{2m}}$. However, their techniques for handling the problems seemed to only work for $p=3$. The work resolved Open Problem 2.2 proposed in \cite{Dobbertin2001}. New techniques are required to calculate the distribution of $C_d(\tau)$ for general odd prime.

\textbf{Open Problem 8.} Let $n=2m$ with $m\not\equiv 2 \mod{4}$. Determine the distribution of $C_d(\tau)$ for $d=3(p^m-1)+1$ for odd primes $p\geq 5$.

The only proved family of five-valued decimation for a general odd prime $p$ is due to Helleseth \cite{Helleseth1976} and is recalled below.

\begin{theorem}\cite[Theorem 4.10]{Helleseth1976}
	Let $p$ be an odd prime and $p^n\equiv 1 \mod{4}$. Let $d=\frac{1}{2}(p^n-1)+p^i$ with $0\leq i <n$. Then $C_d(\tau)$ takes on five values:
		\[
	\begin{array}{rcl}
	-1  & \quad\text{occurs} \quad & \frac{1}{2}(p^n-5)\text{ times},\\
	-1 \pm  p^{n/2} & \quad\text{occurs} \quad& \frac{1}{4}(p^n-1) \text{ times}, \\
	-1+\frac{1}{2}(p^n\pm p^{n/2}) & \quad\text{occurs} \quad & 1 \text{ time}.	
		\end{array}
	\]
\end{theorem}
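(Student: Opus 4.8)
The plan is to reduce the computation of $W_d(a)=C_d(\tau)+1$ to a short quadratic-character sum. Let $\eta$ denote the quadratic multiplicative character of $\Fpn$, extended by $\eta(0)=0$, so that $x^{(p^n-1)/2}=\eta(x)$ in $\Fpn$ for every $x$. Since $\eta(x)\in\{0,\pm1\}\subseteq\Fp$ and the trace is $\Fp$-linear and Frobenius-invariant, this yields the key identity
\[
\Tr(x^d)=\Tr\big(\eta(x)\,x^{p^i}\big)=\eta(x)\,\Tr\big(x^{p^i}\big)=\eta(x)\,\Tr(x),\qquad x\in\Fpn .
\]
(That $d$ is invertible modulo $p^n-1$ is part of the hypothesis; concretely $dp^{n-i}\equiv\tfrac12(p^n+1)\bmod(p^n-1)$, which is odd because $p^n\equiv1\bmod 4$, and any common divisor of it with $p^n-1$ also divides $p^n+1$, hence divides $2$, hence equals $1$. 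Alternatively one could first reduce to $i=0$ via \thref{Lem2}(iii).)

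Substituting the identity into \eqref{Eq-CC-WT} and splitting the sum over $x\in\Fpn$ into the contributions of $x=0$, of the nonzero squares, and of the nonsquares gives, for $a=\alpha^\tau$,
\[
W_d(a)=1+\sum_{\eta(x)=1}\omega^{\Tr((1-a)x)}+\sum_{\eta(x)=-1}\omega^{\Tr(-(1+a)x)} .
\]
Each inner sum is evaluated by the quadratic Gauss sum $g=\sum_{x\in\Fpn^*}\eta(x)\omega^{\Tr(x)}$: a one-line manipulation gives $\sum_{\eta(x)=1}\omega^{\Tr(cx)}=\tfrac12(-1+\eta(c)g)$ and $\sum_{\eta(x)=-1}\omega^{\Tr(cx)}=\tfrac12(-1-\eta(c)g)$ for $c\neq0$, while both inner sums equal $\tfrac12(p^n-1)$ when $c=0$. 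Moreover $g^2=\eta(-1)p^n$, and the hypothesis $p^n\equiv1\bmod 4$ forces $\eta(-1)=1$, so $g=\varepsilon\,p^{n/2}$ for some sign $\varepsilon\in\{\pm1\}$; the actual value of $\varepsilon$ will be irrelevant since the spectrum is symmetric about $-1$.

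In the degenerate cases $a=1$ and $a=-1$ (where $1-a$, respectively $1+a$, vanishes) one obtains, after using $\eta(-1)=1$, that $W_d(1)=\tfrac12(p^n-\eta(2)g)$ and $W_d(-1)=\tfrac12(p^n+\eta(2)g)$; these two numbers are $\tfrac12(p^n+p^{n/2})$ and $\tfrac12(p^n-p^{n/2})$ in some order and are distinct, so $-1+\tfrac12(p^n\pm p^{n/2})$ each occur for exactly one shift, namely $\tau=0$ and $\tau=\tfrac12(p^n-1)$. For every other $a$, i.e.\ $a\in\Fpn\setminus\{0,1,-1\}$, the two Gauss-sum terms cancel the leading $1$ and leave $W_d(a)=\tfrac{g}{2}\big(\eta(1-a)-\eta(1+a)\big)\in\{0,g,-g\}$, hence $C_d(\tau)\in\{-1,-1\pm p^{n/2}\}$. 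The multiplicities come from a routine character count of the $p^n-3$ such $a$ by the pair $(\eta(1-a),\eta(1+a))$: expanding $\tfrac14\sum_{a\ne1,-1}\big(1+\epsilon_1\eta(1-a)\big)\big(1+\epsilon_2\eta(1+a)\big)$, the only nontrivial ingredients are $\sum_{x\in\Fpn^*}\eta(x)=0$ and the standard evaluation $\sum_{a\in\Fpn}\eta(1-a^2)=-\eta(-1)=-1$, after which one subtracts the contribution of $a=0$, which falls in the bucket $(1,1)$. Collecting terms: the two buckets with $\eta(1-a)=\eta(1+a)$ (value $-1$) have combined size $\tfrac12(p^n-5)$, while the buckets $(1,-1)$ and $(-1,1)$ (values $-1+g$ and $-1-g$) each have size $\tfrac14(p^n-1)$; since $\{g,-g\}=\{p^{n/2},-p^{n/2}\}$ this is exactly the stated distribution, and $\tfrac12(p^n-5)+2\cdot\tfrac14(p^n-1)+2=p^n-1$ as it must.

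There is no serious obstacle here: once the identity $\Tr(x^d)=\eta(x)\Tr(x)$ is spotted, the whole argument is elementary. The two places needing care are (i) the hypothesis $p^n\equiv1\bmod 4$, used precisely to get $\eta(-1)=1$ — this is what makes $g$ a real square root of $p^n$ and makes the two singleton values come out as $-1+\tfrac12(p^n\pm p^{n/2})$ rather than collapsing to one value — and (ii) the bookkeeping in the final character count, in particular remembering that $a=0$ is not of the form $\alpha^\tau$ and that removing it affects only the $(1,1)$ bucket.
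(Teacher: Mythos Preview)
Your proof is correct and follows essentially the same route as the paper: the paper's ``splitting technique'' rewrites $W_d(a)$ as $\tfrac12\big(\sum_x\omega^{\Tr((1-a)x^2)}+\sum_x\omega^{\Tr((1+a)\gamma x^2)}\big)$ with $\gamma$ a fixed nonsquare, which is exactly your decomposition into squares and nonsquares expressed via the substitution $x\mapsto x^2$, $x\mapsto \gamma x^2$ instead of the character $\eta$; both reduce immediately to the quadratic Gauss sum. Your write-up is more detailed than the paper's sketch (it carries out the case analysis for $a=\pm1$ and the character count explicitly), but the method is the same.
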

The calculation of $C_d(\tau)$ for $d=\frac{1}{2}(p^n-1)+p^i$  involves a splitting technique: 
\begin{equation*}
 \begin{split}
 C_d(\tau) & = -1 + \sum_{x\in \Fpn}\omega^{\Tr(x^d -ax)} \\
 &= -1 + \frac{1}{2}\left(\sum_{x\in \Fpn}\omega^{\Tr(x^2(1-a))}+\sum_{x\in \Fpn}\omega^{\Tr(x^2(1+a)\gamma) }\right),
 \end{split}
\end{equation*} where $\gamma$ is a non-square in $\Fpn^*$, i.e., $\gamma^{(p^n-2)/2} = - 1$. More detailed discussion of the above two exponential sums yields the correlation distribution in the above theorem.

\section{Six-valued crosscorrelation}
So far five families of decimations for binary m-sequences and two families of decimations for nonbinary case have been found in the literature. 
This section recalls those decimations and the corresponding correlation distributions.

\begin{theorem}\cite[Theorem 3.1]{Helleseth-1978}\thlabel{Th-H-1978}
	Let $n=4m$ with even $m$ and $d=2^{2m}-2^m+1$. Then $C_d(\tau)$ takes on the following values
		\[
	\begin{array}{rcl}
	-1+2^m & \quad\text{occurs} \quad &  (2^{4m}-2^m)/3 \text{ times},\\
	-1  & \quad\text{occurs} \quad&  2^{4m-1}-2^{3m-1}+2^{2m-1}-2^{m-1}-2 \text{ times}, \\
	-1-2^{2m}& \quad\text{occurs} \quad &  2^{3m}-2^{2m}\text{ times},\\
	-1-2^{2m+1}& \quad\text{occurs} \quad &  (2^{4m}-3\cdot 2^{3m} + 3\cdot 2^{2m}-2^m)/6 \text{ times},\\
	-1+2^{{3m}}& \quad\text{occurs} \quad & 1 \text{ time},\\	
	-1+2^{2m}(2^m-1)& \quad\text{occurs} \quad & 2^m \text{ times}.	
	\end{array}
	\]
\end{theorem}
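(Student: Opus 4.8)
By the identity \eqref{Eq-CC-WT} it is equivalent to determine the Walsh spectrum of $f(x)=\Tr(x^{d})$ over $\mathbb{F}_{2^{4m}}$, i.e.\ the values $W_d(a)=C_d(\tau)+1$ for $a=\alpha^{\tau}\in\mathbb{F}_{2^{4m}}^{*}$; the six claimed values amount to $W_d(a)\in\{0,\,2^{m},\,-2^{2m},\,-2^{2m+1},\,2^{3m},\,2^{3m}-2^{2m}\}$. The plan is to work over the intermediate field $K=\mathbb{F}_{2^{2m}}$, over which $\mathbb{F}_{2^{4m}}$ is a quadratic extension. For $x\in\mathbb{F}_{2^{4m}}^{*}$ put $T=x+x^{2^{2m}}$ and $N=x^{1+2^{2m}}$, both in $K$ with $N\neq 0$. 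Since $d=2^{2m}-2^{m}+1$ one has $x^{d}=N x^{-2^{m}}$, hence $\Tr^{4m}_{2m}(x^{d})=N\bigl(x^{-2^{m}}+x^{-2^{3m}}\bigr)=N\,T^{2^{m}}/N^{2^{m}}$, and therefore
\[
\Tr(x^{d})=\Tr^{2m}_{1}\!\bigl(T^{2^{m}}/N^{2^{m}-1}\bigr),
\]
uniformly for all $x\neq 0$ (for $x\in K$ both sides vanish).

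First I would substitute this into $W_d(a)=\sum_{x}(-1)^{\Tr(x^{d})+\Tr(ax)}$ and regroup the sum according to the pair $(T,N)$. The elements $x\in K$ (where $T=0$) contribute $\sum_{x\in K}(-1)^{\Tr(ax)}$, which equals $2^{2m}$ when $a\in K$ and $0$ otherwise; so the subfield $K$ adds a clean $2^{2m}$ precisely for $a\in K^{*}$. For $T\neq 0$ the two preimages of an \emph{admissible} pair $(T,N)$ -- those with $\Tr^{2m}_1(N/T^{2})=1$ -- are the roots $x_1,x_2$ of $X^{2}+TX+N$, and $(-1)^{\Tr(ax_1)}+(-1)^{\Tr(ax_2)}$ equals $2(-1)^{\Tr(ax_1)}$ if $\Tr^{2m}_1\bigl(T\cdot\Tr^{4m}_{2m}(a)\bigr)=0$ and vanishes otherwise; on the surviving pairs $\Tr(ax_1)$ is well defined and is controlled by an explicit quadratic equation over $K$. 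This turns $W_d(a)$ into an explicit character sum over $K=\mathbb{F}_{2^{2m}}$. Crucially, the filter $\Tr^{2m}_1(T\cdot\Tr^{4m}_{2m}(a))$ degenerates exactly when $a\in K$ (then $\Tr^{4m}_{2m}(a)=0$, so \emph{all} pairs survive), which -- together with the $+2^{2m}$ bonus -- is the source of the two exceptionally large Walsh values.

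The heart of the argument is evaluating these subfield sums. Using the tower $\mathbb{F}_{2}\subset\mathbb{F}_{2^{m}}\subset\mathbb{F}_{2^{2m}}$ and the decomposition $K^{*}=\mathbb{F}_{2^{m}}^{*}\cdot U'$ with $U'=\{u:u^{2^{m}+1}=1\}$, the sums reduce to solution counts over $U'$ and additive-character sums over $\mathbb{F}_{2^{m}}$ -- in effect Gold-type quadratic-form evaluations inside $\mathbb{F}_{2^{2m}}$; carrying out the remaining (heavily cancelling) summation then produces exactly the six listed values. In particular $2^{3m}$ occurs only when the governing linear form degenerates completely, forcing an inner $T$-sum to its extreme value $2^{2m}$ and the residual sum to $2^{m}$; this happens for a single $a$, so $-1+2^{3m}$ occurs once, while $2^{3m}-2^{2m}$ is pinned to an explicit set of $2^{m}$ values of $a$ (a punctured orbit attached to $U'$). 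This step is also where I expect the main obstacle: the clean evaluation of the subfield character sums over $\mathbb{F}_{2^{2m}}$ and, above all, the precise determination of which $a\in\mathbb{F}_{2^{4m}}^{*}$ falls into each value class -- especially isolating the singleton $W_d(a)=2^{3m}$ and the size-$2^{m}$ class $W_d(a)=2^{3m}-2^{2m}$ -- since these exceptional $a$'s are precisely what turns the otherwise Gold-like three-valued behaviour into a six-valued spectrum.

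Once the six values are established, their multiplicities $A_{0},\dots,A_{5}$ follow from the power-moment identities of \thref{Lem2}: the trivial $0$th moment $\sum_{\tau}1=2^{4m}-1$, the first moment $\sum_{\tau}C_d(\tau)=1$ (part iv)), and the second moment $\sum_{\tau}C_d(\tau)^{2}=p^{2n}-p^{n}-1$ with $p^{n}=2^{4m}$ (part v) with $t=0$) give three linear relations; the third and fourth moments (part vi) with $l=3,4$) give two more once the small solution counts $b_{3},b_{4}$ of the systems \eqref{Eq-powermoment} -- equivalently the quantities $M_{a}$ of \thref{Lem2-B} -- are evaluated; and the singleton $A_{4}=1$ (or, alternatively, the direct count $A_{5}=2^{m}$) furnishes the sixth. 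Solving this $6\times 6$ linear system yields the stated distribution; computing $b_{4}$ is somewhat delicate for this Kasami-type exponent, but it affects only the bookkeeping of multiplicities, not the list of values.
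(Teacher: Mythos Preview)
Your trace--norm reduction over $K=\mathbb{F}_{2^{2m}}$ is correct as far as it goes (the identity $\Tr(x^{d})=\Tr^{2m}_{1}\bigl(T^{2^{m}}/N^{2^{m}-1}\bigr)$ checks out), but the route is genuinely different from the paper's and leaves the hardest step unexecuted. The paper does not decompose additively via $(T,N)$; instead it uses a \emph{multiplicative} coset trick. Replace $d$ by $d_{1}=2^{m}d=(2^{m}-1)(2^{2m}+1)+1$, observe that $(d_{1}-1)(2^{m}+1)\equiv 0\pmod{2^{n}-1}$, and write $t=i(2^{m}+1)+j$ with $0\le j\le 2^{m}$. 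Then $\alpha^{d_{1}t}-\alpha^{\tau+t}=\alpha^{i(2^{m}+1)}(\alpha^{d_{1}j}-\alpha^{\tau+j})$, so each inner $i$-sum is a complete $(2^{m}+1)$-th power character sum, equal to $2^{4m}$, $-2^{3m}$, or $2^{2m}$ according as the coefficient lies in $C_{\infty}=\{0\}$, $C_{0}=\{x^{2^{m}+1}\}$, or $C_{1}$. Hence
\[
C_{d}(\tau)+1=\frac{1}{2^{m}+1}\Bigl(2^{4m}n_{\infty}(c)-2^{3m}n_{0}(c)+2^{2m}n_{1}(c)\Bigr),
\]
and the whole problem collapses to listing the possible triples $(n_{\infty},n_{0},n_{1})$ with $n_{\infty}+n_{0}+n_{1}=2^{m}+1$. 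A short equation-counting argument shows only six triples occur, namely $(0,r,2^{m}+1-r)$ for $r\in\{0,1,2,3\}$ and $(1,0,2^{m})$, $(1,1,2^{m}-1)$; the last two are directly seen to occur $1$ and $2^{m}$ times, respectively. The remaining four multiplicities then follow from the $0$th through $3$rd power moments only --- no fourth moment and no $b_{4}$ are needed.

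What your approach buys is a self-contained quadratic-form viewpoint over $K$ that does not rely on spotting the congruence $d_{1}\equiv 1\pmod{2^{m}+1}$; what it costs is exactly the step you flagged as the main obstacle, namely evaluating the constrained character sum over pairs $(T,N)$ with $\Tr^{2m}_{1}(N/T^{2})=1$ and isolating the exceptional $a$'s. In the paper's route that difficulty never arises: the six values and the two exceptional multiplicities drop out of the coset table, and the bookkeeping is linear. Your plan to invoke the fourth moment is therefore unnecessary overhead (and, as you note, computing $b_{4}$ for this exponent is itself nontrivial); if you pursue your line, you should instead aim to read off \emph{both} special multiplicities $1$ and $2^{m}$ from your structural analysis, after which the first three moments suffice.
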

\noindent\textbf{Sketch of Proof.} Denote $d_1=2^m(2^{2m}-2^m+1)=(2^{2m}+1)(2^m-1)+1$. Then the distribution of $C_d(\tau)$
is the same as $C_{d_1}(\tau)$. Take $N=2^m+1$ and denote 
$$
C_0=\{ x^{2^m+1}\,|\, x\in \mathbb{F}_{2^n}^*\}, \quad C_1 = \mathbb{F}_{2^n}^*\setminus C_0, \quad \text{and}\quad C_{\infty} = \{0\}.
$$
Then 
\[
\sum_{x\in \mathbb{F}_{2^n}} (-1)^{\Tr(ax^N)} = \begin{cases}
2^{4m}, & \text{ if } a \in C_{\infty}, \\
-2^{3m},& \text{ if } a \in C_{0}, \\
2^{2m},& \text{ if } a \in C_{1}.
\end{cases}
\]
Note that $(d_1-1)N \equiv 0 \mod(2^n-1)$.
Hence
\begin{equation*}
\begin{split}
C_{d}(\tau) &= \sum_{t=0}^{2^n-2}(-1)^{\Tr(\alpha^{d_1t}-\alpha^{t+\tau})} \\
&= \sum\limits_{j=0}^{N-1}\sum\limits_{i=0}^{\frac{2^{n}-1}{N}-1}(-1)^{\Tr(\alpha^{d_1(iN+j)}-\alpha^{\tau+iN+j})} \\
&= \sum\limits_{j=0}^{N-1}\sum\limits_{i=0}^{\frac{2^{n}-1}{N}-1}(-1)^{\Tr(\alpha^{iN}(\alpha^{jd_1}-\alpha^{\tau+j}))} \\
&= -1 +\frac{1}{N}\sum\limits_{j=0}^{N-1}\sum\limits_{x\in \mathbb{F}_{2^n}}(-1)^{\Tr(x^N(\alpha^{jd_1}-\alpha^{\tau+j}))}\\
& = -1 + \frac{1}{2^m+1}(2^{4m}n_{\infty}(c)-2^{3m}n_0(c)+2^{2m}n_1(c)),
\end{split}
\end{equation*}
where $n_i(c)$ for $i=0, 1, \infty$ and $c=\alpha^{\tau}$ is defined by $$n_i(c) = \#\{j\,|\, \alpha^{jd_1}-\alpha^{\tau+j} \in C_i, \, 0\leq j <N\}.$$
Helleseth transformed the problem of counting $n_i(c)$ to that of studying the number of solutions to certain equations, which gives the following possibilities for $(n_{\infty}(c), n_0(c), n_1(c))$:
\[
\begin{array}{cccc}
n_{\infty}(c) &  n_0(c) &  n_1(c) & \text{Number of occurrences} \\
0 & 0 & 2^m+1 & S_0 \\
0 & 1 & 2^m     & S_1 \\
0 & 2 & 2^m-1     & S_2 \\
0 & 3 & 2^m-2     & S_3 \\
1 & 0 & 2^m     & 1 \\
1 & 1 & 2^m-1     & 2^m
\end{array}
\]
Consequently $C_d(\tau)$ takes on the following values
 	\[
 \begin{array}{rcll}
 C_d(\tau) & = & -1 + (1-r)2^{2m} &\text{ if } (n_{\infty}(c), n_0(c), n_1(c)) = (0, r, 2^m+1-r), \\
                 & = & -1 + 2^{3m} &\text{ if } (n_{\infty}(c), n_0(c), n_1(c)) = (1,0, 2^m), \\
                 & = & -1 + (2^m-1)2^{2m} &\text{ if } (n_{\infty}(c), n_0(c), n_1(c)) = (1,1, 2^m-1), \\
 \end{array}
 \]where $0\leq r\leq 3$. This gives the possible values of the crosscorrelation function. The distribution of $C_d(\tau)$ is determined by the occurrence $S_r$ of $C_d(\tau)=-1+(1-r)2^{2m}$ for $0\leq r\leq 3$,
 which can be obtained from the first three power moments in \thref{Lem2-B}.
 
 The next two families of decimations are of Niho type. When $n=2m$ with odd $m$ and $d=3(2^m-1)+1$, the polynomial $g_a(x)$ in \eqref{Eq-Niho} has the form $g_a(x)=x^5 + a x^3 + \bar{a}x^2 + 1$.
 Since $g_a(x)$ can have $i\in\{0,1,2,3,4,5\}$ possible solutions, the corresponding $C_d(\tau)$ takes on at most six values. The distribution of $C_d(\tau)$ has remained unsolved for more than 30 years. 
 A significant progress is due to Dobbertin et al. \cite{Dobbertin-Felke-Helleseth-Rosendahl}, where the authors utilized properties of Dickson polynomials and  expressed the occurrences of possible values of $C_d(\tau)$
 in terms of Kloosterman sums over finite fields.
 \begin{theorem}\cite[Theorem 2]{Dobbertin-Felke-Helleseth-Rosendahl}
 	Let $n=2m$ with odd $m$ and $d=3(2^m-1)+1$. Then $C_d(\tau)$ takes on the following six values:
 		\[
 	\begin{array}{rcl}
 	-1-2^m  & \quad\text{occurs} \quad & \frac{11\cdot 2^{2m}-24\cdot 2^m+R}{30}\text{ times},\\
 	-1 & \quad\text{occurs} \quad& \frac{9\cdot 2^{2m}+22\cdot 2^m -3R-20}{24} \text{ times}, \\
 	-1+2^m & \quad\text{occurs} \quad& \frac{2^{2m}-2\cdot 2^m+R-4}{6} \text{ times}, \\	
 	-1+2\cdot 2^m & \quad\text{occurs} \quad& \frac{2^{2m}-R+12}{12} \text{ times}, \\
 	-1 + 3\cdot 2^m & \quad \text{occurs} \quad& \frac{2^m-2}{3} \text{ times}, \\
 	-1+4\cdot 2^m & \quad\text{occurs}\quad & \frac{2^{2m}-14\cdot 2^m+R+20}{120} \text{ times},
 	\end{array}
 	 	\]where $R$ is defined by 
 	\[
 	R = \sum_{y\in \mathbb{F}_{2^m}\setminus \mathbb{F}_2}(-1)^{\Tr_m(1/y)}K\left(\frac{1}{y^3+y}\right)
 	\] and $K(a)$ denotes the Kloosterman sum $$K(a)=\sum_{x\in \mathbb{F}_{2^m}}(-1)^{\Tr_m(x^{2^m-2}+ax)}.$$
 \end{theorem}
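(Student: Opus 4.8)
The plan is to apply the Niho reduction of \thref{Lem3} and then pin down the six occurrence numbers from one direct argument together with the first four power moments. By \thref{Lem3} with $s=3$ and $p=2$, for $a=\alpha^{\tau}\in\mathbb{F}_{2^n}^*$ one has $C_d(\tau)+1=W_d(a)=(N(a)-1)2^m$, where $N(a)$ is the number of roots in the unit circle $U$ of $g_a(x)=x^5+ax^3+\bar a x^2+1$. Since $0\le N(a)\le 5$, this already yields the six candidate values $-1+(j-1)2^m$ for $j=0,\dots,5$, so it remains to determine $t_j:=\#\{a\in\mathbb{F}_{2^n}^*:N(a)=j\}$.

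First I would locate the inseparable $g_a$: a short computation gives $g_a'(x)=x^2(x^2+a)$ and $g_a(\sqrt a)=a\bar a+1$, so $g_a$ has a repeated root exactly when $a\in U$, in which case one checks $g_a(x)=(x+\sqrt a)^2(x^3+\bar a)$. Using that $m$ is odd — so $\gcd(3,2^m-1)=1$, $\gcd(5,2^n-1)=1$ and $3\mid 2^m+1$ — one sees that $\sqrt a$ and the three cube roots of $\bar a$ all lie in $U$ and are pairwise distinct unless $a=1$; hence $N(a)=4$ when $a\in U\setminus\{1\}$ is a cube, $N(a)=1$ when $a\in U\setminus\{1\}$ is not a cube, and $N(1)=3$. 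When $a\notin U$ the polynomial $g_a$ is separable of degree $5$, and since the product of its roots is $1$ and $U$ is a group it cannot have exactly four roots in $U$; consequently $N(a)=4$ occurs precisely for the cubes in $U\setminus\{1\}$, of which there are $(2^m+1)/3-1$, giving $t_4=(2^m-2)/3$. (In passing this also identifies $(2^{m+1}+2)/3$ of the $a\in U$ as contributing to $t_1$ and $a=1$ as contributing to $t_3$.)

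Next I would invoke \thref{Lem2-B}. Since $W_d(0)=0$, the trivial, first and second moments give at once $\sum_j t_j=2^n-1$, $\sum_j(j-1)t_j=2^m$ and $\sum_j(j-1)^2t_j=2^{2m}$. For the third moment I would compute $M_1=\#\{x:(x+1)^d+x^d=1\}$: substituting $x^d=\bar x^3x^{-2}$ and clearing denominators, the defining equation for $x\neq0,1$ collapses to $(\bar x+x^2)(\bar x+x)^2=0$, i.e. $x\in\mathbb{F}_{2^m}$ or $x\in\mathbb{F}_4$, so together with $x\in\{0,1\}$ one gets $M_1=2^m+2$ and $\sum_j(j-1)^3t_j=2^m M_1=2^{2m}+2^{m+1}$. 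The last equation is the fourth moment $P_d^{(4)}=2^{2n}\sum_aM_a^2$ with $M_a=\#\{x:(x+1)^d+x^d=a\}$, so that $\sum_j(j-1)^4t_j=\sum_aM_a^2=\#\{(x,y):\Delta(x)=\Delta(y)\}$, where $\Delta(x)=(x+1)^d+x^d$.

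The hard part is evaluating this last count. The symmetry $\Delta(x)=\Delta(x+1)$ accounts for a term $2\cdot 2^n$; for the remainder I would substitute $x^d=\bar x^3x^{-2}$, clear denominators in $\Delta(x)=\Delta(y)$, and exploit the conjugation $x\mapsto\bar x$ together with the Dickson identities $D_r(z+z^{-1})=z^r+z^{-r}$ on $U$ to rewrite the equation over the subfield $\mathbb{F}_{2^m}$; the contribution that does not close to a polynomial in $2^m$ is exactly the Kloosterman-type sum $R=\sum_{y\in\mathbb{F}_{2^m}\setminus\mathbb{F}_2}(-1)^{\Tr_m(1/y)}K\left(\frac{1}{y^3+y}\right)$. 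Once $\sum_aM_a^2$ is expressed as a polynomial in $2^m$ plus a fixed multiple of $R$, the six relations above — five power sums $\sum_j(j-1)^{k}t_j$ for $k=0,\dots,4$ over the distinct nodes $-1,0,1,2,3,4$, together with $t_4=(2^m-2)/3$ — form a Vandermonde-type linear system with unique solution, which is the stated distribution. I expect the bookkeeping in this fourth-moment reduction — tracking the loci where denominators vanish, the small subfields $\mathbb{F}_2,\mathbb{F}_4$, and the passage from the resulting curve count to Kloosterman sums — to be by far the most delicate step, with everything else routine.
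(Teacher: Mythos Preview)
Your plan is correct and follows essentially the same route as the original Dobbertin--Felke--Helleseth--Rosendahl argument that the survey cites: the Niho reduction to the quintic $g_a(x)=x^5+ax^3+\bar a x^2+1$, the direct determination of $t_4=(2^m-2)/3$ via the separability analysis and the group structure of $U$, the first three power moments together with $M_1=2^m+2$, and then the fourth moment handled through Dickson polynomials over $\mathbb{F}_{2^m}$, which is precisely where the Kloosterman expression $R$ enters. Your identification of the fourth-moment reduction as the only genuinely delicate step is accurate; the rest is indeed routine linear algebra on a Vandermonde-type system.
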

  Although the Kloosterman sum has been well studied and there are some results characterizing the values of Kloosterman sum, the calculation of $R$ in the above theorem is still intractable. As introduced in the previous section, 
  by an interesting connection to the binary Zetterberg codes, Xia et. al \cite{Xia-Li-Zeng-Helleseth} completely resolved the distribution of $C_d(\tau)$, which for odd $m$ is given as follows.
\begin{theorem}\cite[Theorem 2 (ii)]{Xia-Li-Zeng-Helleseth}
	Let $n=2m$ with odd $m$ and $d=3(2^m-1)+1$. Then $C_d(\tau)$ takes on the following six values:
	\[
	\begin{array}{rcl}
	-1-2^m  & \quad\text{occurs} \quad & \frac{11\cdot 2^{2m}-2^m\tau_m-22\cdot 2^m+1}{30}\text{ times},\\
	-1 & \quad\text{occurs} \quad& \frac{9\cdot 2^{2m}+3\cdot 2^m\tau_m+16\cdot 2^m-23}{24} \text{ times}, \\
	-1+2^m & \quad\text{occurs} \quad& \frac{2^{2m}-2^m\tau_m - 3}{6} \text{ times}, \\	
	-1+2\cdot 2^m & \quad\text{occurs} \quad& \frac{2^{2m}+2^m\tau_m-2\cdot 2^m+11}{12} \text{ times}, \\
	 -1 + 3\cdot 2^m & \quad \text{occurs} \quad& \frac{2^m-2}{3} \text{ times}, \\
	-1+4\cdot 2^m & \quad\text{occurs}\quad & \frac{2^{2m}-2^m\tau_m-12\cdot 2^m+21}{120} \text{ times},
	\end{array}
	\]where $\tau_m$ can be computed by the linear
	recurrence equation $\tau_{k+2}= \frac{1}{2}\tau_{k+1}-\tau_k$ for $k\geq 1$
	with $\tau_1=1/2$ and $\tau_2=-7/4$.
\end{theorem}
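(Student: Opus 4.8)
The plan is to run the program already used for $s=3$ with $m$ even, adapted to the arithmetic feature that $3\mid 2^m+1$ when $m$ is odd. Since $m$ is odd we have $m\not\equiv 2\pmod 4$, hence $\gcd(5,2^m+1)=1$ and $\gcd(d,2^n-1)=1$, so \thref{Lem3} applies with $s=3$: for $a=\alpha^\tau\in\Fpn^{*}$ one has $C_d(\tau)+1=W_d(a)=(N(a)-1)2^m$, where $N(a)$ is the number of roots in the unit circle $U$ of $g_a(x)=x^5+ax^3+\bar a x^2+1$. Thus $C_d(\tau)$ lies in $\{-1+j\cdot 2^m:-1\le j\le 4\}$. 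Writing $n_j$ for the number of $a\in\Fpn^{*}$ with $N(a)=j+1$, the theorem amounts to determining the six numbers $n_{-1},\dots,n_4$.

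First I would pin down $n_3$ (the frequency of $N(a)=4$) directly. From $g_a'(x)=x^2(x^2+a)$ the polynomial $g_a$ is inseparable precisely when $\sqrt a$ is a root, i.e.\ when $a\in U$, and then polynomial division gives $g_a(x)=(x-\sqrt a)^2(x^3+\bar a)$. If $g_a$ is separable (equivalently $a\notin U$), then four distinct roots in $U\subset\Fpn$ would force the fifth root into $\Fpn$ as well; but the roots of $g_a$ lying in $\Fpn$ form a set invariant under the involution $x\mapsto 1/\bar x$ (because $g_a$ is conjugate-reciprocal, $g_a(x)=x^5\overline{g_a}(1/x)$), an involution whose only fixed points are the elements of $U$, so the fifth root would again lie in $U$ --- a contradiction. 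Hence $N(a)=4$ forces $a\in U$ with $x^3=\bar a$ having three solutions in $U$, which, since $3\mid 2^m+1$, means exactly that $\bar a$ is a nonzero cube in $U$; the single exception is $a=1$, where $\sqrt a=1$ is itself one of those cube roots. This gives $n_3=\tfrac{2^m+1}{3}-1=\tfrac{2^m-2}{3}$.

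Next I would assemble a linear system from the power moments. The trivial $0$th moment gives $\sum_j n_j=2^n-1$; using $W_d(0)=0$, \thref{Lem2-B}(1)--(3) give $\sum_j j\,n_j=2^m$, $\sum_j j^2 n_j=2^{2m}$ and $\sum_j j^3 n_j=2^m M_1$, where $M_1=\#\{x\in\Fpn:(x+1)^d+x^d=1\}$ is evaluated by reducing its defining equation to a subfield condition, exactly as in the sketch for $d=2^m+3$; and \thref{Lem2-B}(4) gives $\sum_j j^4 n_j=\sum_a M_a^2$, whose right-hand side is determined by the number $N^{(4)}$ of $(x_1,x_2,x_3,x_4)\in\Fpn^4$ with $x_1+x_2+x_3+x_4=0=x_1^d+x_2^d+x_3^d+x_4^d$. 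Together with the value of $n_3$ this is six equations in the six unknowns, with a nonsingular (Vandermonde-type) coefficient matrix.

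The decisive and hardest step is the evaluation of $N^{(4)}$. Following the $m$ even treatment, I would use the Niho decomposition $x_i=h_iy_i$ with $h_i\in\mathbb{F}_{2^m}^{*}$, sum out the $h_i$, and then, via inclusion--exclusion over the degenerate configurations (some $x_i=0$, or coincidences among the circle parameters), reduce $N^{(4)}$ to the number of \emph{pairwise distinct} $z_1,z_2,z_3,z_4\in U\setminus\{1\}$ with $z_1+z_2+z_3+z_4+1=0$. The crux is then to recognise this count as (a scalar multiple of) a weight coefficient of the binary Zetterberg code of length $2^m+1$, whose full weight distribution is known \cite{Moisio2007}; its closed form is governed by $\tau_m=\big(\tfrac{1+\sqrt{-15}}{4}\big)^m+\overline{\big(\tfrac{1+\sqrt{-15}}{4}\big)^m}$, the two conjugates being the roots of $x^2-\tfrac12 x+1$, which is exactly why $\tau_m$ satisfies $\tau_{k+2}=\tfrac12\tau_{k+1}-\tau_k$. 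Substituting this value of $N^{(4)}$ (and of $M_1$ and $n_3$) into the linear system and solving yields the six stated frequencies; a final check that each is a nonnegative integer summing to $2^n-1$ confirms, a posteriori, that all six values genuinely occur. The routine parts are the moment bookkeeping and the elementary finite-field manipulations; the real work is the reduction of $N^{(4)}$ to the circle count and its identification with the Zetterberg weight distribution.
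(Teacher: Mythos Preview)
Your proposal is correct and follows essentially the same approach that the survey describes (and that \cite{Xia-Li-Zeng-Helleseth} uses): reduce via \thref{Lem3} to root counts of $g_a(x)=x^5+ax^3+\bar a x^2+1$ on $U$, pin down the frequency $n_3=(2^m-2)/3$ directly, and solve the linear system coming from the $0$th through $4$th power moments, where the fourth moment is obtained by reducing $N^{(4)}$ to the count of pairwise distinct $z_1,\dots,z_4\in U\setminus\{1\}$ with $z_1+z_2+z_3+z_4+1=0$ and then invoking the known weight distribution of the binary Zetterberg code. Your direct determination of $n_3$ (via the factorization $g_a(x)=(x+\sqrt a)^2(x^3+\bar a)$ for $a\in U$ and the conjugate-reciprocal/involution argument for $a\notin U$) recovers exactly the value $\tfrac{2^m-2}{3}$ already obtained by Dobbertin et al., and the rest is the Xia--Li--Zeng--Helleseth computation; the paper's remark $R=-2^m\tau_m+2\cdot 2^m+1$ is precisely the translation between the two formulations.
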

From the above distributions of $C_d(\tau)$ for $d=3(2^{m}-1)+1$, the parameter $R$ can be characterized in terms of $\tau_m$ as: $R = -2^m\tau_m+2\cdot 2^{m}+1$.
It would be interesting to explore what this relation indicates for the characterization of the Kloosterman sum.

For the Niho-type exponent $d=4(2^m-1)+1$, when $m$ is odd, Helleseth, Katz and Li \cite{Helleseth2021} used the same technique as described in the previous section to prove that $C_d(\tau)$ takes on at most six values. 
Nevertheless, the distribution of $C_d(\tau)$ still remains open.

\begin{theorem}\cite[Theorem 3 (ii)]{Helleseth2021}
	Let $n=2m$ with odd $m$ and $d=4(2^m-1)+1$. Then $C_d(\tau)$ takes on at most six values from the set $\{-1 + j \cdot 2^m \,|\, j=-1,0,1,2,3,4\}.$
\end{theorem}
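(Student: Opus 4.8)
The plan is to follow the Niho-type reduction and then the self-conjugate-reciprocal argument already used for the five-valued even-$m$ case. Since $d=4(2^m-1)+1=s(2^m-1)+1$ with $s=4$, \thref{Lem3} gives, for every $a=\alpha^{\tau}\in\mathbb{F}_{2^n}^{*}$,
\[
C_d(\tau)=-1+W_d(a)=-1+(N(a)-1)\,2^{m},
\]
where $N(a)$ is the number of distinct roots in the unit circle $U=\{x\in\mathbb{F}_{2^n}\mid x\bar x=1\}$ of
\[
g_a(x)=x^{7}+ax^{4}+\bar{a} x^{3}+1 .
\]
As $\deg g_a=7$, a priori $N(a)\in\{0,1,\dots,7\}$, giving the eight candidate values $-1+j\cdot 2^{m}$ with $j=N(a)-1\in\{-1,0,\dots,6\}$; the whole task is therefore to exclude $N(a)=6$ and $N(a)=7$, so that only $j\in\{-1,0,1,2,3,4\}$ survives.

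For the case where $g_a$ is separable I would invoke the machinery of \cite{Helleseth2021}. The polynomial $g_a$ is self-conjugate-reciprocal, so $g_a(x)=0$ if and only if $g_a(1/\bar x)=0$, whence its set $R$ of seven distinct roots in $\overline{\mathbb{F}_2}$ is invariant under $\Pi(x)=1/\bar x$; the fixed points of $\Pi$ are exactly the elements of $U$. Using the identity $S=\sum_{\{u,v\}\subset R,\,u\ne v}uv/(u-v)^{2}=0$ together with $\Tr_m(S)\equiv\binom{8}{2}+t\pmod 2$, where $t$ is the number of $\Pi$-orbits in $R$, one gets that $t$ is even; combined with the fact that the $U$-roots are precisely the singleton $\Pi$-orbits, a count of the orbit sizes forces $N(a)\notin\{4,6,7\}$, exactly as in the even-$m$ case. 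In particular $N(a)\le 5$, and $N(a)\ne 6,7$, whenever $g_a$ is separable.

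It remains to treat the non-separable case. In characteristic two $g_a'(x)=x^{6}+\bar{a} x^{2}=x^{2}(x+\bar{a}^{1/4})^{4}$, and since $g_a(0)=1$ the only possible repeated root of $g_a$ is $r_0:=\bar{a}^{1/4}$. As $\Pi$ preserves root multiplicities, $\Pi(r_0)$ would also be a repeated root, and uniqueness forces $\Pi(r_0)=r_0$, i.e. $r_0\in U$ and hence $a\bar{a}=1$. A short coefficient comparison then yields the factorization
\[
g_a(x)=(x^{4}+\bar{a})(x^{3}+a)=(x+\bar{a}^{1/4})^{4}(x^{3}+a),
\]
so $g_a$ has at most $1+3=4$ distinct roots in $\overline{\mathbb{F}_2}$ and $N(a)\le 4<6$. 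This is also where the parity of $m$ enters: for even $m$ one has $\gcd(3,2^{m}+1)=1$, the map $x\mapsto x^{3}$ is a bijection of $U$, and one gets $N(a)\le 2$, which is why the spectrum is five-valued there; for odd $m$ one has $3\mid 2^{m}+1$, the cubic $x^{3}+a$ can have all three of its roots in $U$, so the value $N(a)=4$ (equivalently $j=3$) is attained, producing the sixth value $-1+3\cdot 2^{m}$.

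Putting the two cases together, $N(a)\in\{0,1,2,3,4,5\}$ for every $a\in\mathbb{F}_{2^n}^{*}$, hence $C_d(\tau)\in\{-1+j\cdot 2^{m}\mid j=-1,0,1,2,3,4\}$, which is the claim. The main obstacle is the separable case: establishing the symmetric-function identity $S=0$ for the degree-$7$ self-conjugate-reciprocal polynomial $g_a$ and the parity formula $\Tr_m(S)\equiv\binom{|R|+1}{2}+t\pmod 2$ is the delicate part, and it is the contribution of \cite{Helleseth2021}; granting these, the derivative computation, the factorization in the non-separable case, and the final bookkeeping are all elementary.
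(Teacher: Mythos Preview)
Your proposal is correct and follows essentially the same approach as the paper, which simply records that the self-conjugate-reciprocal/$\Pi$-orbit machinery of \cite{Helleseth2021} (described there for the even-$m$ five-valued case) carries over verbatim to odd $m$. Your added treatment of the non-separable case---the derivative computation, the forced condition $a\bar a=1$, and the factorization $g_a(x)=(x^4+\bar a)(x^3+a)$ showing $N(a)\le 4$ there, with the parity of $m$ governing whether the cubic factor contributes three $U$-roots---is exactly the elementary bookkeeping needed to complete the argument and to explain why the extra value $j=3$ appears only for odd $m$.
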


\textbf{Open Problem 9.} Let $n=2m$ with odd $m$. Determine the distribution of $C_d(\tau)$ for $d=4(2^m-1)+1$.

We now move on to the decimations with six-valued crosscorrelation in nonbinary cases. As a matter of fact, the following decimation is a union of both binary case and nonbinary cases.

\begin{theorem}\cite[Theorem 4.11]{Helleseth1976} 
	Let $p$ be a prime satisfying $p \equiv 2\mod{3}$ and $n=2m$. Let $d=\frac{1}{3}(p^n-1)+p^i$, where $0\leq i<n$ and $f=\frac{p^n-1}{3}p^i \not 
	\equiv 2 \mod{3}$. Then $C_d(\tau)$ takes on the following values 
		\[
	\begin{array}{rcl}
	-1 & \quad\text{occurs} \quad & \frac{1}{9}(4p^n+2(-1)^{m+1}p^{m}-29)\text{ times},\\
	-1+(-1)^{m+1}p^m & \quad\text{occurs} \quad & \frac{1}{9}(2p^{2m}+2(-1)^{m}p^m-4)\text{ times},\\
	-1+(-1)^{m}p^m & \quad\text{occurs} \quad & \frac{1}{27}(8p^{2m}+2(-1)^{m}p^m-10)\text{ times},\\
	-1+2(-1)^{m+1}p^m & \quad\text{occurs} \quad & \frac{1}{27}(p^{2m}+2(-1)^{m+1}p^m+1)\text{ times},\\
	-1+\frac{1}{3}(p^{2m}+2(-1)^{m}p^m) & \quad\text{occurs} \quad & 1 \text{ time},\\
	-1+\frac{1}{3}(p^{2m}+(-1)^{m+1}p^m) & \quad\text{occurs} \quad & 2 \text{ times}
	\end{array}
	\]
	when $f\equiv 0\mod{3}$; and takes on the following values 
		\[
	\begin{array}{rcl}
	-1 & \quad\text{occurs} \quad & \frac{1}{9}(4p^{2m}+2(-1)^{m+1}p^m-20)\text{ times},\\
	-1+(-1)^{m+1}p^m & \quad\text{occurs} \quad & \frac{1}{9}(2p^{2m}+2(-1)^{m}p^m-4)\text{ times},\\
	-1+(-1)^{m}p^m & \quad\text{occurs} \quad & \frac{1}{27}(8p^{2m}+2(-1)^{m}p^m-28)\text{ times},\\
	-1+2(-1)^{m+1}p^m & \quad\text{occurs} \quad & \frac{1}{27}(p^{2m}+2(-1)^{m+1}p^m-8)\text{ times},\\
	-1+\frac{1}{3}(p^{2m}+2(-1)^{m}p^m) & \quad\text{occurs} \quad & 2 \text{ times},\\
	-1+\frac{1}{3}(p^{2m}+4(-1)^{m+1}p^m) & \quad\text{occurs} \quad & 1 \text{ time}
	\end{array}
	\]
	when $f\equiv 1\mod{3}$.
\end{theorem}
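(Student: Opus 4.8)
The plan is to use the coset‑splitting technique already illustrated for $d=\tfrac12(p^n-1)+p^i$ and for \thref{Th-H-1978}, now over the three cosets of the cubic subgroup of $\Fpn^*$. Since $p\equiv2\pmod3$ and $n=2m$ is even, $3\mid p^n-1$; fix a primitive cube root of unity $\zeta\in\Fpn$, let $\chi(x)=x^{(p^n-1)/3}\in\{1,\zeta,\zeta^2\}$ be the cubic residue symbol, and let $D_0,D_1,D_2$ be the cosets of the group of nonzero cubes, with $\eta_t=\sum_{x\in D_t}\omega^{\Tr(x)}$ the corresponding Gaussian periods. Writing $a=\alpha^\tau$ and using $x^d=\chi(x)x^{p^i}$ together with $\Tr(y^{p^i})=\Tr(y)$, one gets $\Tr(x^d-ax)=\Tr\bigl((\chi(x)^{\epsilon}-a)x\bigr)$ with $\epsilon\equiv p^{n-i}\pmod3$, and after separating the term $x=0$ (which cancels the leading $-1$),
\[
C_d(\tau)=\sum_{j=0}^{2}\ \sum_{x\in D_j}\omega^{\Tr\bigl((\zeta^{j\epsilon}-a)x\bigr)}.
\]
For fixed $j$ the inner sum is $\tfrac{p^n-1}{3}$ when $a=\zeta^{j\epsilon}$ and otherwise is $\eta_{j+\ell_j}$, where $\zeta^{\ell_j}=\chi(\zeta^{j\epsilon}-a)$; a $\tfrac{p^n-1}{3}$‑term occurs precisely for the three exceptional shifts with $a\in\langle\zeta\rangle$.

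The second ingredient is uniform cyclotomy. Because $p\equiv-1\pmod3$ and $2\mid n$ we are in the semiprimitive case, so (by the standard evaluation of order‑$3$ Gaussian periods, cf.\ \cite[Ch.~5]{Lidl1996}) the three periods take only the two values $\eta^{(1)}=\tfrac13\bigl(-1+2(-1)^{m+1}p^m\bigr)$, attained on exactly one coset (the \emph{distinguished} one), and $\eta^{(2)}=\tfrac13\bigl(-1+(-1)^{m}p^m\bigr)$, attained on the other two, the sign being fixed by the parity of $m$ (consistent with $\eta^{(1)}+2\eta^{(2)}=-1$). Substituting: for $a\notin\langle\zeta\rangle$ the value $C_d(\tau)$ depends only on the number $k\in\{0,1,2,3\}$ of the indices $j+\ell_j$ lying on the distinguished coset, giving the four values $-1$, $-1+(-1)^{m}p^m$, $-1+(-1)^{m+1}p^m$, $-1+2(-1)^{m+1}p^m$; for $a\in\langle\zeta\rangle$ one term equals $\tfrac{p^n-1}{3}$ and the remaining two periods produce one of $-1+\tfrac13(p^{2m}+2(-1)^{m}p^m)$, $-1+\tfrac13(p^{2m}+(-1)^{m+1}p^m)$, $-1+\tfrac13(p^{2m}+4(-1)^{m+1}p^m)$. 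Thus $C_d(\tau)$ has at most seven values, and the two tables are exactly what emerges once one shows that precisely one of these last three candidates is absent.

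This is where the hypothesis $f=\tfrac{p^n-1}{3}p^i\not\equiv2\pmod3$ enters; note first that $\gcd(d,p^n-1)=1$ holds for free, since any common prime divisor $\ell\neq3$ would divide both $\tfrac{p^n-1}{3}$ and $p^i$, which is impossible. For each exceptional $a=\zeta^{j_0}$ one has to compute the two labels $\ell_j$ (for the indices $j$ with $\zeta^{j\epsilon}\neq\zeta^{j_0}$) from the known cubic residuacity of the elements $1-\zeta$, $1-\zeta^2$, $\zeta-\zeta^2$ of $\Fpn$, and then determine how many of the $j+\ell_j$ land on the distinguished coset. A finite case analysis should show that this count, as $a$ ranges over $\langle\zeta\rangle$, is the multiset $\{0,1,1\}$ when $f\equiv0\pmod3$ and $\{0,0,2\}$ when $f\equiv1\pmod3$; this is exactly why the multiplicities of the two exceptional correlation values (namely $1$ and $2$) are interchanged between the two cases, and why $f\equiv2$, which yields a third pattern and a strictly larger spectrum, is excluded. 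I expect this bookkeeping — keeping straight $\epsilon\equiv p^{n-i}$, the cubic symbols of the $\zeta^a-\zeta^b$, the location of the distinguished coset, and how they combine into the single invariant $f\bmod3$ — to be the main obstacle; everything else is routine.

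Finally, the two exceptional multiplicities $1$ and $2$ are now known exactly, and the remaining four multiplicities follow by solving the linear system built from the $0$th, $1$st, $2$nd and $3$rd power moments of $C_d(\tau)$ in \thref{Lem2} (equivalently \thref{Lem2-B}), exactly as in the proof sketches of \thref{Th-bin5value-1} and \thref{Th-H-1978}. Carrying this out separately for $f\equiv0$ and $f\equiv1$ gives the two stated distributions.
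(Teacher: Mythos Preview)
Your approach is essentially the paper's: both split $C_d(\tau)+1$ over the three cubic cosets (the paper via the parametrization $x\mapsto\beta^j x^3$ with $\beta=\alpha^{(p^n-1)/3}$, you directly over the $D_j$), evaluate the resulting cubic Gauss sums using the semiprimitive case, analyze the triple $(\zeta^{j\epsilon}-a)_{j=0,1,2}$ to obtain the six possible values in each residue class of $f$, and then determine the four non-exceptional multiplicities from the first three power moments.

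One correction worth noting: your explanation of why $f\equiv 2\pmod 3$ is excluded is off. Since $p^i\equiv(p^i)^{-1}\pmod 3$, one checks that $f\equiv 2\pmod 3$ is \emph{exactly} the condition $3\mid d$; so $f\not\equiv 2$ is what guarantees $\gcd(d,p^n-1)=1$ (your argument only handled primes $\ell\neq 3$), and the case $f\equiv 2$ is excluded not because it would produce a larger spectrum but because $d$ would fail to be an invertible decimation at all.
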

The proof of the above theorem used a similar technique as in \thref{Th-H-1978}. Choose $N=3$, since $(d-p^i)N\equiv 0 \mod{p^n-1}$, it follows that 
\[
C_d(\tau)+1=\frac{1}{3}\sum_{j=0}^{2}\sum_{x\in \Fpn}\omega^{\Tr((\beta^jx^3)^d-\alpha^t (\beta^jx^3))} =\frac{1}{3}\sum_{j=0}^{2}\sum_{x\in \Fpn}\omega^{\Tr(x^3(\beta^{jd}-\alpha^t \beta^j))},
\]where $\alpha$ is a primitive element of $\Fpn$ and $\beta = \alpha^{(p^n-1)/3}$. Denote $A=(-1)^{m+1}p^m$.  It is easily checked that 
\[
\sum_{x\in \Fpn}\omega^{\Tr(ax^3)} = \begin{cases}
p^n & \text{ when } a =0, \\
A & \text{ when } a \in C_0, \\
-\frac{1}{2}A & \text{ when } a\not\in C_0,
\end{cases}
\]where $C_0$ is the set of cubes in $\Fpn^*$. More detailed analysis of the triple $(1-\alpha^t, \beta^{d}-\alpha^t \beta, \beta^{2d}-\alpha^t \beta^2)$ indicates that
$C_d(\tau)$ takes on the following values for $f\equiv 0\mod{3}$:
\[
-1, -1+\frac{1}{2}A, -1-\frac{1}{2}A, -1+A, -1+\frac{1}{3}(p^n-A), -1+\frac{1}{3}(p^n+\frac{1}{2}A)
\] and takes  on the following values for $f\equiv 1\mod{3}$: 
\[
-1, -1+\frac{1}{2}A, -1-\frac{1}{2}A, -1+A, -1+\frac{1}{3}(p^n-A), -1+\frac{1}{3}(p^n+2A).
\] 
The calculation of the distribution of $C_d(\tau)$ relies on the investigation of the first three power moments.

The following decimation is the $p$-ary variant of \thref{Th-H-1978}. It is 
also similar to the three-valued
decimation of the form $d=p^{2k}-p^k+1$ under the condition that $n/\gcd(n,k)$ is odd. As this condition is not satisfied, 
the crosscorrelation has six possible values. 

\begin{theorem}\cite[Theorem 1]{Helleseth2003}
	Let $n=4m$, $p^m\not\equiv 2 \mod{3}$ and let $d=p^{2m}-p^m+1$. Then $C_d(\tau)$ takes on the following values 
	\[
	\begin{array}{rcl}
	-1-2p^{2m} & \quad\text{occurs} \quad &  (p^{4m}-3p^{3m}+3p^{2m}-p^m)/6 \text{ times},\\
	-1-p^{2m} & \quad\text{occurs} \quad & p^{3m}-p^{2m}\text{ times},\\
	-1 & \quad\text{occurs} \quad &  (p^{4m}-p^{3m}+p^{2m}-p^m-4)/2 \text{ times},\\
	-1+p^{2m} & \quad\text{occurs} \quad &  (p^{4m}-p^m)/3 \text{ times},\\
	-1+p^{3m}-p^{2m} & \quad\text{occurs} \quad & p^m\text{ times},\\
	-1+p^{3m} & \quad\text{occurs} \quad & 1 \text{ time}.
	\end{array}
	\]
\end{theorem}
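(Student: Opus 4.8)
The plan is to follow the argument sketched for \thref{Th-H-1978}, replacing $2$ by the prime $p$ throughout. First I would pass to the equivalent decimation $d_1=p^m d$; one checks $d_1\equiv(p^{2m}+1)(p^m-1)+1\pmod{p^n-1}$, so $C_d$ and $C_{d_1}$ have the same spectrum by \thref{Lem2}(iii), and the hypothesis $p^m\not\equiv 2\pmod 3$ is exactly the condition $\gcd(d,p^n-1)=1$, as one sees from $\gcd(p^{2m}-p^m+1,\,p^{4m}-1)\mid 3$. The crucial identity is $(d_1-1)(p^m+1)=p^{4m}-1$. Put $N=p^m+1$, let $C_0=\{x^{N}:x\in\Fpn^{*}\}$ be the subgroup of index $N$, $C_1=\Fpn^{*}\setminus C_0$, and $C_\infty=\{0\}$, and split the summation index in \eqref{Eq-WeiSum} as $t=iN+j$ with $0\le i<(p^n-1)/N$ and $0\le j\le p^m$. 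As in the binary case this gives
\[
C_d(\tau)=-1+\frac{1}{p^m+1}\bigl(p^{4m}\,n_\infty(c)-p^{3m}\,n_0(c)+p^{2m}\,n_1(c)\bigr),
\]
with $c=\alpha^\tau$ and $n_i(c)=\#\{\,j:0\le j\le p^m,\ \alpha^{jd_1}-c\alpha^j\in C_i\,\}$, once one knows the Gauss-sum values $\sum_{x\in\Fpn}\omega^{\Tr(ax^{N})}$, which equal $p^{4m},-p^{3m},p^{2m}$ according as $a\in C_\infty,C_0,C_1$. This evaluation is the one place where $p$-ary (rather than binary) arithmetic really enters the first half: write the sum as $1+N\sum_{v\in C_0}\omega^{\Tr(av)}$ and expand $\mathbf 1_{C_0}$ over the multiplicative characters of order dividing $p^m+1$; since $p^m\equiv -1\pmod{p^m+1}$ each such nontrivial character is semiprimitive, so by the semiprimitive Gauss-sum evaluation together with Davenport--Hasse (lifting from $\mathbb F_{p^{2m}}$) every such Gauss sum over $\Fpn$ equals $-p^{2m}$, and the character sum collapses to the three values above.

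The combinatorial heart is the classification of the admissible triples $(n_\infty,n_0,n_1)$. Writing $\beta=\alpha^{d_1-1}$, a primitive $(p^m+1)$-th root of unity, one has $\alpha^{jd_1}-c\alpha^j=\alpha^j(\beta^j-c)$, so $n_\infty(c)=1$ if $c\in\mu:=\{x:x^{p^m+1}=1\}$ and $n_\infty(c)=0$ otherwise; and, using $\alpha^{j(d_1-1)}=\beta^j$ together with $y^{d_1-1}=\bigl(y^{1+p^{2m}}\bigr)^{p^m-1}$, the membership $\alpha^j(\beta^j-c)\in C_0$ is equivalent to $(\beta^j-c)^{d_1-1}=\beta^{-j}$, which after clearing denominators is $w^{3}-Aw^{2}+A^{p^m}w-1=0$ for $w=\beta^j$, where $A=(c+c^{p^{2m}})+(c^{1+p^{2m}})^{p^m}\in\mathbb F_{p^{2m}}$. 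This self-conjugate-reciprocal cubic has at most three roots, and its root set in $\overline{\mathbb F_p}$ is stable under $w\mapsto w^{-p^m}$, whose fixed points are precisely $\mu$; a case analysis of the (possibly repeated, possibly inseparable) roots relative to $\mu$ — while discarding the spurious root arising when $(\beta^j-c)^{1+p^{2m}}=0$, i.e.\ when $\beta^j\in\{c,\,c^{p^{2m}}\}$ — shows that for $c\notin\mu$ one has $n_0\in\{0,1,2,3\}$, hence $C_d(\tau)=-1+(1-n_0)p^{2m}\in\{-1+p^{2m},\,-1,\,-1-p^{2m},\,-1-2p^{2m}\}$, while for $c\in\mu$ one computes $A=2c+c^{-2}$ and the cubic factors as $(w-c)^2(w-c^{-2})$. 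Here the hypothesis $p^m\not\equiv 2\pmod 3$, i.e.\ $\gcd(3,p^m+1)=1$, forces $c^{-2}\neq c$ unless $c=1$, so among the $p^m+1$ shifts with $c\in\mu$ exactly one (namely $c=1$) gives $n_0=0$ and the other $p^m$ give $n_0=1$; thus $C_d(\tau)=-1+p^{3m}$ occurs once and $C_d(\tau)=-1+p^{3m}-p^{2m}$ occurs $p^m$ times.

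With all six values known, and the multiplicities $1$ and $p^m$ of the two largest ones fixed, the four remaining multiplicities $S_r$ of $-1+(1-r)p^{2m}$ ($r=0,1,2,3$) are determined by a $4\times4$ linear system: the trivial ($0$-th) moment and the first, second, and third power moments of \thref{Lem2}(iv)--(vi) (equivalently \thref{Lem2-B}; here $M_1=p^m$, by a short direct count, since $(u+1)^d-u^d=1$ holds precisely for $u\in\Fpm$ — on $\Fpm$ one has $x^d=x$, and off $\Fpm$ the substitution $v=u^{p^m-1}$ turns the equation into $(v-1)^{p^m+1}=0$, forcing $v=1$ and a contradiction). Its coefficient matrix is a Vandermonde matrix in the four distinct numbers $-1+(1-r)p^{2m}$, hence invertible, and solving it gives the claimed distribution.

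I expect the main obstacle to be exactly this middle step: the complete and correct analysis of how the roots of the self-conjugate-reciprocal cubic are distributed with respect to $\mu$, including the repeated-root and inseparable subcases and the bookkeeping of the spurious norm-zero factor, since this is where one must simultaneously enumerate the possible values and isolate the few shifts producing the two large values, and where the hypothesis $p^m\not\equiv 2\pmod 3$ is genuinely used. The coset splitting above it is standard, the Davenport--Hasse evaluation of the Gauss sum is classical, and the power-moment linear algebra below it is routine once the value set is in hand.
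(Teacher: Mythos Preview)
Your plan is essentially the paper's own argument: pass to $d_1=p^{m}d$, use the $N=p^{m}+1$ coset splitting together with the (semiprimitive) Gauss-sum evaluation to write $C_d(\tau)+1$ in terms of the triple $(n_\infty,n_0,n_1)$, classify the admissible triples (your self-conjugate-reciprocal cubic in $w=\beta^{j}$ makes this step explicit), read off the two large values and their multiplicities from the case $c\in\mu$, and then recover the four remaining multiplicities from the first three power moments using the solution count $b_3$ (equivalently $M_1$) for the third moment. The paper's sketch is terser on both the root analysis and the Gauss-sum evaluation, but the structure and the ingredients are identical.
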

The proof adopted a similar technique as in \thref{Th-H-1978}. Take $N=p^m+1$. Firstly, the exponential sum
\[
\sum_{x\in \mathbb{F}_{p^{4m}}}\omega^{\Tr(ax^{N})} = \begin{cases}
p^{4m} & \text{ when } a \in C_{\infty}, \\
-p^{3m} & \text{ when } a \in C_0, \\
p^{2m} & \text{ when } a\in C_1,
\end{cases}
\]where $C_{\infty}=\{0\}$, $C_0=\{x^{N}\,|\,x \in \mathbb{F}_{p^{4m}}^*\}$ and $C_1 =  \mathbb{F}_{p^{4m}}^* \setminus C_0.
$ 
Denote $d_1=dp^m$. Since  $(d_1-1)N\equiv 0 \mod{p^{4m}-1}$, the crosscorrelation $C_d(\tau)$ can be expressed as
\[
\begin{split}
C_d(\tau) & =-1+\frac{1}{N}\sum_{j=0}^{N-1}\sum_{x\in \mathbb{F}_{p^{4m}}}\omega^{\Tr(x^{p^m+1}(\alpha^{dj}-\alpha^{\tau}\alpha^j))} \\
&=-1+\frac{1}{N}\left( 
p^{4m}n_{\infty}(c) -p^{3m}n_0(c) + p^{2m}n_1(c)
\right),
\end{split}
\]where $n_i(c)$ for $i=0, 1, \infty$ and $c=\alpha^{\tau}$ is defined by $$n_i(c) = \#\{j\,|\, \alpha^{jd_1}-\alpha^{\tau+j} \in C_i, \, 0\leq j <N\}.$$
As in the proof of \thref{Th-H-1978}, the remaining proof relies on the analysis of possible values of $(n_{\infty}(c),n_0(c),n_1(c))$ and 
 the number of solutions of the equations
\[\begin{split}
x_1 + x_2 + 1 &= 0,  \\
x_1^d + x_2^d + 1 &= 0,
\end{split}
\] which is shown to be $p^m$. The first three power moments yield the complete distribution of $C_d(\tau)$.

\section{Conclusion}
The crosscorrelation of m-sequences is a fascinating and challenging research topic of both theoretical and applied interest.
Recent years have witnessed some advances on the topic, meanwhile, many problems still remain open. This chapter provides an updated overview of the topic with the focus on
reviewing known and new techniques in the literature.

\bibliographystyle{plain}
\bibliography{sequence}


\end{document}